\documentclass[11pt]{article}

\usepackage{amsmath}

 \usepackage{natbib}
 
\usepackage{newtxtext}
\usepackage[subscriptcorrection]{newtxmath}

\usepackage[T1]{fontenc}
\usepackage[english]{babel}
\usepackage{graphicx}
\usepackage{dsfont}
\usepackage{amsmath}
\usepackage{xcolor}
\usepackage{colortbl}
\usepackage{amsfonts}
\usepackage{amsmath}
\usepackage{endnotes}						
\usepackage[utf8]{inputenc}
\inputencoding{latin1}
\inputencoding{utf8}
\usepackage{tikz}
\usepackage{tikzscale}
\usetikzlibrary{calc}
\usetikzlibrary{bayesnet}
\usepackage{subcaption}
\usepackage{mathrsfs}
\usepackage{bbm}
\usepackage{setspace} 
\usepackage{textcomp}
\usepackage{pdflscape}
\usepackage{multirow}
\usepackage{multicol}

\usepackage{epsfig}
\usepackage[normalem]{ulem}
\usepackage{wasysym}

\usepackage{amsmath}

\usetikzlibrary{bayesnet}
\usetikzlibrary{shapes,fit,arrows,positioning,quotes}

\tikzstyle{vc} = [circle, draw, text centered,minimum size=14mm]
\tikzstyle{vs} = [draw, text centered,minimum size=12mm]

\tikzstyle{ef} = [draw, -]
\tikzstyle{ed} = [draw,dotted, -]

\tikzstyle{ebf} = [draw, latex-latex]
\tikzstyle{ebd} = [draw,dotted, latex-latex]

\tikzstyle{edf} = [draw, -latex]
\usepackage{bbm}

\newcommand{\R}{\ensuremath{\mathbb{R}}}
\newcommand{\X}{\ensuremath{\mathcal{X}}}

\newcommand{\diag}{\rm{diag}}

\newcommand{\Exp}{\mathds{E}}
\newcommand{\Prob}{\mathds{P}}

\newcommand{\1}{\ensuremath{\mathds{1}_{n_x}}}
\newcommand{\Z}{\mathcal{Z}}

\newcommand{\C}{\mathscr{C}}
\newcommand{\U}{\mathcal{U}}

\newcommand{\E}{\mathcal{E}}
\newcommand{\G}{\mathcal{G}}
\newcommand{\EU}{\mathcal{E}_{\mathcal{U}}}

\newcommand{\GU}{\mathcal{G}_{\mathcal{U}}}
\newcommand{\GB}{\mathcal{G}_{\mathcal{B}}}
\newcommand{\bigCI}{\mathrel{\text{\scalebox{1.07}{$\perp\mkern-10mu\perp$}}}}
\newcommand*\circlee[1]{\tikz[baseline=(char.base)]{
		\node[shape=circle,draw,inner sep=2pt] (char) {#1};}}
\newcommand{\circleee}{\hspace*{-1mm} \circlee{}}

\newcommand{\ci}{\!\perp \! \! \! \perp\!}
\newcommand{\nci}{\mbox{\protect $\: \perp \hspace{-1.75ex}\perp $} \hspace{-1.3mm}/ \hspace{1.5mm}}

\graphicspath{{./Figures/}}

\usepackage[plain,noend]{algorithm2e}

\def\*#1{\bm{#1}} 

\def\@fnsymbol#1{\ensuremath{\ifcase#1\or *\or \dagger\or \ddagger\or
   \mathsection\or \mathparagraph\or \|\or **\or \dagger\dagger
   \or \ddagger\ddagger \else\@ctrerr\fi}}
\newcommand{\ssymbol}[1]{^{\@fnsymbol{#1}}}

\let\OLDthebibliography\thebibliography
\renewcommand\thebibliography[1]{
  \OLDthebibliography{#1}
  \setlength{\parskip}{0pt}
  \setlength{\itemsep}{3pt plus 0.3ex}
}

\def\*#1{\bm{#1}} 

\usepackage[left=2.3cm, right=2.3cm, top=3cm]{geometry}

\newtheorem{definition}{Definition}
\newtheorem{theorem}{Theorem}
\newtheorem{lemma}{Lemma}
\newtheorem{corollary}{Corollary}
\newtheorem{proposition}{Proposition}
\newtheorem{proof}{Proof}
\newtheorem{example}{Example}

\newcommand{\ml}{\textcolor{black}}
\newcommand{\aap}{\textcolor{black}}

\usepackage{authblk}
\title{\bf  {Profile Graphical Models}}
\author[1,2,*]{Alejandra Avalos-Pacheco}
\author[3]{Monia Lupparelli}
\author[3]{Francesco C. Stingo}
\affil[1]{Institute of Applied Statistics, Johannes Kepler University Linz}
\affil[2]{Harvard-MIT Center for Regulatory Science, Harvard Medical School}
\affil[3]{Depart.\ of Statistics, Computer Science, Applications ``G. Parenti'', University of Florence}
\affil[*]{\textit {\textcolor{blue}{alejandra.avalos\textunderscore pacheco@jku.at}}}
\date{October 2024}

\begin{document}




\def\spacingset#1{\renewcommand{\baselinestretch}%
{#1}\small\normalsize} \spacingset{1}

\setcounter{Maxaffil}{0}
\renewcommand\Affilfont{\itshape\small}

\spacingset{1.42} 

\maketitle
\begin{abstract}
We introduce a novel class of graphical models, termed profile graphical models, that represent, within a single graph, how an external factor influences the dependence structure of a multivariate set of variables. 
This class is quite general and includes multiple graphs and chain graphs as special cases. 
Profile graphical models capture the conditional distributions of a multivariate random vector given different levels of a risk factor, 
and learn how the conditional independence structure among variables may vary across  these risk profiles; we formally define this family of models and establish their corresponding Markov properties. 
We derive key structural and probabilistic properties that underpin a more powerful inferential framework than existing approaches, underscoring that our contribution extends beyond a novel graphical representation.
Furthermore, we show that the resulting profile undirected graphical models are independence-compatible with two-block LWF chain graph models.
We then develop a Bayesian approach for Gaussian undirected profile graphical models based on continuous spike-and-slab priors to learn shared sparsity structures across different levels of the risk factor. 
We also design a fast EM algorithm for efficient inference. 
Inferential properties are explored through simulation studies, including the comparison with competing methods. 
The practical utility of this class of models is demonstrated through the analysis of protein network data from various subtypes of acute myeloid leukemia.
Our results show a more parsimonious network and greater patient heterogeneity than its competitors, highlighting its enhanced ability to capture subject-specific differences.
\end{abstract}

\noindent %
{\it Keywords: Undirected graphs; Chain graphs; Context-specific independence; Multiple graphs; Differential graphs.}

\spacingset{1.45}

\section{Introduction}

Multivariate regression models can be represented by chain graphs \citep{lauritzen1989graphical,frydenberg1990,andersson2001alternative, wersad2012}, which capture the conditional independence structure between multiple response and explanatory variables. In their simplest form, responses and predictors form two distinct chain components, with missing edges corresponding to conditional independencies under suitable Markov properties. Among the various types of chain graphs \citep{drton2009discrete}, we focus on the LWF chain graph \citep{frydenberg1990}, which defines a smooth statistical model and provides a flexible framework for modeling dependencies among outcomes given covariates.

Model selection for chain graphs has attracted significant attention, with recent developments in penalized likelihood \citep{rothman2010sparse,yin2011sparse,lee2012simultaneous}, two-step procedures \citep{cai2012covariate,chen2016asymptotically}, and Bayesian approaches \citep{bhadra2013joint,consonni2017objective}. However, these models remain limited when the goal is to characterize how an explanatory variable affects the joint dependence structure among outcomes, rather than each outcome individually. The conditional independence models encoded by chain graphs provide information only through missing edges, leaving unexplored how existing dependencies change with external factors. This issue is particularly relevant in situations where associations between variables may reverse or shift across different conditions, as in the well-known effect reversal \citep{coxwer2003} and Simpson’s paradox \citep{simpson1951}.

An alternative line of work addresses this issue indirectly by modeling subgroups or subpopulations through multiple graphical models \citep{guo2011joint,danaher2014joint,peterson2015bayesian} or context-specific independencies \citep{hojsgaard2003split,corander2003labelled,nyman2014stratified,nyman2016context}. Yet, these approaches typically do not incorporate external factors directly into the model and often restrict context-specific variations to adjacent vertices.

Building on these ideas, we propose a novel class of graphical models—profile undirected graphs—which preserve the interpretability of chain graphs while extending them to model how dependence structures among responses vary with an external factor. Our contributions are fourfold: (i) We introduce profile undirected graphs as a general framework for modeling all profile outcome distributions, i.e., conditional distributions of responses given any level of a risk factor. (ii) We derive the corresponding Markov properties based on a unified connected set rule. (iii) We establish formal compatibility, in terms of independence models, between the proposed profile graphs and specific chain graph structures. (iv) We develop parameterizations for Gaussian models and incorporate continuous spike-and-slab priors to learn shared sparsity patterns across levels of the external factor.
For efficient inference, we implement a fast EM algorithm and demonstrate the usefulness of our proposed methodology through extensive simulations including comparison with competing methods and a cancer genomics application.
We provide a profile graphical model of protein networks that evolve across disease subtypes, thereby uncovering subtype-specific dependency patterns invisible to standard chain graph or multiple graph analysis.


\section{Theoretical framework basic setup}\label{sec.setup}

Let $G=(V,E)$ be a graph defined by a set of vertices $a \in V$ and a set of edges $(a,b) \in E$ joining pairs of vertices $a,b \in V$, and let $Y_{V}=(Y_{a})_{a \in V}$ be a random vector of variables indexed by the finite set $V$ with $|p=V|$. A graph, associated to a random vector $Y_V$, is generally used to represent conditional independence structures under suitable Markov properties. Typically, missing edges in $G$ correspond to conditional independencies for the joint distribution of $Y_V$. 
Also, let us consider the random categorical variable $X$ with strictly positive probability distribution representing an external factor with respect to (in the sequel, wrt) the random vector $Y_V$ of outcome/response variables. The variable $X$ takes level $x \in \X$, with $q=|\X|$. Our interest lies in the effect of $X$ on the joint independence structure of $Y_V$ and, in particular, in exploring via a graphical modelling approach how this structure may change under different levels $x \in \X$, which we call \textit{profiles}.
Chain graphs are generally used to model the effects of background variables on joint response variables. 
In the simplest form, a two-block chain graph $C=[\{C_1,C_2\},E]$ is defined by a set of vertices partitioned in chain components $C_1$ and $C_2$, and a set of edges $E$. 
Depending on the set of Markov properties specified for the chain graph we may have different independence models for the joint distribution of random vectors $(Y_{C_{t}})_{t \in \{1,2\}}$, associated to the chain components $\{C_t\}_{t \in \{1,2\}}$. 
In particular we focus on the class of LWF chain graph models \citep{frydenberg1990}; these models correspond to multivariate regression models with suitable independence constraints corresponding by missing edges, both within and between chain components.
Any pair of vertices $a,b \in C_t$ within the same chain component with $t=1,2$ and $a \neq b$, can be joined by undirected edges; vertices between chain components, $a \in C_1$ and $b \in C_2$, are joined by directed edges preserving the same direction such that cycles are not allowed. For our purpose, the set of vertices $C_1$ and $C_2$ are associated, respectively, to the random vector $Y_V$ of response variables and to the background variable $X$, so that  $C_1=V$ and $|C_2|=1$. In principle, the chain component $C_2$ may include a multiple categorical random vector; in this case  $X$   represents a random variable with state space given by the combination of a multiple factor levels.
For any  $x \in \X$, let $Y_V(x)$ be a \textit{x-profile outcome vector}, that is  the  random vector $Y_V|\{X=x\}$ conditioned on a specific profile $x$ of the factor $X$, and let $P(Y_V(x))$ 
be the corresponding \textit{x-profile  probability distribution} of $Y_V(x)$, that is the  conditional probability distribution $P(Y_V|\{X=x\})$. Note that $P(\cdot)$ can be a probability density function or a probability mass function, depending on the continuous or discrete nature of the multivariate random variable $Y_V(x)$, with $x \in \X$.  For sake of simplicity, in the sequel we omit the prefix $x$ to denote both the profile outcome vector and the profile outcome distribution.
Then, for a given  multivariate random vector $Y_V$ and an external factor $X$, let $Y_{V|\X}=[Y_V(x)]_{x \in \X}$ be the finite set of all profile outcome vectors and let $P(Y_{V|\X})=[P(Y_V(x))]_{x \in \X}$ be the corresponding set of all profile outcome distributions. For any $A \subseteq V$, $Y_{A|\X}=[Y_A(x)]_{x \in \X}$ is set of marginal profile outcome vectors with corresponding profile probability distributions  $P(Y_{A|\X})=[P(Y_A(x))]_{x \in \X}$. 
A definition of profile-independence follows.
\begin{definition}
Given a graph $G=(V,E)$ and a partition $A,B,C \subseteq V$, the \textit{profile conditional independence} $Y_{A}(x) \ci  Y_{B}(x)|Y_C(x)$ and the \textit{profile marginal independence} $Y_{A}(x) \ci  Y_{B}(x)$ correspond, respectively, to the factorizations
\begin{eqnarray}\label{eq:x-profile.cond.ind}
P[Y_{A}(x),Y_{B}(x)|Y_{C}(x)]&=&P[Y_{A}(x)|Y_{C}(x)]\times P[Y_{B}(x)|Y_{C}(x)],\\
\label{eq:x-profile.marg.ind}
P[Y_{A}(x),Y_{B}(x)]&=&P[Y_{A}(x)]\times P[Y_{B}(x)],
\end{eqnarray}
of the joint profile distribution $Y_V(x)$, for any $x \in \X$. 
\end{definition}
The following lemma holds by definition of conditional independence.
\begin{lemma}
    If the profile independence statements in Equations \eqref{eq:x-profile.cond.ind} and \eqref{eq:x-profile.marg.ind} hold for any level $x \in \X$, then these equations imply that $Y_A \ci Y_B|\{Y_C,X\}$ and $Y_A \ci Y_B|X$, respectively.
\end{lemma}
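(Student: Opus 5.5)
The plan is to unpack the definition of the $x$-profile distribution and reassemble the conditional independence given $X$ directly from the factorizations. By construction, $P(Y_V(x))$ is the conditional distribution $P(Y_V\mid X=x)$. Consequently, for any $A,B,C$, the profile marginal and conditional distributions are the conditional distributions of the corresponding subvectors given $X=x$. Concretely,
\[
P[Y_A(x),Y_B(x)\mid Y_C(x)] = P(Y_A,Y_B\mid Y_C, X=x),
\]
and similarly for the one-block factors $P[Y_A(x)\mid Y_C(x)]$ and $P[Y_B(x)\mid Y_C(x)]$. This identification only requires that marginalizing and conditioning within $P(\cdot\mid X=x)$ coincide with marginalizing and conditioning the joint law of $(Y_V,X)$ and then fixing $X=x$. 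That holds because $P(X=x)>0$ for every $x\in\X$, which is the strict positivity assumed for $X$.

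For the conditional statement, I substitute these identities into Equation \eqref{eq:x-profile.cond.ind}. This gives
\[
P(Y_A,Y_B\mid Y_C,X=x)=P(Y_A\mid Y_C,X=x)\,P(Y_B\mid Y_C,X=x)
\]
for every level $x\in\X$ and (almost) every value of $Y_C$. Since $X$ is categorical with finite support, every value of the conditioning pair $(Y_C,X)$ is covered by ranging over $x\in\X$. The display is then precisely the factorization that defines $Y_A\ci Y_B\mid\{Y_C,X\}$. Equivalently, I can multiply by $P(Y_C\mid X=x)P(X=x)$ and obtain the joint factorization
\[
P(Y_A,Y_B,Y_C,X)\,P(Y_C,X)=P(Y_A,Y_C,X)\,P(Y_B,Y_C,X).
\]
The marginal statement is handled identically, taking $C=\emptyset$ in Equation \eqref{eq:x-profile.marg.ind}. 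This yields $P(Y_A,Y_B\mid X=x)=P(Y_A\mid X=x)P(Y_B\mid X=x)$ for all $x$, that is, $Y_A\ci Y_B\mid X$.

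The only delicate step is the identification of the profile conditionals with the conditionals given $(Y_C,X)$. For densities, conditional distributions are defined only up to null sets, so I would state the equalities almost everywhere with respect to $P(Y_C\mid X=x)$. Because $\X$ is finite and each level has positive mass, a finite union of null sets is null, and the equalities then hold almost everywhere with respect to the joint law of $(Y_C,X)$. With that observation the argument is complete.
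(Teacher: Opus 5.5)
Your proposal is correct and takes the same route as the paper, which gives no separate proof and only remarks that the lemma holds by the definition of conditional independence. Since $P(Y_V(x))=P(Y_V\mid X=x)$ with $P(X=x)>0$, requiring \eqref{eq:x-profile.cond.ind} and \eqref{eq:x-profile.marg.ind} for every $x\in\X$ is exactly the defining factorization of $Y_A\ci Y_B\mid\{Y_C,X\}$ and of $Y_A\ci Y_B\mid X$. Your care with null sets is a sound addition. The only slip is that clearing denominators to reach $P(Y_A,Y_B,Y_C,X)\,P(Y_C,X)=P(Y_A,Y_C,X)\,P(Y_B,Y_C,X)$ requires multiplying by $P(Y_C,X)^2$, not by $P(Y_C,X)$ once.
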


Finally, let us consider a collection of  \textit{multiple graphs} $G_{V|\X}=\{G(x)=(V,E(x))\}_{x \in \X}$ associated to the profile outcome distributions $P(Y_{V|\X})$.
Under suitable Markov properties, any graph $G(x)$  represents an independence model for the profile outcome vector $Y_V(x)$, for any $x \in \X$. In particular, missing edges wrt $G(x)$ correspond to profile conditional independencies for the joint distribution of $Y_V(x)$, with $x \in \X$. Graphs $G(x) \in G_{V|\X}$ may have different skeletons.

We remark that chain graph models do not allow to explore how the independence structure of $Y_V$  may considerably vary for any profile $x \in \X$. Multiple graphs do not allow to model the effect of $X$ on each outcome $Y_a \in Y_V$.
In essence, the idea is to provide a single graph able to embed, at the same time, information about the profile independence structure for any $Y_V(x) \in Y_{V|\X}$ and about the conditional independence between $X$ and any outcome $Y_a \in Y_V$. In the following sections, we derive the properties of this new class of graphical models that fill the gap between the class of chain graphs and the one of multiple graphs and highlight the connection between these two classes. 
We also exploit our results to define a more powerful, with respect to state-of-the art approaches, inferential procedure. 

\section{Profile undirected graphical models}\label{sec.profile}
\subsection{Profile undirected graphs}

We introduce the class of undirected profile graphs. A profile undirected graph  $\GU=(V,\E)$ is defined by the set $V$ of vertices and a set of $\Z$-labelled edges $\E$ which are labelled according to a subset $\Z \subseteq \X$. Let $(a,b)^{\Z}$ be the generic element of $\E$ associated to any pair $a,b \in V$, where the presence or absence of the edge between $a$ and $b$ is determined by the subset $\Z$ of the state space $\X$. For each pair $a,b \in V$, the corresponding edge $(a,b)^{\Z} \in \E$ will belong to one of the following three categories: (i) if $\Z = \X$, vertices $a$ and $b$ are not joined by any edge, (ii) if $\Z$ is a nonempty proper subset of $\X$, $\Z \subset \X$ and $\Z \neq \emptyset$, vertices $a$ and $b$ are joined by a dotted $\Z$-labelled edge; (iii) if $\Z = \emptyset$, vertices $a$ and $b$ are joined by a full edge and, for sake of simplicity, the $\emptyset$-label is not displayed in the graph.
Under suitable Markov properties, the profile graph $\GU$ provides an independence model for the joint distributions of a random vector $Y_{V|\X}$ of profile outcomes. In particular, a missing edge in $\GU$ corresponds to a profile conditional independence for each profile $x \in \X$. 
A  $\Z$-labelled dotted  edge in $\GU$ corresponds to profile conditional independencies holding only for the profiles $x \in  \Z$, with $\Z \subset \X$ and $\Z \neq \emptyset$.

Further technical definitions are given. For any couple of vertices $a,b \in V$, we say that $b$ is an $x$-\textit{neighbour} of $a$ and \textit{vice versa}, if they are joined by a  $\Z$-labelled  edge  such that  $x \notin \Z$, with $\Z \subset \X$.   Let $nb_x(a)$ be the set of all $x$-neighbours of $a$, with $a \in V$ and  $x \in \X$.  For any pair $a,b \in V$ and $x \in \X$, an $x$-$path$ between $a$ and $b$ is given by a sequence of $(a,b)^{\Z}$ edges, for any $\Z \subset \X$, such that $x \notin \Z$ for all edges in the sequence.  Given any nonempty subset $C$ of $V$, $C$  is said to be $x$-\textit{connected} if any pair $a,b \in C$ is joined by a $x$-path, with $x \in \X$.  
Any nonempty subset $D$ of $V$ is said to be $x$-\textit{disconnected} if it is not $x$-connected, with $x \in \X$ and let $K_1,\dots,K_r$ be the $x$-connected components of $D$.
For any triple $A,B,C$ of disjoint subsets of $V$ and $x \in \X$, we say that $C$ $x$-\textit{separates} $A$ from $B$ if every $x$-path from any vertex $a \in A$ to any vertex $b \in B$ intersects $C$.  Technical $x$-definitions above can be simply extended to $\Z$-definitions for any subset $\Z$ of $\X$ if they hold for all $x \in \Z$.

\begin{figure}
	\begin{center}
		\begin{tikzpicture}[node distance = {1.2cm and 1.2cm}, v/.style = {draw, circle,minimum size=0.7cm},,mylabel/.style={thin,  align=center,fill=white,bend right}]
		
		\node at (1,-3.2) {$\G_U$};
		
		\node (a) [v] {a};
		\node (b) [v, below = of a] {b};
		\node (c) [v, right = of a] {c};
		\node (d) [v, below = of c] {d};
		
		\draw (b) to node {} (d);
		
		\draw[ed] (a) to node[mylabel]{\tiny 2}  (b);
		\draw[ed] (b) to node [sloped] [mylabel]{\tiny 1,2} (c);
		\draw[ed] (a) to node [sloped] [mylabel]{\tiny 0} (c);

		\end{tikzpicture}
		\hspace*{1.3cm} 
		\begin{tikzpicture} [node distance = {1.2cm and 1.2cm}, v/.style = {draw, circle,minimum size=0.7cm},mylabel/.style={thin,  align=center,fill=white,bend right}]
		
		\node at (1,-3.2) {$U(0)$};

		\node (a) [v] {a};
		\node (b) [v, below = of a] {b};
		\node (c) [v, right = of a] {c};
		\node (d) [v, below = of c] {d};
		
		\draw (a) to node {} (b);
		\draw (b) to node {} (c);
		\draw (b) to node {} (d);

		\end{tikzpicture} \hspace*{0.4cm}
		\begin{tikzpicture} [node distance = {1.2cm and 1.2cm}, v/.style = {draw, circle,minimum size=0.7cm}]
		
		\node at (1,-3.2) {$U(1)$};

		\node (a) [v] {a};
		\node (b) [v, below = of a] {b};
		\node (c) [v, right = of a] {c};
		\node (d) [v, below = of c] {d};
		
		\draw (a) to node {} (b);
		\draw (a) to node {} (c);
		\draw (b) to node {} (d);
		
		\end{tikzpicture} \hspace*{0.4cm}
		\begin{tikzpicture} [node distance = {1.2cm and 1.2cm}, v/.style = {draw, circle,minimum size=0.7cm}]
		
		\node at (1,-3.2) {$U(2)$};

		\node (a) [v] {a};
		\node (b) [v, below = of a] {b};
		\node (c) [v, right = of a] {c};
		\node (d) [v, below = of c] {d};
		
		\draw (a) to node {} (c);
		\draw (b) to node {} (d);
		
		\end{tikzpicture}
	\end{center}
	
	\caption{Given $V=\{a,b,c,d\}$, $\GU$ is a  profile undirected graph for the profile outcome vectors $Y_{V|\X}=[(Y_V(x))_{x \in \X}]$ with $\X=\{0,1,2\}$. Any $U(x)$ is the induced undirected graph for the profile outcome vector $Y_V(x)$, with $x \in \X$.}
	
	\label{fig:fig1} 
	
\end{figure}
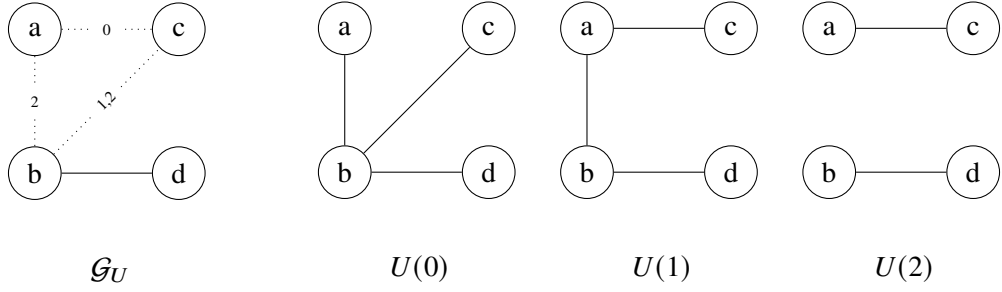


\begin{example} 
	Consider the profile undirected graph $\mathcal{G}_U$ in the left panel of Figure \ref{fig:fig1}.  Vertices $a$ and $c$ are both $\{1,2\}$-neighbours, because they are joined by a dotted edge with label $\Z=\{0\}$ that does not contain neither  $1$ or $2$. Vertices $b$ and $d$ are $\X$-neighbours because they are joined by a full edge.  The sequence of edges $\{(a,c)^{\{0\}},(a,b)^{\{2\}},(b,d)^{\emptyset}\}$ is a $\{1\}$-path, since $1$ is not included in any label of the edges in the sequence. The same sequence is not a $\{2\}$-path since the label of the couple $(a,b)$  contains $2$. The set $V$ is $\{1\}$-connected, because every pair of vertices in $V$ are joined by  a $\{1\}$-path. The same set is $\{2\}$-disconnected with $\{2\}$-connected components $\{a,c\}$ and $\{b,d\}$, because does not exist a $\{2\}$-path between $a$ and $b$. 
	Vertices $c$ and $d$ are $\{1\}$-separated by $a$ because the only $\{1\}$-path $\{(a,c)^{\{0\}},(a,b)^{\{2\}},(b,d)\}$ between $c$ and $d$ intersects $a$; vertex $a$ does not $\{0\}$-separates $c$ and $d$ because there exists the $\{0\}$-path $\{(b,c)^{\{1,2\}},(b,d)^{\emptyset}\}$ between them that does not intersects $a$.
\end{example}

\subsection{Profile undirected Markov properties}
\label{sec.profile-und}

In this section, we first define Markov properties for profile undirected graphs and then derive a few results that connect them.

\begin{definition}
The probability distributions $P[Y_{V|\X}]$ of the profile outcome vectors $Y_{V|\X}$ satisfy the \textit{profile undirected Pairwise Markov Property} ($\U$-PMP)  wrt the graph $\GU=(V,\E_{\U})$ if, for any  $(a,b)^{\Z} \in \EU$ with $\Z \subseteq \X$, 
\begin{equation}
Y_a(x) \bigCI Y_b(x) | Y_{V\setminus \{a,b\}}(x), \qquad x \in \Z.
\end{equation}
\end{definition}
\begin{definition}
The probability distributions $P[Y_{V|\X}]$ of the profile outcome vectors $Y_{V|\X}$ satisfy the \textit{profile undirected Global Markov Property} ($\U$-GMP)  wrt the graph $\GU=(V,\E_{\U})$ if, 
for any triple $A,B,C$ of disjoint subsets of $V$ such that $C$ $x$-separates $A$ from $B$ in $\GU$, 
\begin{equation}
Y_{A}(x) \bigCI  Y_{B}(x)|Y_{C}(x), \qquad x \in \X. 
\end{equation} 
\end{definition}
\begin{definition}
The probability distributions $P[Y_{V|\X}]$ of the profile outcome vectors $Y_{V|\X}$ satisfy the \textit{profile undirected Connected Set Markov Property} ($\U$-CSMP)  wrt the graph $\GU=(V,\E_{\U})$ if, for any $x$-disconnected set $D$ of $V$, with $K_1,\dots,K_r$ $x$-connected components of $D$,
\begin{equation}
Y_{K_1}(x) \bigCI \dotso \bigCI Y_{K_r}(x) | Y_{V\setminus D}(x), \qquad x \in \X.
\end{equation}
\end{definition}
\begin{example}
	
	Consider the left panel including the graph $\GU$ in Figure \ref{fig:fig1}. $P[Y_{V|\X}]$ satisfy the $\U$-PMP wrt $\GU$ if $Y_{b}(x) \bigCI  Y_{c}(x)|\{Y_{a}(x),Y_{d}(x)\}$ for  $x \in \{1,2\}$, since  $(b,c)^{\{1,2\}} \in \EU$. $P[Y_{V|\X}]$ satisfy the $\U$-GMP wrt $\GU$ if $Y_{c}(1) \bigCI  \{Y_{b}(1),Y_{d}(1)\}|Y_{a}(1)$  because $a$ $\{1\}$-separates $c$ from $\{b,d\}$. Consider the subset $D=\{a,b,c\}$ of $V$; $P[Y_{V|\X}]$ satisfy the $\U$-CSMP wrt $\GU$ if $\{Y_{a}(2),Y_{c}(2)\} \bigCI  Y_{b}(2)|Y_{d}(2)$ because $D$ is $\{2\}$-disconnected set with two $\{2\}$-connected components $\{a,c\}$ and $b$.
\end{example}

We prove that all the independence statements encoded in a profile undirected graph under the global Markov property can be derived by applying the connected set rule {that gives insight on the connectivity and on the path setting of the graph across profiles}.

\begin{theorem}\label{t1}
	Let $\GU=(V,\EU)$ be a profile undirected graph model associated to the profile outcome vectors $Y_{V|\X}$ with probability distributions $P[Y_{V|\X}]$. The $\U$-GMP is satisfied if and only if the $\U$-CSMP is satisfied wrt $\GU$. 
\end{theorem}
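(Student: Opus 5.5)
The plan is to reduce the statement to the classical equivalence, for ordinary undirected graphs, between the global Markov property and the connected set Markov property. For a fixed profile $x \in \X$ I would define the \emph{induced} undirected graph $U(x)=(V,E(x))$ of Figure \ref{fig:fig1}: $(a,b)\in E(x)$ if and only if $a$ and $b$ are joined in $\GU$ by a $\Z$-labelled edge with $x\notin\Z$. With this definition the $x$-notions coincide with the usual graph notions in $U(x)$. An $x$-path in $\GU$ is exactly a path in $U(x)$, $x$-neighbours are neighbours in $U(x)$, and a set is $x$-connected in $\GU$ iff it is connected in $U(x)$; its $x$-connected components are its connected components in $U(x)$. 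Likewise, $C$ $x$-separates $A$ from $B$ in $\GU$ iff $C$ separates $A$ from $B$ in $U(x)$. Each of these is an immediate unwinding of the definitions.

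Next I would observe that both profile properties decompose profile by profile. By definition, $P[Y_{V|\X}]$ satisfies the $\U$-GMP wrt $\GU$ iff, for every $x\in\X$, $P(Y_V(x))$ satisfies the ordinary global Markov property wrt $U(x)$. Similarly, it satisfies the $\U$-CSMP iff, for every $x$, $P(Y_V(x))$ satisfies the ordinary connected set Markov property wrt $U(x)$. The theorem then follows once we know that, for a single distribution and a single undirected graph, GMP $\Leftrightarrow$ CSMP. This is the classical result of Frydenberg (1990, the connected set rule), which I would either cite or reprove briefly as follows.

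For GMP $\Rightarrow$ CSMP, take a disconnected set $D$ with components $K_1,\dots,K_r$. For each $i$, $V\setminus D$ separates $K_i$ from $\bigcup_{j\neq i}K_j$, since any path leaving $K_i$ within $D$ would merge components. The GMP then gives $Y_{K_i}\ci Y_{D\setminus K_i}\mid Y_{V\setminus D}$ for every $i$. Mutual independence follows by iterating with the contraction and decomposition properties, which hold for any distribution.

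For CSMP $\Rightarrow$ GMP, suppose $C$ separates $A$ from $B$. Take $D=V\setminus C$; it is disconnected unless $A$ or $B$ is empty, a trivial case. Group its components into $\tilde A$, the union of the components meeting $A$, and the rest, which contain $B$. The CSMP gives $Y_{\tilde A}\ci Y_{D\setminus\tilde A}\mid Y_C$, and decomposition yields $Y_A\ci Y_B\mid Y_C$. The step $A\subseteq\tilde A$ and $B\cap\tilde A=\emptyset$ is exactly where separation is used.

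The main obstacle is not analytic. It is making the translation of the $x$-definitions into $U(x)$ airtight, in particular checking that the edge-label convention is handled consistently. A missing edge has $\Z=\X$ and therefore contains every $x$; a full edge has $\Z=\emptyset$; and the definition of an $x$-path only quantifies over $\Z\subset\X$. Once this bookkeeping is fixed, the argument needs no positivity assumption or intersection property, because both directions use only the semigraphoid axioms.
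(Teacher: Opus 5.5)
Your proof is correct, but it is organised differently from the paper's. The paper never passes through the induced graphs $U(x)$; those appear only later, in Definition~\ref{def.multUnd}, and are used for Proposition~\ref{p2}. Instead it argues directly with $x$-notions in $\GU$.

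\begin{itemize}
\item For CSMP $\Rightarrow$ GMP, it only notes that $V\setminus(K_i\cup K_j)$ $x$-separates two components $K_i,K_j$ of a disconnected set.
\item For GMP $\Rightarrow$ CSMP, it starts from an $x$-connected set $C$. It uses $S_C=nb_x(C)\setminus C$ as a separator between $C$ and $V\setminus nb_x(C)$, and then splits the latter into its components.
\end{itemize}

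That sketch treats only disconnected sets of the special form $C\cup(V\setminus nb_x(C))$, conditioned on the boundary $S_C$. Its separator remark also does not, by itself, derive an arbitrary $Y_A(x)\ci Y_B(x)\mid Y_C(x)$ from the CSMP.

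Your route covers both directions in full generality:
\begin{itemize}
\item For an \emph{arbitrary} disconnected $D$, the set $V\setminus D$ separates each component from the rest of $D$.
\item Given a separation of $A$ from $B$ by $C$, you take $D=V\setminus C$, group the components meeting $A$, and apply decomposition.
\end{itemize}
It also shows the theorem is just the classical equivalence applied profile by profile, and that only decomposition is used (no positivity or intersection). The paper's direct formulation saves introducing auxiliary graphs, but at the cost of completeness.

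One point should be stated explicitly rather than filed under ``immediate unwinding of the definitions.'' ``$D$ is $x$-connected'' must mean that the subgraph of $U(x)$ induced by $D$ is connected, i.e.\ the joining $x$-paths stay inside $D$. The paper's wording (``any pair $a,b\in C$ is joined by a $x$-path'') does not require this, and under the literal reading the equivalence fails. For example, suppose $U(x)$ has only the edges $(a,b)$ and $(b,c)$. Then every nonempty subset of $V$ is $x$-connected, so the CSMP is vacuous, while the GMP gives $Y_a(x)\ci Y_c(x)\mid Y_b(x)$. Your CSMP $\Rightarrow$ GMP step needs the induced reading, since it uses that a component of $V\setminus C$ meeting both $A$ and $B$ yields a path avoiding $C$. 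So this, more than the $\Z=\X$ versus $\Z=\emptyset$ label convention, is the bookkeeping that actually matters.
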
 
The proof of Theorem 1 is given in the Supplementary Material, along with all other proofs. The local Markov property for profile undirected graph is also included in the Supplementary Material. Given a profile undirected graph $\GU=(V,\EU)$ for the profile outcome vectors $Y_{V|\X}$,  the corresponding class  of multiple undirected graphs associated to each random vector $Y_V(x) \in Y_{V|\X}$ can be defined.

\begin{definition}\label{def.multUnd}
	Given a profile undirected graph $\GU=(V,\EU)$ for the profile outcome vectors $Y_{V|\X}$, let $ U_{V|\X}=\{U(x)=(V, E_U(x))\}_{x \in \X}$ be the induced class of multiple undirected graphs, where, for any $U(x) \in U_{V|\X}$, the couple $a,b \in V$ is joined by an undirected edge if $x \notin \Z$ in the corresponding edge $(a,b)^{\Z} \in \EU$, with $\Z \subseteq \X$.
\end{definition}

Then, a missing edge in $\GU$ corresponds to a missing edge in $U(x)$, for any $x \in \X$; a $\Z$-labelled dotted edge in $\GU$ corresponds to a missing edge in $U(x)$ if $x \in \Z$, and to a full edge in $U(x)$ if $x \notin \Z$; a full edge in $\GU$ corresponds to a full edge in $U(x)$, for any $x \in \X$. 

\begin{example}
	Consider Figure \ref{fig:fig1}. Given the profile undirected graph $\GU$, let $U_{V|\X}=\{U(0),U(1),U(2)\}$ be the induced class of multiple undirected graphs. The couple $a,d$ is disjoined in $\GU$ and in any $U(x) \in U_{V|\X}$. The couple $b,c$ is joined by a $\{1,2\}$-labelled dotted edge in $\GU$ then is joined by a full edge in $U(0)$ and is disjoined in $U(1),U(2)$. The couple $b,d$ is joined by a full edge in $\GU$ and in any $U(x) \in U_{V|\X}$. 
\end{example}


\noindent Pairwise, local, and global Markov property of probability distributions associated to undirected graphs are well known \citep{lauritzen1996graphical}. The following corollary, derived directly from Theorem \ref{t1}, shows that the full set of conditional independencies implied by the global Markov property for any undirected graph can be also derived by applying the connected set rule.
\begin{corollary}\label{c1} Given an undirected graph model $U(x)=(V,E(x))$ associated to the profile outcome vectors $Y_{V|\X}$, the probability distributions $P[Y_{V}(x)]$ satisfies  the global Markov property wrt $U(x)$ if and only if the connected set Markov property is satisfied for every $x$-disconnected set $D \subseteq V$, with $x \in \X$. 
\end{corollary}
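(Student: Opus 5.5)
The plan is to obtain Corollary \ref{c1} as a specialization of Theorem \ref{t1}, by encoding the single undirected graph $U(x)$ as a profile undirected graph. Fix $x \in \X$ and let $U(x)=(V,E(x))$ be given. Build an auxiliary profile undirected graph $\GU^{x}=(V,\EU^{x})$ as follows. For each pair $a,b \in V$, put $(a,b)^{\emptyset}$ (a full edge) if $(a,b)\in E(x)$, and put $(a,b)^{\X}$ (no edge) otherwise. Any labelling that contains $x$ exactly for the non-edges would also work. Then, by Definition \ref{def.multUnd}, the induced graph of $\GU^{x}$ at level $x$ is exactly $U(x)$.

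The first step is to check that the $x$-notions in $\GU^{x}$ coincide with the ordinary graph notions in $U(x)$. By definition, $b$ is an $x$-neighbour of $a$ in $\GU^{x}$ iff the label of $(a,b)$ does not contain $x$, which holds iff $(a,b)\in E(x)$. Hence $x$-paths in $\GU^{x}$ are exactly the paths in $U(x)$. It follows that $x$-connectedness, $x$-connected components of a set $D$, and $x$-separation are, respectively, ordinary connectedness, ordinary connected components of $D$ in $U(x)$, and ordinary graph separation in $U(x)$. Consequently, the $\U$-GMP statements for level $x$ read: $Y_A(x)\ci Y_B(x)\mid Y_C(x)$ whenever $C$ separates $A$ from $B$ in $U(x)$, which is the usual global Markov property for $P[Y_V(x)]$ wrt $U(x)$. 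Likewise, the $\U$-CSMP statements for level $x$ are exactly the connected set statements for every $x$-disconnected $D\subseteq V$.

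The second step is to apply Theorem \ref{t1} one level at a time. Both the $\U$-GMP and the $\U$-CSMP are conjunctions over $x\in\X$ of conditions that involve only $P[Y_V(x)]$ and the $x$-structure of the graph. The proof of Theorem \ref{t1} therefore has to go through for each fixed $x$. I would make this explicit by inspecting that proof, or restating it in its per-level form: the $x$-GMP holds iff the $x$-CSMP holds. Applying this to $\GU^{x}$ gives the corollary.

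I expect the main obstacle to be this decoupling across levels. Theorem \ref{t1} as stated is an equivalence between conjunctions over all of $\X$, and an equivalence of conjunctions does not formally give equivalence of the individual conjuncts. There are two ways to handle it, and I would try the first before relying on the second. The first is to argue that the argument for Theorem \ref{t1} never mixes levels, so it proves the per-level equivalence. The second is a formal device: choose the other levels $x'\neq x$ so that both properties hold trivially there, for instance by taking the complete graph at $x'$ (all labels equal to $\emptyset$ or $\{x\}$, according to $E(x)$). For the complete graph there are no nontrivial separations or disconnected sets, apart from the degenerate cases with $A$ or $B$ empty and vacuous statements, which hold automatically. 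The equivalence at level $x$ then follows from Theorem \ref{t1} directly.
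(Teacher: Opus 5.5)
Your proposal is correct and follows the paper's route: the paper gives no separate proof of this corollary and says only that it is derived directly from Theorem~\ref{t1}. That theorem's argument is carried out for a fixed level throughout. Your encoding of $U(x)$ as a profile graph, together with the complete-graph padding at the other levels $x' \neq x$ (where both properties hold vacuously), makes explicit the level-by-level decoupling that the paper leaves implicit.
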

The following proposition shows that the full set of independencies encoded in the induced undirected graph model for any  $Y_V(x) \in Y_{V|\X}$ can be derived from the profile undirected graph model for the joint distributions of $Y_{V|\X}$.


\begin{proposition}\label{p2} Consider a profile undirected graph  $\GU=(V,\EU)$ associated to the profile outcome vectors $Y_{V|\X}$ and the induced class of multiple undirected graphs $U_{V|\X}$.  If the probability distributions $P[Y_{V|\X}]$ satisfy the $\U$-CSMP wrt $\GU$, the probability distribution $P[Y_V(x)]$ of each profile vector $Y_V(x) \in Y_{V|\X}$ satisfies the global Markov property wrt the induced undirected graph $U(x) \in U_{V|\X}$. 
\end{proposition}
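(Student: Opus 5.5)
The plan is to reduce Proposition \ref{p2} to Corollary \ref{c1} by showing that the combinatorial notions used in the $\U$-CSMP for $\GU$ at a fixed profile $x$ coincide with the ordinary notions in the induced graph $U(x)$. Fix $x \in \X$. By Definition \ref{def.multUnd}, the pair $a,b$ is adjacent in $U(x)$ exactly when the corresponding edge $(a,b)^{\Z} \in \EU$ has $x \notin \Z$, i.e.\ exactly when $b \in nb_x(a)$ in $\GU$. Note that a missing edge corresponds to $\Z = \X \ni x$, so it never yields adjacency.

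\textbf{Step 1: $x$-paths are paths in $U(x)$.} Since every edge of an $x$-path has a label not containing $x$, each $x$-path in $\GU$ is a path in $U(x)$. Conversely, every path in $U(x)$ uses only pairs whose label in $\GU$ excludes $x$, so it is an $x$-path in $\GU$.

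\textbf{Step 2: connectivity and separation transfer.} From Step 1, for any nonempty $D \subseteq V$:
\begin{itemize}
\item $D$ is $x$-connected in $\GU$ if and only if it is connected in $U(x)$;
\item the $x$-connected components $K_1,\dots,K_r$ of $D$ are precisely the connected components of $D$ in $U(x)$;
\item for disjoint $A,B,C$, $C$ $x$-separates $A$ from $B$ in $\GU$ if and only if $C$ separates $A$ from $B$ in $U(x)$.
\end{itemize}
One point needs care: components of $D$ should be understood via paths inside the induced subgraph on $D$. I would use the same reading on both sides, so the equivalence holds regardless of which reading the definitions intend.

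\textbf{Step 3: conclude.} Assume $P[Y_{V|\X}]$ satisfies the $\U$-CSMP with respect to $\GU$. Specialising to our fixed $x$ gives, for every $x$-disconnected $D$ with $x$-connected components $K_1,\dots,K_r$,
\[
Y_{K_1}(x) \bigCI \dotso \bigCI Y_{K_r}(x) \mid Y_{V\setminus D}(x).
\]
By Step 2, this is exactly the connected set Markov property for $P[Y_V(x)]$ with respect to $U(x)$. Corollary \ref{c1} then gives the global Markov property with respect to $U(x)$.

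Alternatively, one can argue directly from Theorem \ref{t1}: the $\U$-CSMP implies the $\U$-GMP, whose statements at profile $x$ are, by the separation equivalence in Step 2, exactly the global Markov statements for $U(x)$. Both routes work; I would present the second, since it is a one-line consequence of Theorem \ref{t1} once Step 2 is in place.

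The only real obstacle is making Step 1 airtight against the labelling conventions. An absent pair carries the label $\X$, a full edge carries the label $\emptyset$, and a dotted edge carries a proper subset $\Z$ of $\X$; each case must be checked to confirm that adjacency in $U(x)$ is equivalent to $x \notin \Z$. The remaining steps are bookkeeping.
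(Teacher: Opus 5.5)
Your proposal is correct, and the route you say you would present matches the paper's proof: the $\U$-CSMP gives the $\U$-GMP by Theorem \ref{t1}, and $x$-separation in $\GU$ coincides with ordinary separation in $U(x)$ by Definition \ref{def.multUnd}. Your Step 1 spells out the correspondence between $x$-paths in $\GU$ and paths in $U(x)$, which the paper uses but only asserts.
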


\noindent In the following proposition we show that $\U$-GMP, $\U$-CSMP and $\U$-PMP are equivalent for the class of profile undirected graph models in case of strictly positive probability distributions. This result directly derives from Proposition \ref{p2}.

\begin{proposition}\label{p1} Let $\GU=(V,\EU)$ be a profile undirected graph associated to the profile outcome vectors $Y_{V|\X}$ with strictly positive probability distributions $P[Y_{V|\X}]$. The $\U$-GMP is satisfied if and only if the $\U$-PMP is satisfied wrt $\GU$.
\end{proposition}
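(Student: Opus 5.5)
The plan is to reduce the statement, profile by profile, to the classical equivalence of the pairwise and global Markov properties for undirected graphs with strictly positive distributions \citep{lauritzen1996graphical}. The reduction goes through the induced multiple graphs $U_{V|\X}$ of Definition \ref{def.multUnd}. The key observation is that all the $x$-notions in $\GU$ are literally the ordinary graph notions in $U(x)$.

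\textbf{Translation step.} First I would record that, for each fixed $x\in\X$:
\begin{itemize}
\item $b\in nb_x(a)$ in $\GU$ if and only if $a,b$ are adjacent in $U(x)$;
\item an $x$-path in $\GU$ is exactly a path in $U(x)$;
\item hence $C$ $x$-separates $A$ from $B$ in $\GU$ if and only if $C$ separates $A$ from $B$ in $U(x)$.
\end{itemize}
It follows that the $\U$-GMP wrt $\GU$ is the conjunction over $x\in\X$ of the ordinary global Markov property of $P[Y_V(x)]$ wrt $U(x)$.

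Similarly, a pair $a,b$ has edge $(a,b)^{\Z}$ with $x\in\Z$ exactly when $a,b$ are non-adjacent in $U(x)$. So the $\U$-PMP is the conjunction over $x$ of the ordinary pairwise property wrt $U(x)$. Each $x$ appears in the label of precisely one edge per pair, so no statements are lost or added in this regrouping.

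\textbf{Easy direction.} GMP implies PMP. If $a,b$ are non-adjacent in $U(x)$, then $V\setminus\{a,b\}$ separates them in $U(x)$, i.e.\ it $x$-separates them in $\GU$. The $\U$-GMP then directly yields $Y_a(x)\ci Y_b(x)\mid Y_{V\setminus\{a,b\}}(x)$. Alternatively, one can route through Theorem \ref{t1} and Proposition \ref{p2}, taking $D=\{a,b\}$ in the $\U$-CSMP.

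\textbf{Converse direction.} PMP implies GMP. Fix $x$. Since $P[Y_V(x)]$ is strictly positive, the intersection property holds for $Y_V(x)$. The standard Pearl--Paz argument then shows that the pairwise property wrt $U(x)$ implies the global one. This is a backward induction on $|V\setminus(A\cup B\cup C)|$, using intersection to glue separations, together with the usual weak union and decomposition steps. By the translation step, this global property is the $x$-component of the $\U$-GMP, and ranging over all $x$ gives the $\U$-GMP.

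The only genuine obstacle is making sure positivity is available at the profile level. The hypothesis is stated for $P[Y_{V|\X}]$, so it applies to each $P[Y_V(x)]$ directly, and intersection is needed only within a single profile, never across profiles. If one prefers to match the paper's remark that the result derives from Proposition \ref{p2}, the same positivity argument can instead be used to show that PMP implies CSMP. Proposition \ref{p2} and Theorem \ref{t1} then close the loop.
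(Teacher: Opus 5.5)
Your proposal is correct and takes essentially the same route as the paper, which also applies the classical equivalence of the pairwise and global Markov properties for strictly positive distributions \citep{lauritzen1996graphical} to each profile distribution $P[Y_V(x)]$ separately. Your translation step makes explicit what the paper uses implicitly, as in its proof of Proposition~\ref{p2}: $x$-separation in $\GU$ coincides with separation in $U(x)$, so the $\U$-PMP and $\U$-GMP are the conjunctions over $x\in\X$ of the ordinary properties wrt $U(x)$.
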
 

\section{Profile undirected graphs and LWF chain graphs} \label{subsec.chainvsund}
For any profile undirected graph $\GU$, we derive an induced class of two-block LWF chain graphs $\C_U=\{C_U\}$, with generic element $C_U=[\{V,X\}, E_{C_U}]$, such that the joint distribution $P(Y_V,X)$ under $C_U$ is compatible, in terms of independence models, with the set of profile distributions $P[Y_{V|\X}]$ under $\GU$. Preliminary definitions are required to specify the class of  LWF chain graphs induced by a profile graph.


A joint probability distribution $P(Y_V,X)$ satisfies the LWF Global Markov property (LWF-GMP) wrt the LWF chain graph $C_U=[\{V,X\}, E_{C_U}]$ if \citep{frydenberg1990,drton2009discrete}: 
\begin{itemize}
	\item[] for any disconnected set $D \subseteq V$ with connected components $K_1,\dots,K_r$,
	\begin{equation}\label{eq:LWF global undirected}
	Y_{K_1} \bigCI \dotso \bigCI Y_{K_r}|\{Y_{V \setminus D},X\};
	\end{equation}
	\item[] for any subset $A \subseteq V$ such that there is a missing arrow between any vertex $a \in A$ and $X$,
	\begin{equation}\label{eq:LWF global arrow}
	Y_A \bigCI X | Y_{V \setminus A}.
	\end{equation}
\end{itemize}
We remark that  Equation \eqref{eq:LWF global undirected} directly derives from Theorem \ref{t1}.
\ml{A definition of Markov-compatibility between a profile undirected  graph and an  LWF chain graphs is given.}
\ml{
\begin{definition}
    An LWF chain graph $C_U=[\{V,X\}, E_{C_U}]$ is Markov-compatible with a profile undirected graph $\GU$ if the LWF-GMP in \eqref{eq:LWF global undirected} for $C_U$ is implied by the $\U$-GMP for $\GU$.
\end{definition}
}
\ml{We now derive the class of LWF-graphs induced by a profile undirected graph such that Markov-compatibility is satisfied.
\begin{theorem}\label{thr2}
	Consider a profile undirected graph $\G_U =(V, \EU)$ associated to the profile outcome vectors $Y_{V|\X}$. If the probability distributions $P[Y_{V|\X}]$ satisfy the $\U$-GMP for $\G_U$,  then  also  the LWF-GMP in \eqref{eq:LWF global undirected}  is satisfied for the induced class $\C_U$ of two-block LWF chain graphs, where any $C_U=[\{V,X\},E_{C_U}]$ belongs to $\C_U$ if 
	\begin{itemize}
		\item[(i)] any couple $a,b \in V$ is joined by an undirected edge in $C_U$ if $\Z \subset \X$ for the  pair $(a,b)^{\Z} \in \EU$; 
		\item[(ii)] for any couple $a,b \in V$,  $a$ and $b$ are both reached by an arrow in $C_U$ starting from $X$  if $\Z \subset \X$ and $\Z \neq \emptyset$ for the pair $(a,b)^{\Z} \in \EU$.
	\end{itemize} 
\end{theorem}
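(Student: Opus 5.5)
The plan is to verify \eqref{eq:LWF global undirected} directly from the $\U$-GMP, one profile at a time, and then lift the profile statements to statements conditional on $X$ using the Lemma of Section \ref{sec.setup}. Condition (ii) only concerns arrows from $X$, so it plays no role in \eqref{eq:LWF global undirected}. The only relevant feature of $C_U$ is condition (i). It says that $a,b\in V$ are joined by an undirected edge in $C_U$ exactly when $\Z\neq\X$, that is, exactly when $a$ and $b$ are joined in at least one induced graph $U(x)$ of Definition \ref{def.multUnd}. Hence the undirected part of $C_U$ on $V$ is the union of the graphs $U(x)$, $x\in\X$, and every $U(x)$ is a subgraph of it.

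Fix a set $D\subseteq V$ that is disconnected in $C_U$, with connected components $K_1,\dots,K_r$, and fix $x\in\X$. The key step is a separation claim: for each $i$, the set $V\setminus D$ $x$-separates $K_i$ from $D\setminus K_i=\bigcup_{j\neq i}K_j$ in $\GU$. To see this, take any $x$-path from $a\in K_i$ to $b\in K_j$ with $j\neq i$. Every edge of this path is an edge of $U(x)$, hence an edge of $C_U$. There is no edge of $C_U$ between different components $K_i,K_j$. So the path cannot stay inside $D$, and must therefore visit a vertex of $V\setminus D$. This argument is the main point to get right, because the $x$-connected components of $D$ may be strictly finer than the $K_i$, so one cannot simply quote the $\U$-CSMP with components $K_1,\dots,K_r$. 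Working with separation avoids that issue: each $K_i$ is a union of $x$-components, and no $x$-edge joins different $K_i$. If $V\setminus D=\emptyset$, the same argument shows that there is no $x$-path at all between the $K_i$'s, and the statement becomes a marginal independence.

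By the $\U$-GMP, the separation claim gives
\begin{equation*}
Y_{K_i}(x)\bigCI Y_{K_{i+1}\cup\dots\cup K_r}(x)\mid Y_{V\setminus D}(x),\qquad i=1,\dots,r-1.
\end{equation*}
Here we use the separation for $K_i$ versus $D\setminus K_i$ and then apply decomposition to the smaller set $K_{i+1}\cup\dots\cup K_r$. Chaining these factorizations,
\begin{equation*}
P(Y_{K_1},\dots,Y_{K_r}\mid Y_{V\setminus D})=P(Y_{K_1}\mid Y_{V\setminus D})\,P(Y_{K_2},\dots,Y_{K_r}\mid Y_{V\setminus D})=\dots,
\end{equation*}
with all quantities evaluated at profile $x$, yields the mutual independence $Y_{K_1}(x)\bigCI\dotso\bigCI Y_{K_r}(x)\mid Y_{V\setminus D}(x)$. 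Alternatively, one could invoke Theorem \ref{t1} to get the $\U$-CSMP with the finer $x$-components of $D$ and then group them into the $K_i$. Both routes give the same statement.

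Since this holds for every $x\in\X$, the Lemma of Section \ref{sec.setup}, applied to the joint factorization, converts it into $Y_{K_1}\bigCI\dotso\bigCI Y_{K_r}\mid\{Y_{V\setminus D},X\}$, which is \eqref{eq:LWF global undirected}. This argument works for every $C_U\in\C_U$, because the arrows, which are the only freedom left by (ii), never enter it.
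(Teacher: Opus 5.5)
Your proof is correct, but it is organized differently from the paper's. The paper first invokes Theorem~\ref{t1} to replace the $\mathcal{U}$-GMP by the $\mathcal{U}$-CSMP. It then splits into three cases: $D$ is $x$-disconnected for no profile, for every profile, or only for some. In the second case it applies the connected-set statement profile by profile and lifts it to conditioning on $X$. In the other two cases it argues that $D$ is connected in $C_U$ by (i), and it uses the last case to motivate (ii).

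You instead fix $D$ disconnected in $C_U$. Using only that every $U(x)$ is a subgraph of the undirected part of $C_U$, you show directly that $V\setminus D$ $x$-separates each $K_i$ from $D\setminus K_i$. The $\mathcal{U}$-GMP then applies without Theorem~\ref{t1}.

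Your route makes explicit a point the paper's case analysis passes over. The $x$-connected components of $D$ can be strictly finer than $K_1,\dots,K_r$, so the $\mathcal{U}$-CSMP cannot simply be quoted with the $K_i$ as components. Your argument also runs the implication in the right direction: disconnected in $C_U$ implies $x$-disconnected for every $x$. The paper's case b) states the converse, which fails in general. Take $\mathcal{X}=\{0,1\}$, an edge $a$--$b$ present only in profile $0$, $b$--$c$ present only in profile $1$, and $a,c$ non-adjacent. Then $\{a,b,c\}$ is $x$-disconnected for both profiles yet connected in $C_U$. This is harmless for the theorem, since connected sets impose nothing.

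Your observation that (ii) is irrelevant to \eqref{eq:LWF global undirected} is also right, because that statement conditions on $X$ whatever the arrows. The paper's discussion of (ii) really concerns the arrow statement \eqref{eq:LWF global arrow} and Example~\ref{ex.nec-cond}. That heuristic motivation for the definition of $\mathcal{C}_U$ is what the paper's case-based framing adds beyond your argument.
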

}

Necessary conditions (i) and (ii) in Theorem \ref{thr2} 
ensure that it will always exist at least one Markov-compatible LWF chain graph for any given profile undirected graph, {specifically, the chain graph with no missing arrows is always compatible}. Condition (i) is related to the missing/non-missing undirected edges for any induced chain graph; it states that dotted and full edges in profile undirected graphs correspond to full edges in chain graphs. Condition (ii) is related to missing/non-missing directed edges for any induced chain graph; it states that vertices joined by a dotted edge in a profile undirected graph cannot be disjoined from $X$ in the induced chain graph. Since condition (ii) may not be intuitive, the following counterexample shows that this is a necessary condition.

{
\begin{example}\label{ex.nec-cond}
	Let $V=\{a,b,c\}$ be a set of response variables and $X$ a   factor with state-space $\X=\{0,1\}$ and let  $\GU=(V, \EU)$ be a profile undirected graph with
    $\EU=\{(a,b)^{\{0\}},(a,c)^{\emptyset},(b,c)^{\X}\}$ 
    where the pair $a,b$ is joined by a $\{0\}$-dotted edge that implies
	\begin{equation}
	\label{eqcompch2}
	Y_{a}(0) \ci Y_{b}(0)|Y_{c}(0) \qquad {\rm and} \qquad Y_{a}(1) \nci Y_{b}(1)|Y_{c}(1).
	\end{equation} We explore Markov-compatibility for some chain graphs. Consider a chain graph $C_{U}=\{(V,X), E_{C_U}\}$ with $E_{C_{U}}=\{(a,b),(b,c),(X,c)\}$, where vertices $a$ and $b$ are both disjoined from $X$. For the condition \eqref{eq:LWF global arrow}, we have $\{Y_a,Y_b\} \ci X | Y_{c}$, i.e., $$P(Y_{a}(0),Y_{b}(0)|Y_{c}(0)) = P(Y_{a}(1),Y_{b}(1)|Y_{c}(1)),$$ 
	that is not compatible with condition (\ref{eqcompch2}) for the profile graph, then $C_U$ does not belong to the induced class $\C_U$. Consider the chain graph $C^{'}_U$  with $E_{C^{'}_{U}}=\{(a,b),(b,c),(X,b),(X,c)\}$, where  only $a$ is joined to $X$. Equation \eqref{eq:LWF global arrow} implies that $Y_a \ci X|\{Y_b,Y_c\}$, that is, $$P[Y_a(0)|Y_b(0),Y_c(0)]=P[Y_a(1)|Y_b(1),Y_c(1)],$$  
    that is not compatible with condition \eqref{eqcompch2} for the profile graph implying that $P[Y_a(1)|Y_b(1),Y_c(1)] \neq [Y_a(0)|Y_b(0)Y_c(0)]$.  It follows that $C_U^{'}$ does not belong to the induced class $\C_U$. 
	Consider a third chain graph $C^{''}_U$  with the set $E_{C^{''}_{U}}=\{(a,b),(b,c),(X,a),(X,b),(X,c)\}$ of edges, where $a$ and $b$ are both joined to $X$. This chain graph is compatible with condition \eqref{eqcompch2} for the profile graph $\GU$. $C^{''}_U$ satisfies both conditions (i) and (ii) in Theorem \ref{thr2}  and belongs to the induced class $\C_U$. Consider the last chain graph $C^{'''}_U$ with the set $E_{C^{''}_{U}}=\{(a,b),(b,c),(X,a),(X,b),\}$ of edges, where only $a$ and $b$ are both joined to $X$. This graph implies $Y_c \ci X|\{Y_a,Y_b\}$, i.e., $P[Y_c(1)|Y_a(1),Y_b(1)] \neq [Y_c(0)|Y_a(0)Y_b(0)]$ that is compatible with condition \eqref{eqcompch2} for the profile graph $\GU$. $C^{'''}_U$ satisfies both conditions (i) and (ii) in Theorem \ref{thr2}. The induced class  $\C_U=\{C^{''}_U, C^{'''}_U\}$ of chain graphs includes two elements where $C^{'''}_U$ has no missing arrows and $C^{'''}_U$ has a missing arrow in correspondence of the vertex $c$ which is not involved in dotted edges.
    \end{example}
    }

In essence, given a profile  graph $\GU$, the induced class $\C_U$ includes  LWF chain graphs $C_U=\{(V,X),E_{C_U}\}$ where the chain component $V$ has the same skeleton of $\GU$, and differ only according to which arrows are missing. Within this class, we can identify the \textit{maximum element}, i.e., the chain graph with no missing arrows and the \textit{minimum element}, i.e., the chain graph with a set of arrows that point to all vertices $a \in V$ such that $nb_{\Z}(a) \neq \emptyset$ with $Z \subset \X$ and $Z \neq \emptyset$. 

\begin{figure}[t]
	
	\begin{center}
		\begin{tikzpicture}[node distance = {1.2cm and 1.2cm}, v/.style = {draw, circle,minimum size=0.7cm},amp/.style = {regular polygon, regular polygon sides=4,
			draw, fill=white, text width=0.4em,
			inner sep=1mm, outer sep=0mm,
			minimum size=0.9cm},mylabel/.style={thin,  align=center,fill=white,bend right}]
		
		\node at (1,-3.5) {$\G_U$};

		\node (a) [v] {a};
		\node (b) [v, below = of a] {b};
		\node (c) [v, right = of a] {c};
		\node (d) [amp, below = 1.24cm of c] {d};
		
		\draw (b) to node {} (d);
		
		\draw[ed] (a) to node [mylabel]{\tiny 2} (b);
		\draw[ed] (b) to node [sloped] [mylabel]{\tiny 1,2} (c);
		\draw[ed] (a) to node [sloped] [mylabel]{\tiny 0} (c);

		\end{tikzpicture} \hspace*{1.3cm}
		\begin{tikzpicture}[node distance = {1.2cm and 1.2cm}, v/.style = {draw, circle,minimum size=0.7cm}]

		\node at (0,-3.5) {$C_U$};
		
		\node (a) [v] {a};
		\node (b) [v, below = of a] {b};
		\node (c) [v, right = of a] {c};
		\node (d) [v, below = of c] {d};
		\node (x) [v, below left=0.5cm and 1.5cm of a] {X};
		
		\plate {}{(x)}{} 
		\plate {}{(a)(b)(c)(d)}{} 
		
		\draw  (a) to node {} (b);
		\draw  (b) to node {} (d);
		
		\draw  (b) to node {} (c);
		\draw  (a) to node {} (c);
		
		\draw[edf] (x) to node {} (a);
		\draw[edf] (x) to node {} (b);
		\draw[edf] (x) to node {} (c);

		\end{tikzpicture} 
		
		\caption{A profile undirected graph with a compatible LWF chain graph.}\label{fig:fig3} 
	\end{center}
\end{figure}
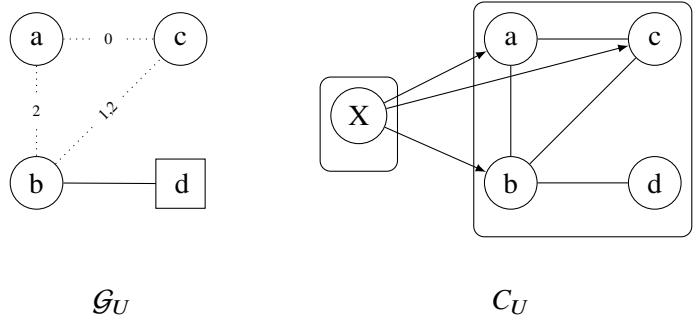


In order to account also for the LWF-GMP in \eqref{eq:LWF global arrow} and to establish a one-to-one relationship between profile undirected graphs and LWF chain graphs, we generalize the class of profile undirected graphs. Given a profile undirected graph $\GU=(V,\EU)$, consider the partition $V=V \circleee \cup V_{\square}$ of the vertex set so that we distinguish between two types of vertices, a \textit{circle vertex} $a \hspace*{0.4mm} \text{\circleee} \in V \circleee$ and a \textit{square vertex} $a_{\square} \in V_{\square}$, drawn as $\Circle$ and $\square$, respectively. For every $a_{\square} \in V_{\square}$, we assume $Y_a \bigCI X|Y_{V \setminus a}$; that is the univariate profile distribution of $Y_a(x)$ is invariant for any $x \in \X$, given the remaining variables $Y_{V\setminus a}$; 
otherwise if $a \hspace*{0.4mm} \text{\circleee} \in V \circleee$, we assume $Y_a \nci X|Y_{V \setminus a}$.

The profile graph in this generalized representation includes information also about the independence structure between subsets of response variables $Y_A$, with $A \subseteq V$, and the external factor $X$. In particular, for any $A \subseteq V_{\square}$, we assume that $Y_A \bigCI X|Y_{V \setminus A}$. Then, given a profile undirected graph $\GU=(V \circleee, V_{\square}, \EU)$, the compatible two-block LWF chain graph $C_U=[\{V,X\},E_{C_U}]$ in the class $\C_U$ is unique and is defined by a chain graph where the undirected graph of the response component $V$ has the same skeleton of $\GU$ and there are missing arrows between $X$ and any square vertex $a_{\square} \in V_{\square}$. 
{Square and circle vertices provide insights into how the dependence structure varies across profiles. Square vertices identify variables with stable pairwise dependencies, while circular vertices denote variables characterized by pairwise context-specific independencies. The fewer the square nodes, the greater the difference across profiles. }

\begin{example}Consider the profile undirected graph and the induced chain graph in Figure \ref{fig:fig3}. Vertices $a,b,c$ are circled vertices while $d$ is a square vertex wrt $\GU$, i.e., $\{a,b,c\} \in V \circleee$ and $d \in V_{\square}$. Then, both the profile undirected graph and the chain graph imply the  independence statement $Y_d \bigCI X | \{Y_a,Y_b,Y_c\} $. Also, both  graphs imply that $\{Y_a,Y_c\} \ci Y_{d}|\{Y_{b},X\}$. Unlike the profile graph, the chain graph does not provide information about the effect of $X$ on the $Y_V$ association structure, e.g., $Y_a(2) \ci Y_b(2)| \{Y_c(2),Y_d(2)\}$.
\end{example}

\section{Gaussian profile undirected graphical model}
\label{sec:Gaussian}

We can now define the class of \emph{Gaussian profile undirected graphical model} by imposing zero-constraints over the model parameters; these constraints naturally follow from the Markov equivalence between profile graphs and multiple graphs, and the compatibility between profile graphs and chain graphs established previously. 

For all $x \in \X$, let  $Y_{V}(x) \sim N(\alpha+\beta_{x},\Sigma_x)$ where $[\alpha_{a}+\beta_{ax}]_{a \in V}=\mathbb{E}[Y_{a}(x)]_{a \in V}$ is the \emph{profile marginal mean vector} and $\Sigma_x$ is the \emph{profile covariance matrix} with entries $\sigma_{ab,x}$, $x \in \X$.
Let $\zeta_{ax}, \, a \in V$, be the linear effect of the external factor on the \emph{profile conditional mean vector} $\mathbb{E}[Y_{a}(x)|Y_{V \setminus a}(x)]_{a \in V}$ and let $\Omega_x = \Sigma_x^{-1}$ be the \emph{profile precision matrix} with entries $\omega_{ab,x}$, $x \in \X$; note that $\zeta_{x}=\Omega_x\beta_{x}$ where $\zeta_{x}=[\zeta_{ax}]_{a \in V}$ and $\beta_{x}=[\beta_{ax}]_{a \in V}$ \citep{andersson2001alternative}. 
\ml{
\begin{definition}
    The Gaussian profile undirected graphical model for $Y_{V|\X}$ wrt $\GU=(V, \EU)$  is such that, 
\begin{itemize}
	\item[(i)] for any $a \in V_{\square}$,  $\zeta_{ax}=0$ for all $ x \in \X$,
	\item[(ii)] for any $(a,b)^{\Z} \in \EU$, with $\Z \subseteq \X$,  $\omega_{ab,x}=0$, for each $x \in \Z$.
\end{itemize}
\end{definition}
}
Estimation of Gaussian profile undirected graphical models can build on existing methods for Gaussian chain or multiple graph inference. Penalty terms that promote shared network structures across profiles—similar to the group graphical lasso, joint graphical lasso \citep{danaher2014joint}, or GemBag \citep{YangXinming2021GGEo} may be appropriate when the x-profile vectors follow distributions with comparable graph structures. In Section \ref{Section:BayesianModel}, we extend the model of \citet{YangXinming2021GGEo} by introducing profile indicators $r_{ab,x}$ that specify the sparsity of the corresponding entries $\omega_{ab,x}$ and profile coefficients $\beta_{a,x}$. 
The entries $\omega_{ab,x}$ estimated to be zero define the zero patterns that correspond to a given profile undirected graph $\GU=(V,\EU)$.

\subsection{Bayesian model formulation}
\label{Section:BayesianModel}
To express our model in a Bayesian framework, we first introduce binary global-level indicators $\gamma_{ij}$ such that $\gamma_{ij}:=1$ indicates that nodes $i$ and $j$ are connected in at least one of the $q$ graphs, and $\gamma_{ij}:=0$ denotes no such connection. We place independent Bernoulli priors $\gamma_{ij}\sim \text{Bernoulli}(\gamma_{ij} \mid p_1)$ on these indicators.

Similarly, we introduce global-level coefficient indicators $\theta_{i}$. The indicator $\theta_{i}$ denotes whether at least one of the coefficients {that represent the effect of the external factor} is non-zero ($\theta_{i}:=1$) or zero ($\theta_{i}:=0$). These indicators will also affect the profile-level sparsity in the corresponding entries of the precision matrices across the $q$ graphs. We assign a Bernoulli prior $\theta_{i}\sim \text{Bernoulli}(\theta_{i} \mid p_2)$ at each indicator.

We introduce profile indicators $r_{ij,x}$ to capture the sparsity of the corresponding entries $\omega_{ij,x}$. To encourage similarity among these indicators, we place joint priors on $r_{ij,0},\dots,r_{ij,q-1} \mid \gamma_{ij}$. 
These distributions encourage across-profile information sharing while allowing within-profile heterogeneity. 
In this paper, we assume the following hierarchical structure for this distribution:
%
\begin{equation}
\begin{split}
\label{eq:rx}
\Prob(r_{ij,x},\dots,r_{ij,q-1} \mid \gamma_{ij}, \theta_i, \theta_j) =& \gamma_{ij}\theta_i \theta_j \prod_{x = 0}^{q-1}\text{Bernoulli}(r_{ij,x} \mid p_3) \\
&+ (1-\gamma_{ij})\prod_{x = 0}^{q-1}\delta_0(r_{ij,x}) \\
&+ \gamma_{ij} (1-\theta_i \theta_j) \text{Bernoulli}({r}_{ij,0} \mid p_4) \delta_{(r_{ij,1},\dots,r_{ij,q-1})}(r_{ij,0}),
\end{split}
\end{equation}
where $\delta_{\cdot}(\cdot)$ denotes a point mass at zero. 
Under this setup, when the global-level indicator $\gamma_{ij}=0$, all the ${r}_{ij,k}$ are set to zero. 
When $\gamma_{ij}=1$, each ${r}_{ij,k}$ can still independently take the value 0 with probability $1-p_3$ if $\theta_i=\theta_j=1$, or with probability $1-p_4$ if $\theta_i=\theta_j=0$ and $r_{ij,0}=r_{ij,1}=,\dots,=r_{ij,q-1}$.
This provides a flexible approach to sharing information across profiles while also accounting for the effect of the external factor, e.g., the inclusion or exclusion of arrows.

We place an exponential prior on the positive diagonal entries of the $q$ precision matrices to induce proper shrinkage: $\omega_{ii,x}\sim \text{Exponential}(\tau)$, and a spike-and-slab prior on the upper triangular entries $\omega_{ij,x}(i<j)$: $\omega_{ij,x}\mid r_{ij,x} \sim r_{ij,x} \Prob(\omega_{ij,x}\mid \nu_1) +(1- r_{ij,x}) \Prob(\omega_{ij,x}\mid \nu_0)$, with $\nu_1>\nu_0>0$. 
Following \citet{YangXinming2021GGEo}, we adopt a spike-and-slab Lasso prior \citep{Rockova2018}, where $\Prob(\omega_{ij,x}\mid \nu_1)$ represents the slab component with a large variance, allowing for large signals, and $\Prob(\omega_{ij,x}\mid \nu_0)$ represents the spike component with a small variance, encouraging values close to zero. Finally we set a normal spike-and-slab prior \citep{George93} on the profile coefficients: $\beta_{i,x}\mid \theta_i \sim \theta_i \Prob(\beta_{i,x}\mid \lambda_1) +(1- \theta_i) \Prob(\beta_{i,x}\mid \lambda_0)$, with $\lambda_1>\lambda_0>0$. 
Figure~\ref{fig:DAGpugm} provides a graphical representation of our Bayesian model.

\begin{figure}[h!]
\centering
\resizebox{0.5\textwidth}{!}{
  \begin{tikzpicture}
\node[obs] (y) {$Y_{v_x}$} ;
\node[latent, right=of y] (alpha) {$\alpha$} ;
\node[latent, left=of y] (omegaij) {$\omega_{ij,x}$} ;
\node[latent, above=of omegaij] (omegaii) {$\omega_{ii,x}$} ;
\node[latent, left=of omegaij] (r) {$r_{ij,x}$} ;
\node[latent, left=of r] (gamma) {$\gamma_{ij}$} ;
\node[latent, below=of omegaij] (beta) {$\beta_{i,x}$} ;
\node[latent, below=of gamma] (theta) {$\theta_{i}$} ;
\node[const, above=of omegaii] (tau) {$\tau$} ;
\node[const, left=of tau] (p4) {$p_4$} ;
\node[const, left=of p4] (p3) {$p_3$} ;
\node[const, below=of alpha] (l0) {$\lambda_0$} ;
\node[const, below=of l0] (l1) {$\lambda_1$} ;
\node[const, right=of tau] (v0) {$\nu_0$} ;
\node[const, right=of v0] (v1) {$\nu_1$} ;
\node[const, left=of theta] (p2) {$p_2$} ;
\node[const, left=of gamma] (p1) {$p_1$} ;
\plate[inner sep=0.1cm] {plate1} {(y) (omegaij) (omegaii) (beta) (r)} {$0\leq x \leq q-1$};
\plate[inner sep=0.1cm] {plate4} {(beta) (omegaij) (omegaii)(theta)(r) (plate1)} {$1 \leq i < j \leq p$}
\edge {beta,alpha,omegaij,omegaii} {y} ;
\edge {theta,gamma} {r} ;
\edge {tau} {omegaii} ;
\edge {theta} {beta} ;
\edge {p3, p4} {r} ;
\edge {l0, l1} {beta} ;
\edge {v0, v1} {omegaij} ;
\edge {r} {omegaij} ;
\edge {p1} {gamma} ;
\edge {p2} {theta}
\end{tikzpicture}
}
  \caption{Directed acyclic graph (DAG) for Gaussian undirected profile graphical model with Spike-and-slab prior on the profile covariance matrices and the profile coefficients}
\label{fig:DAGpugm}
  \end{figure}
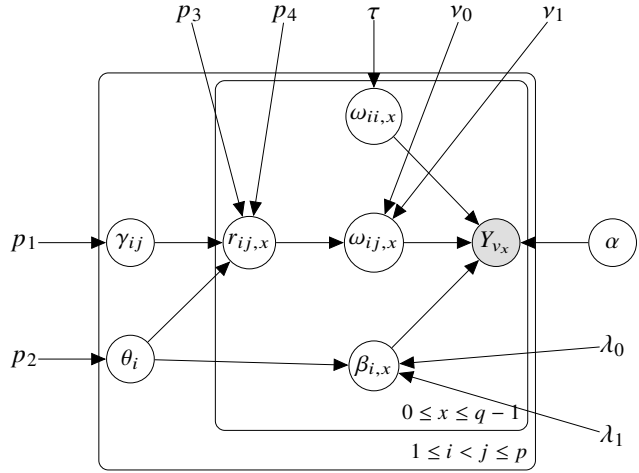

%
Parameter estimation for our proposed Bayesian Gaussian profile undirected graphical model is conducted using Expectation-Maximisation (EM) algorithm \citep{Dempster1977} along the lines of the EM algorithm of \citet{YangXinming2021GGEo}. 
Let $\Delta = (\alpha, \beta_x, \Omega_x)$ be the unknown parameters.
The EM algorithm aims to maximise the log-posterior $\log(\Prob(\Delta \mid Y_{V_x})$ by working with the complete-data log-posterior $\log(\Prob(\Delta \mid Y_{V_x},\theta,R)$.
The EM algorithm 
has two steps: the E-step calculates the expected values $\Exp[r_{ij,x}\mid \widehat{\Delta},Y_{V_x}] $ and $\Exp[\theta_i \mid \widehat{\Delta},Y_{V_x}] $ where $\Delta^{(t)}=\widehat{\Delta}$ are the current values at $t$ of $\Delta$. 
The M-step maximises $Q(\widehat{\Delta}) = \Exp_{\theta,R \mid Y,\widehat{\Delta}}[\log(\Prob(\widehat{\Delta} ,\theta,R\mid Y_{V_x})]= \langle \log(\Prob(\widehat{\Delta},\theta,R \mid Y_{V_x}) \rangle$  w.r.t. $\widehat{\Delta}=(\widehat{\alpha}, \widehat{\beta}_x, \widehat{\Omega}_x)$, giving new updates $\Delta^{(t+1)}=(\alpha^{(t+1)}, \beta_x^{(t+1)}, \Omega_x^{(t+1)})=\widehat{\Delta}$.
Supplementary Material~\ref{App:EM} provides a detail derivation of our EM algorithm. 

\section{Results}
\subsection{Simulations}
We conduct simulation studies to evaluate the performance of the proposed Bayesian Gaussian undirected profile graphical model (BPUGM). 
We consider $q = 4$ levels of the covariate $X$, with $X = \{0,1,2,3\}$, and examine performance under varying numbers of nodes $p \in \{20,50,100\}$ and three levels of sparsity $s$ (the larger the value of $s$, the sparser the graphs).
Four structural scenarios are investigated: (i) all four levels have distinct graph structures; (ii) $\{G(0)=G(1)\} \neq \{G(2)=G(3)\}$; (iii) $\{G(0)=G(1)=G(2)\} \neq \{G(3)\}$; and (iv) all levels share the same graph structure. 
For each scenario, data are generated independently for each $x \in \mathcal{X}$ from a multivariate normal distribution $\mathcal{N}(\beta_x,\Sigma_x)$, where $\Sigma_x=\Omega_x^{-1}$ and $\beta_x=\Sigma_x\zeta_x$, with sample size $n_x=50$. 
Additional details on the data generating mechanism are provided in the Supplementary Materials. 

We compare BPUGM with GemBag \citep{YangXinming2021GGEo}, the fused graphical lasso (FGL), and the group graphical lasso (GGL) \citep{danaher2014joint}. 
Graph structure estimation accuracy is assessed using the area under the ROC curve (AUC), accuracy, sensitivity, and specificity of $\Lambda(x)$ for all $x \in \mathcal{X}$. 
Results are averaged over 100 simulated datasets, with standard errors reported. Summaries are presented in Figures~4 to 6 and detailed values in Tables~1 to 4 of the Supplementary Materials.

Across all scenarios, values of $p$, and sparsity levels, BPUGM consistently achieves the highest AUC, indicating superior edge discrimination compared with competing methods. 
The improvement is most pronounced in Scenarios~2 and~3, where partial sharing of graph structures across levels allows BPUGM to effectively borrow strength across groups. 
In Scenario~1, where graph structures are fully distinct, BPUGM remains competitive and stable as $p$ increases. In Scenario~4, where all graphs are identical, BPUGM performs comparably to FGL and GGL, while maintaining slightly higher sensitivity without loss of specificity.
Overall, BPUGM provides a favorable balance between sensitivity and specificity, leading to higher accuracy. 
These results demonstrate the advantages of the proposed Bayesian profile graphical modeling framework for joint estimation of multiple related graph structures under finite-sample conditions.

\begin{figure}
  \includegraphics[width=\linewidth]{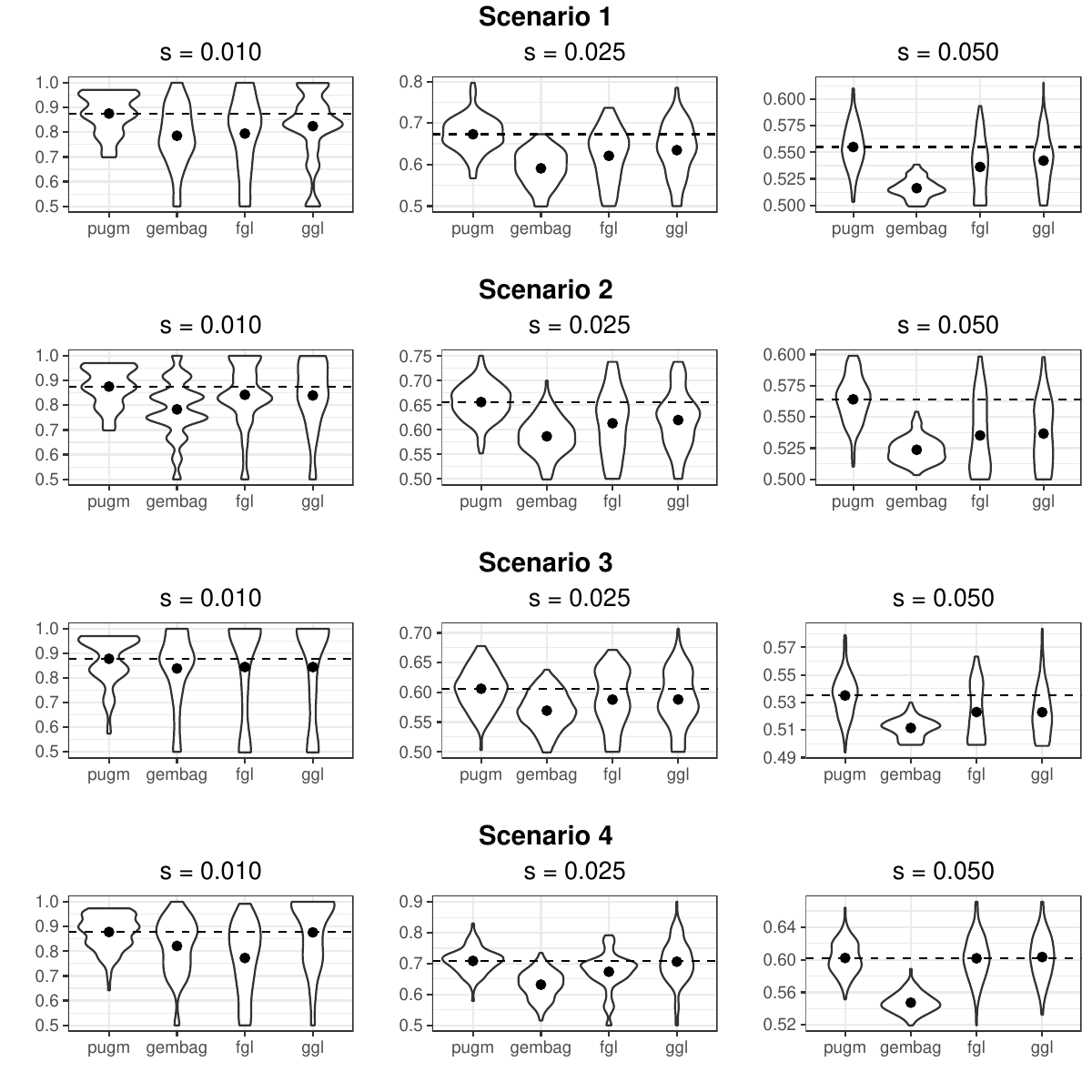}
  \caption{AUC over  \textbf{100} datasets, $P=20$, $K=4$, $N_x=20$ for the four different scenarios}
  \label{fig:P20}
\end{figure}

\begin{figure}
  \includegraphics[width=\linewidth]{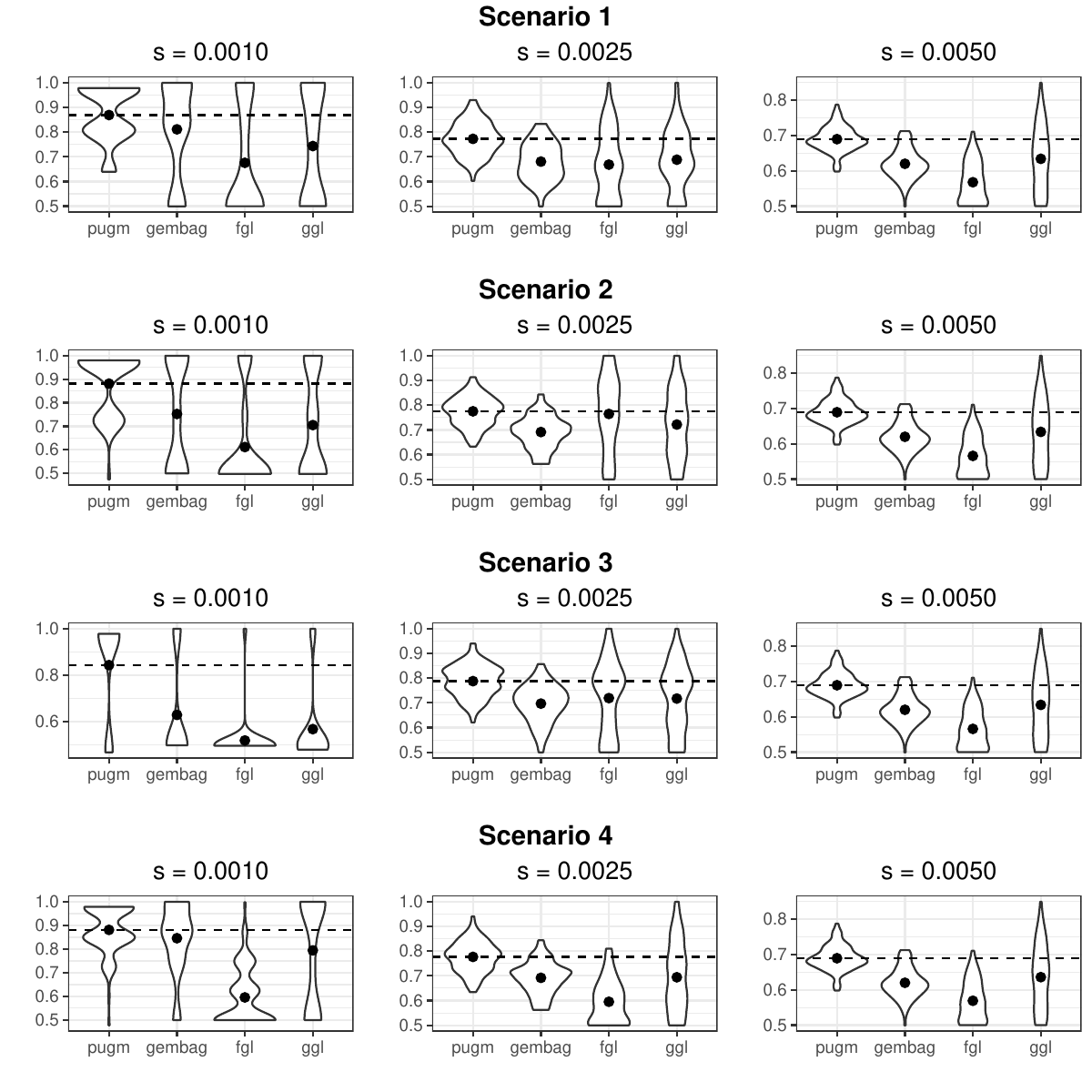}
  \caption{AUC over  \textbf{100} datasets, $P=50$, $K=4$, $N_x=50$ for the four different scenarios}
  \label{fig:P50}
\end{figure}

\begin{figure}
  \includegraphics[width=\linewidth]{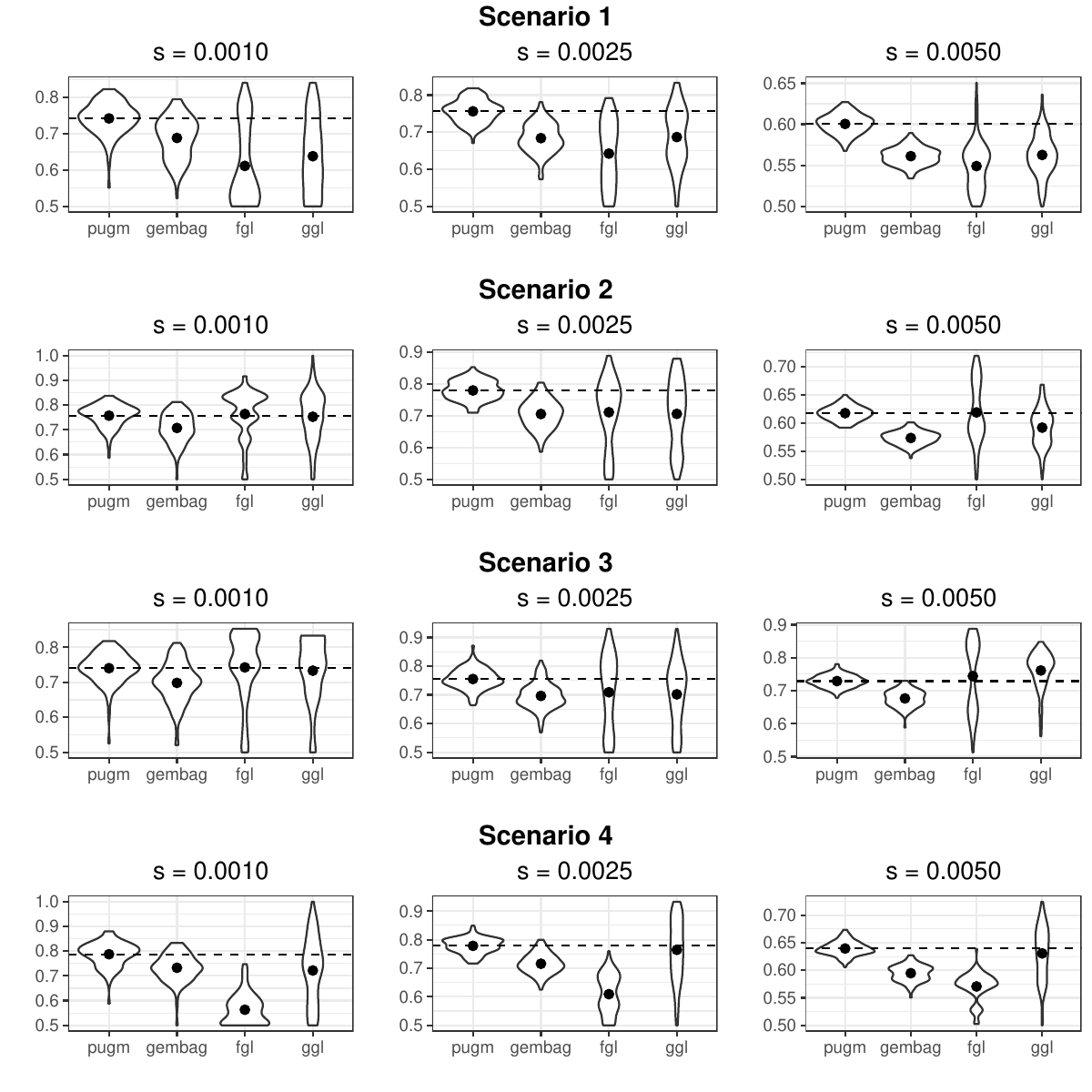}
  \caption{AUC over  \textbf{100} datasets, $P=100$, $K=4$, $N_x=50$ for the four different scenarios}
  \label{fig:P100}
\end{figure}

\subsection{AML protein data}

We analyze protein expression data from patients affected by acute myeloid leukemia (AML) with the goal of reconstructing and comparing protein networks across disease subtypes; comparing the networks for these groups provides insight into the differences in protein signaling that may affect whether treatments for one subtype will be effective in another one.

A set of protein levels, collected using the reverse phase protein array (RPPA) technology, is observed in a sample of 213 newly diagnosed AML patients \citep{kornblau2009functional}\footnote{{http://bioinformatics.mdanderson.org/Supplements/Kornblau-AML-RPPA/aml-rppa.xls}}.  Patients are classified by subtype according to the French-American-British (FAB) classification system. We consider 4 different profiles given by 4 AML subtypes, for which a reasonable sample size is available: M0 (17 subjects), M1 (34 subjects), M2 (68 subjects), and M4 (59 subjects). These profiles, based on criteria including cytogenetics and cellular morphology, show varying prognosis. We expect to observe different protein interactions in the subtypes. We focus on 18 proteins relevant to the apoptosis and cell cycle regulation KEGG pathways \citep{kanehisa2011kegg}. 

Our interest is modelling the effect of the AML subtype on the joint independence structure of the protein levels. Profile undirected graphical models are an encompassing tool that coherently and jointly performs all inferential tasks of interest of learning how the protein dependency structure changes across subtypes as well as the mean protein levels. Therefore, considering the $p=18$ protein levels following a multivariate Gaussian distribution and $q=4$ different profiles of AML, where the levels $x \in \X=\{0,1,2,3\}$ denote the subtypes M0, M1, M2, M4 respectively, we estimate and select the profile undirected graphical model represented in Figure \ref{fig:AML01}. For the sake of comparison, we represent the corresponding multiple-graph in Figure \ref{fig:MultiGemBag}; this graph is arguably harder to read. Most importantly, the many profile specific independencies are obviously missed by the graph.

For instance, from the selected profile graph we learn that for  the profiles $x \in \{0,1\}$, $Y_{\text{AKTp.308}}(x) \bigCI Y_{\text{BCI.2}}(x)$ $|Y_{V\setminus \{\text{AKTp.308,BCI.2}\}}(x)$; for any profile $x \in \X$, $Y_{\text{AKTp.308}}(x) \bigCI Y_{\text{BAD}}(x)|Y_{V\setminus \{\text{AKTp.308,BAD}\}}(x)$ . The level of proteins BAX, GSK3 and XIAP 
are independent to the AML subtypes. 

The selected profile graph has only three square vertices and only three arrows can be removed as all edges are dotted with the exception of one. This means that  the dependence structure of $Y_V$ is expected to substantially vary across profile and  modeling the direct effect of $X$ on single $Y_a \in Y_V$  becomes relevant to capture how the conditional dependence structure of $Y_V$ changes in different profiles. \aap{We then compute the maximal connected components of each profile graph, corresponding to the set of maximal paths. These quantities summarise the heterogeneity in protein connectivity and path structure across profiles. The cardinalities of the resulting sets differ across the four levels, taking values 4, 9, 13 and 12, respectively, indicating variation in structural complexity between profiles.}

The estimated profile graphs obtained with our method exhibit varying levels of sparsity across groups, compared with those from the GemBag approach. 
For groups 0 and 1, the inferred graphs are substantially sparser than those produced by GemBag, suggesting that explicitly modeling the effect of $X$ on $Y_{V|\X}$ may explain a large portion of the dependence structure within $Y_{V|\X}$, thereby reducing the number of residual conditional associations. 

Table \ref{Tab:GemBagVsBPUGM} provides a direct comparison of edge detection results between GemBag and BPUGM. For groups 0 and 1, all edges identified by BPUGM are also detected by GemBag, while for groups 2 and 3 the majority of BPUGM edges are likewise recovered. 
This pattern suggests that BPUGM yields a more parsimonious representation of the conditional dependence structure.
Moreover, group-specific graphs generated by our method (Figure \ref{fig:MultiPUGM}) appear more heterogeneous than those obtained with GemBag (Figure \ref{fig:MultiGemBag}), indicating a greater ability to capture group-specific differences. 

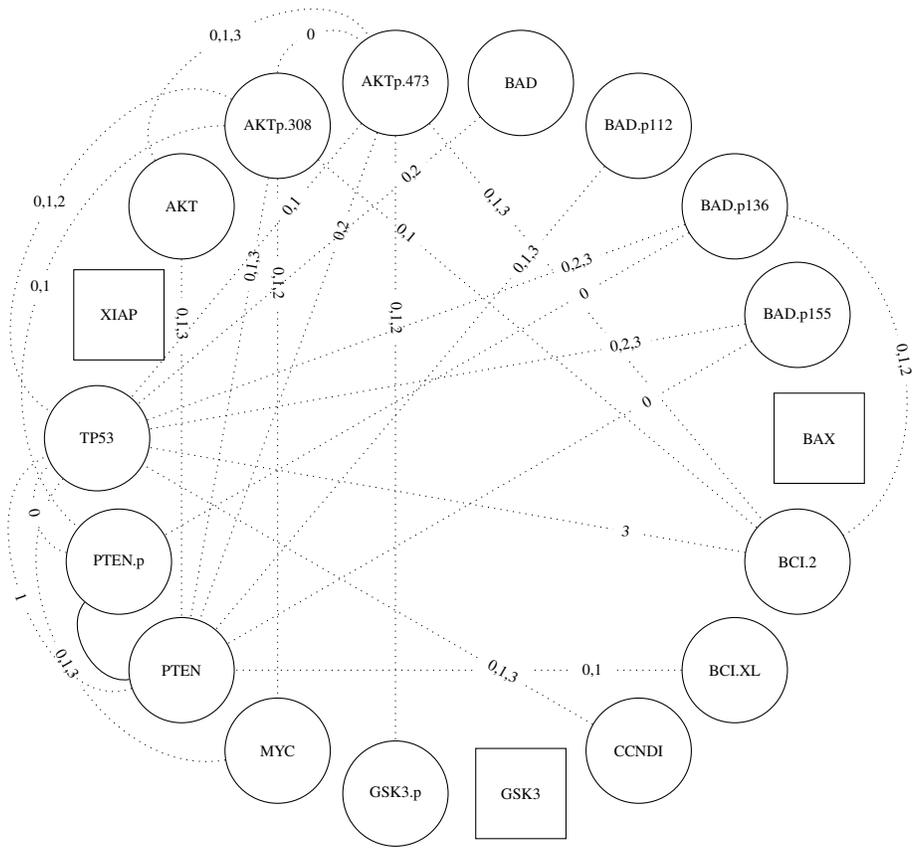
\begin{figure}
	\begin{center}
		\hspace*{-1cm}
		\begin{tikzpicture}[scale=1.2,mylabel/.style={thin,  align=center,fill=white,bend right}]
		
		\node[vc] (1) at (7*360/18: 4cm) {\tiny AKT};
		\node[vc] (2) at (6*360/18: 4cm) {\tiny AKTp.308};
		\node[vc] (3) at (5*360/18: 4cm) {\tiny AKTp.473};
		\node[vc] (4) at (4*360/18: 4cm) {\tiny BAD};
		\node[vc] (5) at (3*360/18: 4cm) {\tiny BAD.p112};
		\node[vc] (6) at (2*360/18: 4cm) {\tiny BAD.p136};
		\node[vc] (7) at (1*360/18: 4cm) {\tiny BAD.p155};
		\node[vs] (8) at (18*360/18: 4cm) {\tiny BAX};
		\node[vc] (9) at (17*360/18: 4cm) {\tiny BCI.2};
		\node[vc] (10) at (16*360/18: 4cm) {\tiny BCI.XL};
		\node[vc] (11) at (15*360/18: 4cm) {\tiny CCNDI};
		\node[vs] (12) at (14*360/18: 4cm) {\tiny GSK3};
		\node[vc] (13) at (13*360/18: 4cm) {\tiny GSK3.p};
		\node[vc] (14) at (12*360/18: 4cm) {\tiny MYC};
		\node[vc] (15) at (11*360/18: 4cm) {\tiny PTEN};
		\node[vc] (16) at (10*360/18: 4cm) {\tiny PTEN.p};
		\node[vc] (17) at (9*360/18: 4cm) {\tiny TP53};
		\node[vs] (18) at (8*360/18: 4cm) {\tiny XIAP};
		
		\draw[ed]  (1) to [bend left=90,pos=0.5] node[mylabel]{\tiny 0,1,3} (3);
		\draw[ed]  (1) to [sloped,pos=0.18] node[mylabel]{\tiny 0,1,3} (15);

		\draw[ed]  (2) to [bend left=70,pos=0.5] node[mylabel]{\tiny 0} (3);
		\draw[ed]  (2) to [sloped,pos=0.2] node[mylabel]{\tiny 0,1} (9);
        \draw[ed]  (2) to [sloped,pos=0.2] node[mylabel]{\tiny 0,1,2} (14);
        \draw[ed]  (2) to [sloped,pos=0.2] node[mylabel]{\tiny 0,1,3} (15);
        \draw[ed]  (2) to [bend right=70,pos=0.5] node[mylabel]{\tiny 0,1} (16);
        \draw[ed]  (2) to [bend right=90,pos=0.5] node[mylabel]{\tiny 0,1,2} (17);
		
		\draw[ed]  (3) to [sloped,pos=0.2] node[mylabel]{\tiny 0,1,3} (9);
        \draw[ed]  (3) to [sloped,pos=0.3] node[mylabel]{\tiny 0,1,2} (13);
        \draw[ed]  (3) to [sloped,pos=0.2] node[mylabel]{\tiny 0,2} (15);
		\draw[ed]  (3) to [sloped,pos=0.3] node[mylabel]{\tiny 0,1} (17);

        \draw[ed]  (4) to [sloped,pos=0.2] node[mylabel]{\tiny 0,2} (17);

        \draw[ed]  (5) to [sloped,pos=0.2] node[mylabel]{\tiny 0,1,3} (15);

       \draw[ed]  (6) to [bend left=70,sloped,pos=0.5] node[mylabel]{\tiny 0,1,2} (9);
		
		\draw[ed]  (6) to [sloped,pos=0.2] node[mylabel]{\tiny 0} (16);
        \draw[ed]  (6) to [sloped,pos=0.2] node[mylabel]{\tiny 0,2,3} (17);
		
		\draw[ed]  (7) to [sloped,pos=0.2] node[mylabel]{\tiny 0} (15);

        \draw[ed]  (7) to [sloped,pos=0.2] node[mylabel]{\tiny 0,2,3} (17);

        \draw[ed]  (9) to [sloped,pos=0.2] node[mylabel]{\tiny 3} (17);

        \draw[ed]  (10) to [sloped,pos=0.2] node[mylabel]{\tiny 0,1} (15);

        \draw[ed]  (11) to [sloped,pos=0.2] node[mylabel]{\tiny 0,1,3} (17);

        \draw[ed]  (14) to [bend left=70,sloped,pos=0.5] node[mylabel]{\tiny 0,1,3} (17);
        
		\draw[ef] [bend left=70] (15) to (16); 
        \draw[ed]  (15) to [bend left=90,sloped,pos=0.5] node[mylabel]{\tiny 1} (17);

        \draw[ed]  (16) to [bend left=70,sloped,pos=0.5] node[mylabel]{\tiny 0} (17);

		\end{tikzpicture}
	\end{center}	
	\caption{The selected profile undirected graph model for protein data}\label{fig:AML01}
\end{figure}

\begin{table}[h]
\centering
\scalebox{.9}{
\begin{tabular}{l|cc|cc|cc|cc|cc}
\hline
 & \multicolumn{8}{c}{BPUGM} \\
 \hline
 & \multicolumn{2}{c|}{$U(0)$}& \multicolumn{2}{c|}{$U(1)$} & \multicolumn{2}{c|}{$U(2)$}& \multicolumn{2}{c|}{$U(3)$}& \multicolumn{2}{c}{Total}\\
 \hline
GemBag & Edge & No edge & Edge & No edge & Edge & No edge & Edge & No edge & Edge & No edge \\
\hline
Edge    & 3   & 12  & 10 &   4  &  14 &   0 &  13 &   0&  40 &  16\\
No edge & 0   & 138 & 0 & 139 &   4 & 135&   3 & 137&   7 & 549\\
\hline
\end{tabular}}
\caption{Comparison of GemBag and BPUGM edge detection}
\label{Tab:GemBagVsBPUGM}
\end{table}

Finally, we repeated the experiment 100 times, randomly removing 10\% of the data each time to assess the robustness of the inferred models. 
For each repetition, we computed the balanced accuracy between the multiple-graphs obtained from the reduced dataset and the multiple-graphs estimated from the full dataset, to be considered as the ground truth for this metric. 
Table \ref{Tab:GemBagVsBPUGM02} shows that BPUGM exhibited higher overall robustness than GemBag, achieving a higher overall balanced accuracy (0.9736 vs. 0.9339). 
In particular, BPUGM substantially outperformed GemBag in the two smallest-sample groups ($U(0)$ and $U(1)$), where it inferred sparser graphs, while performance was comparable between methods in the larger groups ($U(2)$ and $U(3)$).
This consistent improvement demonstrates the advantage of BPUGM in model selection, highlighting its greater stability and ability to recover graph structures closer to the original model under data perturbations.

\begin{table}[ht]
\centering
\begin{tabular}{c|rr|rr}
\hline
  & \multicolumn{2}{c|}{\textbf{BPUGM}}  & \multicolumn{2}{c}{\textbf{GemBag}}\\ 
 \hline
  & Mean & SE & Mean & SE \\ 
  \hline
$U(0)$ & 0.9803 & 0.0425 & 0.8607 & 0.0502 \\ 
$U(1)$ & 0.9622 & 0.0395 & 0.9509 & 0.0394 \\ 
$U(2)$ & 0.9772 & 0.0172 & 0.9736 & 0.0262 \\ 
$U(3)$ & 0.9758 & 0.0235 & 0.9829 & 0.0210 \\ 
Overall & 0.9736 & 0.0143 & 0.9339 & 0.0192 \\ 
   \hline
\end{tabular}
\caption{Mean balanced accuracy and standard error (SE) between the multiple-graphs inferred from reduced datasets and those estimated from the full dataset, computed across 100 repetitions, comparing BPUGM and GemBag. }
\label{Tab:GemBagVsBPUGM02}
\end{table}

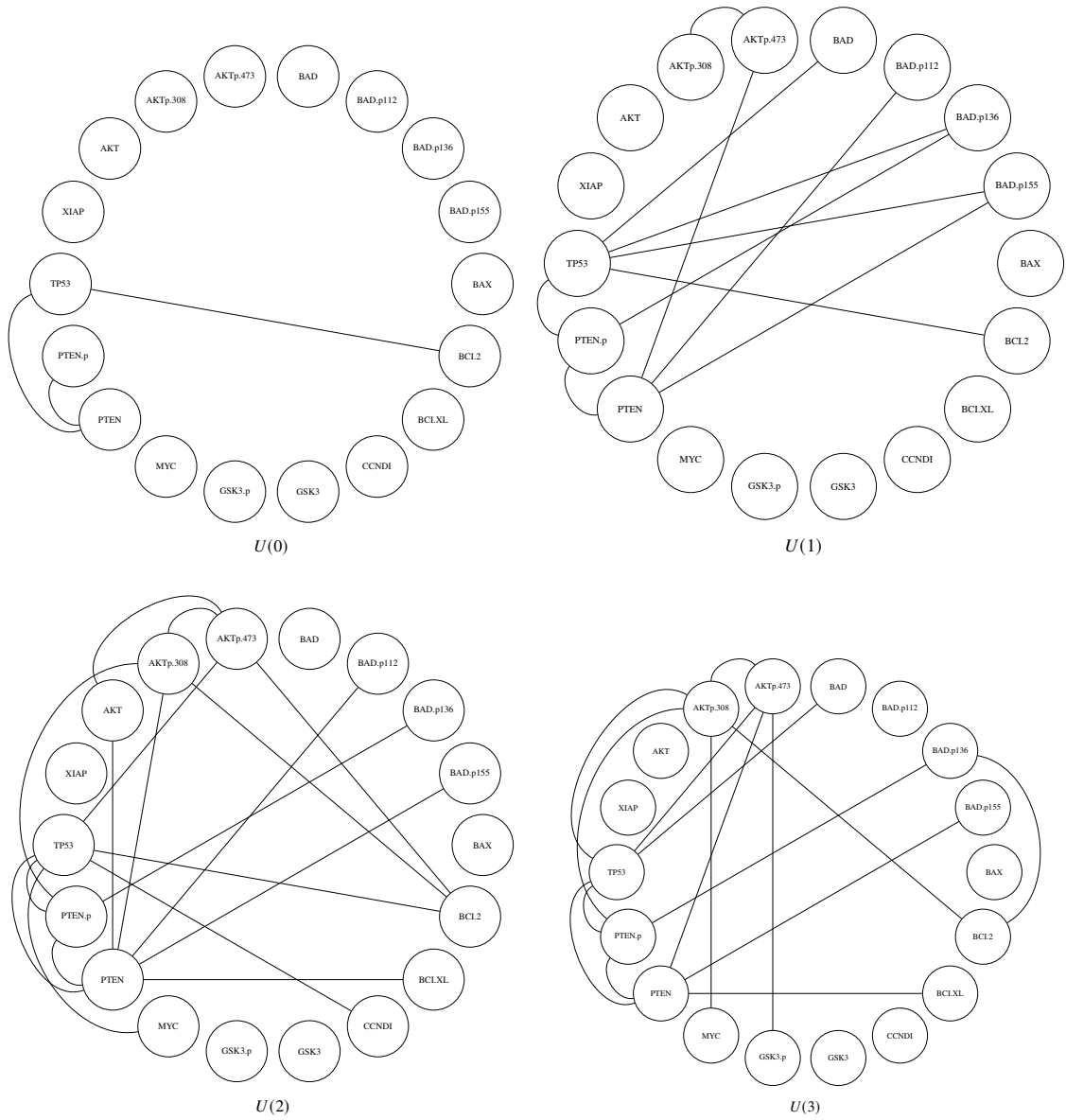
\begin{figure}[t]
	\begin{center}
    \resizebox{.45\textwidth}{!}{
    \begin{tikzpicture}[scale=1.2,mylabel/.style={thin,  align=center,fill=white,bend right}]
		
		\node[vc] (1) at (7*360/18: 4cm) {\tiny AKT};
		\node[vc] (2) at (6*360/18: 4cm) {\tiny AKTp.308};
		\node[vc] (3) at (5*360/18: 4cm) {\tiny AKTp.473};
		\node[vc] (4) at (4*360/18: 4cm) {\tiny BAD};
		\node[vc] (5) at (3*360/18: 4cm) {\tiny BAD.p112};
		\node[vc] (6) at (2*360/18: 4cm) {\tiny BAD.p136};
		\node[vc] (7) at (1*360/18: 4cm) {\tiny BAD.p155};
		\node[vc] (8) at (18*360/18: 4cm) {\tiny BAX};
		\node[vc] (9) at (17*360/18: 4cm) {\tiny BCI.2};
		\node[vc] (10) at (16*360/18: 4cm) {\tiny BCI.XL};
		\node[vc] (11) at (15*360/18: 4cm) {\tiny CCNDI};
		\node[vc] (12) at (14*360/18: 4cm) {\tiny GSK3};
		\node[vc] (13) at (13*360/18: 4cm) {\tiny GSK3.p};
		\node[vc] (14) at (12*360/18: 4cm) {\tiny MYC};
		\node[vc] (15) at (11*360/18: 4cm) {\tiny PTEN};
		\node[vc] (16) at (10*360/18: 4cm) {\tiny PTEN.p};
		\node[vc] (17) at (9*360/18: 4cm) {\tiny TP53};
		\node[vc] (18) at (8*360/18: 4cm) {\tiny XIAP};

        \node at (0,-5) {$U(0)$};
		
        \draw[ef]  (9) to (17);

        \draw[ef]  (15) to [bend left=70]  (16); 
        \draw[ef]  (15) to [bend left=90] (17);
        
        \end{tikzpicture}
    }
    \resizebox{.45\textwidth}{!}{
    \begin{tikzpicture}[scale=1.2,mylabel/.style={thin,  align=center,fill=white,bend right}]
		
		\node[vc] (1) at (7*360/18: 4cm) {\tiny AKT};
		\node[vc] (2) at (6*360/18: 4cm) {\tiny AKTp.308};
		\node[vc] (3) at (5*360/18: 4cm) {\tiny AKTp.473};
		\node[vc] (4) at (4*360/18: 4cm) {\tiny BAD};
		\node[vc] (5) at (3*360/18: 4cm) {\tiny BAD.p112};
		\node[vc] (6) at (2*360/18: 4cm) {\tiny BAD.p136};
		\node[vc] (7) at (1*360/18: 4cm) {\tiny BAD.p155};
		\node[vc] (8) at (18*360/18: 4cm) {\tiny BAX};
		\node[vc] (9) at (17*360/18: 4cm) {\tiny BCI.2};
		\node[vc] (10) at (16*360/18: 4cm) {\tiny BCI.XL};
		\node[vc] (11) at (15*360/18: 4cm) {\tiny CCNDI};
		\node[vc] (12) at (14*360/18: 4cm) {\tiny GSK3};
		\node[vc] (13) at (13*360/18: 4cm) {\tiny GSK3.p};
		\node[vc] (14) at (12*360/18: 4cm) {\tiny MYC};
		\node[vc] (15) at (11*360/18: 4cm) {\tiny PTEN};
		\node[vc] (16) at (10*360/18: 4cm) {\tiny PTEN.p};
		\node[vc] (17) at (9*360/18: 4cm) {\tiny TP53};
		\node[vc] (18) at (8*360/18: 4cm) {\tiny XIAP};

        \node at (0,-5) {$U(1)$};
		
		\draw[ef]  (2) to [bend left=70] (3);
		
		\draw[ef]  (3) to (15);
		
        \draw[ef]  (4) to (17);

        \draw[ef]  (5) to (15);

        \draw[ef]  (6) to (16);
        \draw[ef]  (6) to (17);
		
		\draw[ef]  (7) to (15);
        \draw[ef]  (7) to (17);

        \draw[ef]  (9) to (17);

        \draw[ef]  (15) to [bend left=70] (16); 
        
        \draw[ef]  (16) to [bend left=70] (17);

		\end{tikzpicture}
    }

    \resizebox{.45\textwidth}{!}{
    \begin{tikzpicture}[scale=1.2,mylabel/.style={thin,  align=center,fill=white,bend right}]
		
		\node[vc] (1) at (7*360/18: 4cm) {\tiny AKT};
		\node[vc] (2) at (6*360/18: 4cm) {\tiny AKTp.308};
		\node[vc] (3) at (5*360/18: 4cm) {\tiny AKTp.473};
		\node[vc] (4) at (4*360/18: 4cm) {\tiny BAD};
		\node[vc] (5) at (3*360/18: 4cm) {\tiny BAD.p112};
		\node[vc] (6) at (2*360/18: 4cm) {\tiny BAD.p136};
		\node[vc] (7) at (1*360/18: 4cm) {\tiny BAD.p155};
		\node[vc] (8) at (18*360/18: 4cm) {\tiny BAX};
		\node[vc] (9) at (17*360/18: 4cm) {\tiny BCI.2};
		\node[vc] (10) at (16*360/18: 4cm) {\tiny BCI.XL};
		\node[vc] (11) at (15*360/18: 4cm) {\tiny CCNDI};
		\node[vc] (12) at (14*360/18: 4cm) {\tiny GSK3};
		\node[vc] (13) at (13*360/18: 4cm) {\tiny GSK3.p};
		\node[vc] (14) at (12*360/18: 4cm) {\tiny MYC};
		\node[vc] (15) at (11*360/18: 4cm) {\tiny PTEN};
		\node[vc] (16) at (10*360/18: 4cm) {\tiny PTEN.p};
		\node[vc] (17) at (9*360/18: 4cm) {\tiny TP53};
		\node[vc] (18) at (8*360/18: 4cm) {\tiny XIAP};

        \node at (0,-5) {$U(2)$};
		
		\draw[ef]  (1) to [bend left=90] (3);
		\draw[ef]  (1) to (15);

		\draw[ef]  (2) to [bend left=70] (3);
		\draw[ef]  (2) to (9);
        \draw[ef]  (2) to (15);
        \draw[ef]  (2) to [bend right=70] (16);
        
		\draw[ef]  (3) to (9);
        \draw[ef]  (3) to (17);

        \draw[ef]  (5) to (15);

       \draw[ef]  (6) to (16);
        
		\draw[ef]  (7) to (15);

        \draw[ef]  (9) to (17);

        \draw[ef]  (10) to (15);

        \draw[ef]  (11) to (17);

        \draw[ef]  (14) to [bend left=70] (17);
        
		\draw[ef]  (15) to [bend left=70] (16); 
        \draw[ef]  (15) to [bend left=90] (17);

        \draw[ef]  (16) to [bend left=70] (17); 
        
        \end{tikzpicture}
    }
    \resizebox{.45\textwidth}{!}{
    \begin{tikzpicture}[scale=1.2,mylabel/.style={thin,  align=center,fill=white,bend right}]
		
		\node[vc] (1) at (7*360/18: 4cm) {\tiny AKT};
		\node[vc] (2) at (6*360/18: 4cm) {\tiny AKTp.308};
		\node[vc] (3) at (5*360/18: 4cm) {\tiny AKTp.473};
		\node[vc] (4) at (4*360/18: 4cm) {\tiny BAD};
		\node[vc] (5) at (3*360/18: 4cm) {\tiny BAD.p112};
		\node[vc] (6) at (2*360/18: 4cm) {\tiny BAD.p136};
		\node[vc] (7) at (1*360/18: 4cm) {\tiny BAD.p155};
		\node[vc] (8) at (18*360/18: 4cm) {\tiny BAX};
		\node[vc] (9) at (17*360/18: 4cm) {\tiny BCI.2};
		\node[vc] (10) at (16*360/18: 4cm) {\tiny BCI.XL};
		\node[vc] (11) at (15*360/18: 4cm) {\tiny CCNDI};
		\node[vc] (12) at (14*360/18: 4cm) {\tiny GSK3};
		\node[vc] (13) at (13*360/18: 4cm) {\tiny GSK3.p};
		\node[vc] (14) at (12*360/18: 4cm) {\tiny MYC};
		\node[vc] (15) at (11*360/18: 4cm) {\tiny PTEN};
		\node[vc] (16) at (10*360/18: 4cm) {\tiny PTEN.p};
		\node[vc] (17) at (9*360/18: 4cm) {\tiny TP53};
		\node[vc] (18) at (8*360/18: 4cm) {\tiny XIAP};

        \node at (0,-5) {$U(3)$};
		
		\draw[ef]  (2) to [bend left=70] (3);
		\draw[ef]  (2) to (9);
        \draw[ef]  (2) to (14);
        \draw[ef]  (2) to [bend right=70] (16);
        \draw[ef]  (2) to [bend right=90] (17);
		
		\draw[ef]  (3) to (13);
        \draw[ef]  (3) to (15);
		\draw[ef]  (3) to (17);

        \draw[ef]  (4) to (17);

        \draw[ef]  (6) to [bend left=70] (9);
		\draw[ef]  (6) to (16);
		
		\draw[ef]  (7) to (15);

        \draw[ef]  (10) to (15);

        \draw[ef] (15) to [bend left=70] (16); 
        \draw[ef]  (15) to [bend left=90] (17);

        \draw[ef]  (16) to [bend left=70] (17);

		\end{tikzpicture}
    }
    \end{center}
    \caption{Induced undirected multiple-graph for the profile outcome vector $Y_V(x)$} 
    \label{fig:MultiPUGM}
\end{figure}

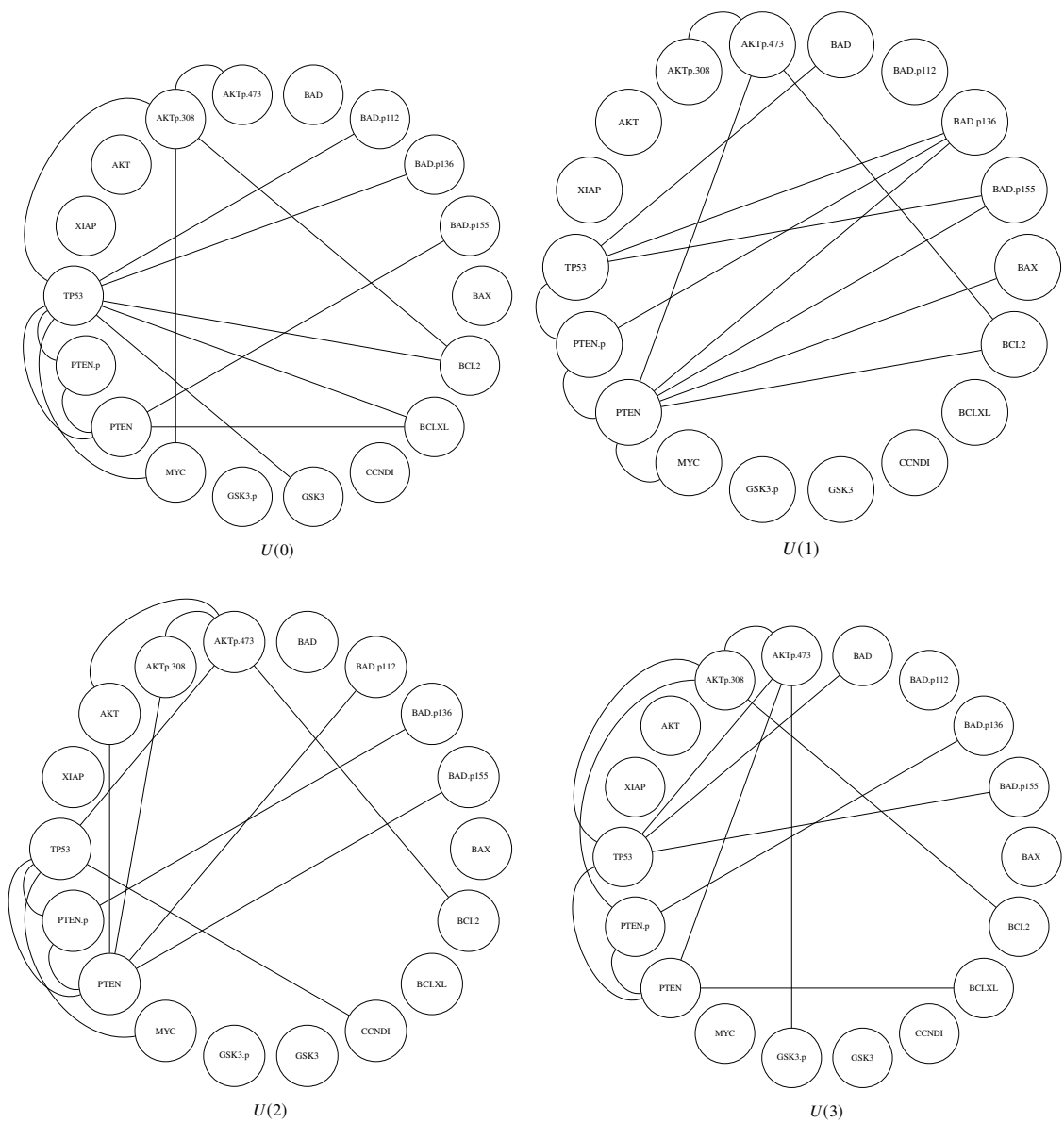
\begin{figure}[t]
	\begin{center}
    \resizebox{.45\textwidth}{!}{
    \begin{tikzpicture}[scale=1.2,mylabel/.style={thin,  align=center,fill=white,bend right}]
		
		\node[vc] (1) at (7*360/18: 4cm) {\tiny AKT};
		\node[vc] (2) at (6*360/18: 4cm) {\tiny AKTp.308};
		\node[vc] (3) at (5*360/18: 4cm) {\tiny AKTp.473};
		\node[vc] (4) at (4*360/18: 4cm) {\tiny BAD};
		\node[vc] (5) at (3*360/18: 4cm) {\tiny BAD.p112};
		\node[vc] (6) at (2*360/18: 4cm) {\tiny BAD.p136};
		\node[vc] (7) at (1*360/18: 4cm) {\tiny BAD.p155};
		\node[vc] (8) at (18*360/18: 4cm) {\tiny BAX};
		\node[vc] (9) at (17*360/18: 4cm) {\tiny BCI.2};
		\node[vc] (10) at (16*360/18: 4cm) {\tiny BCI.XL};
		\node[vc] (11) at (15*360/18: 4cm) {\tiny CCNDI};
		\node[vc] (12) at (14*360/18: 4cm) {\tiny GSK3};
		\node[vc] (13) at (13*360/18: 4cm) {\tiny GSK3.p};
		\node[vc] (14) at (12*360/18: 4cm) {\tiny MYC};
		\node[vc] (15) at (11*360/18: 4cm) {\tiny PTEN};
		\node[vc] (16) at (10*360/18: 4cm) {\tiny PTEN.p};
		\node[vc] (17) at (9*360/18: 4cm) {\tiny TP53};
		\node[vc] (18) at (8*360/18: 4cm) {\tiny XIAP};

        \node at (0,-5) {$U(0)$};
		
        \draw[ef]  (2) to [bend left=70] (3);
		\draw[ef]  (2) to (9);
        \draw[ef]  (2) to (14);
        \draw[ef]  (2) to [bend right=90] (17);
		
		\draw[ef]  (5) to (17);

        \draw[ef]  (6) to (17);
		
		\draw[ef]  (7) to (15);

        \draw[ef]  (9) to (17);

        \draw[ef]  (10) to (15);
        \draw[ef]  (10) to (17);

        \draw[ef]  (12) to (17);

        \draw[ef]  (14) to [bend left=70] (17);
        
		\draw[ef]  (15) to [bend left=70] (16); 
        \draw[ef]  (15) to [bend left=90] (17);

        \draw[ef]  (16) to [bend left=70] (17);     
        \end{tikzpicture}
    }
    \resizebox{.45\textwidth}{!}{
    \begin{tikzpicture}[scale=1.2,mylabel/.style={thin,  align=center,fill=white,bend right}]
		
		\node[vc] (1) at (7*360/18: 4cm) {\tiny AKT};
		\node[vc] (2) at (6*360/18: 4cm) {\tiny AKTp.308};
		\node[vc] (3) at (5*360/18: 4cm) {\tiny AKTp.473};
		\node[vc] (4) at (4*360/18: 4cm) {\tiny BAD};
		\node[vc] (5) at (3*360/18: 4cm) {\tiny BAD.p112};
		\node[vc] (6) at (2*360/18: 4cm) {\tiny BAD.p136};
		\node[vc] (7) at (1*360/18: 4cm) {\tiny BAD.p155};
		\node[vc] (8) at (18*360/18: 4cm) {\tiny BAX};
		\node[vc] (9) at (17*360/18: 4cm) {\tiny BCI.2};
		\node[vc] (10) at (16*360/18: 4cm) {\tiny BCI.XL};
		\node[vc] (11) at (15*360/18: 4cm) {\tiny CCNDI};
		\node[vc] (12) at (14*360/18: 4cm) {\tiny GSK3};
		\node[vc] (13) at (13*360/18: 4cm) {\tiny GSK3.p};
		\node[vc] (14) at (12*360/18: 4cm) {\tiny MYC};
		\node[vc] (15) at (11*360/18: 4cm) {\tiny PTEN};
		\node[vc] (16) at (10*360/18: 4cm) {\tiny PTEN.p};
		\node[vc] (17) at (9*360/18: 4cm) {\tiny TP53};
		\node[vc] (18) at (8*360/18: 4cm) {\tiny XIAP};

        \node at (0,-5) {$U(1)$};
		
		\draw[ef]  (2) to [bend left=70] (3);
		
        \draw[ef]  (3) to (9);
        \draw[ef]  (3) to (15);

        \draw[ef]  (4) to (17);

        \draw[ef]  (6) to (15);
        \draw[ef]  (6) to (16);
        \draw[ef]  (6) to (17);
		
		\draw[ef]  (7) to (15);
        \draw[ef]  (7) to (17);

        \draw[ef]  (8) to (15);
        
        \draw[ef]  (9) to (15);

        \draw[ef]  (14) to [bend left=70] (15);
        
		\draw[ef]  (15) to [bend left=70] (16); 
        
        \draw[ef]  (16) to [bend left=70] (17);     
		
		\end{tikzpicture}
    }

    \resizebox{.45\textwidth}{!}{
    \begin{tikzpicture}[scale=1.2,mylabel/.style={thin,  align=center,fill=white,bend right}]
		
		\node[vc] (1) at (7*360/18: 4cm) {\tiny AKT};
		\node[vc] (2) at (6*360/18: 4cm) {\tiny AKTp.308};
		\node[vc] (3) at (5*360/18: 4cm) {\tiny AKTp.473};
		\node[vc] (4) at (4*360/18: 4cm) {\tiny BAD};
		\node[vc] (5) at (3*360/18: 4cm) {\tiny BAD.p112};
		\node[vc] (6) at (2*360/18: 4cm) {\tiny BAD.p136};
		\node[vc] (7) at (1*360/18: 4cm) {\tiny BAD.p155};
		\node[vc] (8) at (18*360/18: 4cm) {\tiny BAX};
		\node[vc] (9) at (17*360/18: 4cm) {\tiny BCI.2};
		\node[vc] (10) at (16*360/18: 4cm) {\tiny BCI.XL};
		\node[vc] (11) at (15*360/18: 4cm) {\tiny CCNDI};
		\node[vc] (12) at (14*360/18: 4cm) {\tiny GSK3};
		\node[vc] (13) at (13*360/18: 4cm) {\tiny GSK3.p};
		\node[vc] (14) at (12*360/18: 4cm) {\tiny MYC};
		\node[vc] (15) at (11*360/18: 4cm) {\tiny PTEN};
		\node[vc] (16) at (10*360/18: 4cm) {\tiny PTEN.p};
		\node[vc] (17) at (9*360/18: 4cm) {\tiny TP53};
		\node[vc] (18) at (8*360/18: 4cm) {\tiny XIAP};

        \node at (0,-5) {$U(2)$};
		
		\draw[ef]  (1) to [bend left=90] (3);
		\draw[ef]  (1) to (15);

		\draw[ef]  (2) to [bend left=70] (3);
		\draw[ef]  (2) to (15);
        
		\draw[ef]  (3) to (9);
        \draw[ef]  (3) to (17);

        \draw[ef]  (5) to (15);

        \draw[ef]  (6) to (16);
		
		\draw[ef]  (7) to (15);

        \draw[ef]  (11) to (17);

        \draw[ef]  (14) to [bend left=70] (17);
        
		\draw[ef]  (15) to [bend left=70] (16); 
        \draw[ef]  (15) to [bend left=90] (17);

        \draw[ef]  (16) to [bend left=70] (17);     
        \end{tikzpicture}
    }
    \resizebox{.45\textwidth}{!}{
    \begin{tikzpicture}[scale=1.2,mylabel/.style={thin,  align=center,fill=white,bend right}]
		
		\node[vc] (1) at (7*360/18: 4cm) {\tiny AKT};
		\node[vc] (2) at (6*360/18: 4cm) {\tiny AKTp.308};
		\node[vc] (3) at (5*360/18: 4cm) {\tiny AKTp.473};
		\node[vc] (4) at (4*360/18: 4cm) {\tiny BAD};
		\node[vc] (5) at (3*360/18: 4cm) {\tiny BAD.p112};
		\node[vc] (6) at (2*360/18: 4cm) {\tiny BAD.p136};
		\node[vc] (7) at (1*360/18: 4cm) {\tiny BAD.p155};
		\node[vc] (8) at (18*360/18: 4cm) {\tiny BAX};
		\node[vc] (9) at (17*360/18: 4cm) {\tiny BCI.2};
		\node[vc] (10) at (16*360/18: 4cm) {\tiny BCI.XL};
		\node[vc] (11) at (15*360/18: 4cm) {\tiny CCNDI};
		\node[vc] (12) at (14*360/18: 4cm) {\tiny GSK3};
		\node[vc] (13) at (13*360/18: 4cm) {\tiny GSK3.p};
		\node[vc] (14) at (12*360/18: 4cm) {\tiny MYC};
		\node[vc] (15) at (11*360/18: 4cm) {\tiny PTEN};
		\node[vc] (16) at (10*360/18: 4cm) {\tiny PTEN.p};
		\node[vc] (17) at (9*360/18: 4cm) {\tiny TP53};
		\node[vc] (18) at (8*360/18: 4cm) {\tiny XIAP};

        \node at (0,-5) {$U(3)$};
		
		\draw[ef]  (2) to [bend left=70] (3);
		\draw[ef]  (2) to (9);
        \draw[ef]  (2) to [bend right=70] (16);
        \draw[ef]  (2) to [bend right=90] (17);
		
		\draw[ef]  (3) to (13);
        \draw[ef]  (3) to (15);
		\draw[ef]  (3) to (17);

        \draw[ef]  (4) to (17);

        \draw[ef]  (6) to (16);
		
		\draw[ef]  (7) to (17);

        \draw[ef]  (10) to (15);

        \draw[ef]  (15) to [bend left=70] (16); 
        \draw[ef]  (15) to [bend left=90] (17);

		\end{tikzpicture}
    }
    \end{center}
    \caption{Induced multiple-graph from GemBag}
    \label{fig:MultiGemBag}
\end{figure}

\section{Discussion}

We propose a class of graphical models that generalizes both chain graphs and multiple graphs and, for the first time, we establish compatibility between these two types of graph. 
In line with LWF chain graphs, profile undirected graphs can be used for modelling the profile conditional independencies 
resulting from a sequence of non-independent regression models involving all response variables. 
From this perspective, the specification of a class of profile chain graphs represents an interesting generalization to explore profile independencies in a multivariate regression setting.  
The parameterization discussed in Section \ref{sec:Gaussian} for the Gaussian case is quite standard.
Under the assumption of a Multinomial sampling scheme for the multivariate outcome vector, a parameterization based on the log-linear transformation \citep{lauritzen1996graphical} could be used for profile undirected graph models. We developed a Bayesian inferential procedure and a companion EM algorithm for fast inference. Alternative inferential approaches for this type of graph are possible; for example, model comparison within the class of profile undirected graphs can be based on the likelihood ratio test in the case of nested models. These graphical models are smooth and belong to the curved exponential family, so the likelihood ratio test has an asymptotic chi-square distribution. 

We demonstrate the practical utility of this class of models through the analysis of protein network data from multiple subtypes of acute myeloid leukemia. 
In this application, our proposed approach yields more robust network estimates than GemBag, as evidenced by higher balanced accuracy when the experiment is repeated 100 times with 10\% of the data randomly removed at each iteration. 
Although this represents limited empirical evidence, the observed gains in robustness suggest that our method tends to include fewer false-positive edges.
This behavior is consistent with the explicit modeling of external factors acting on the nodes, which allows the method to disentangle genuine conditional dependencies from associations induced by unmodeled heterogeneity. 
As a result, BPUGM learns sparser and more stable graph structures by excluding edges that are not truly supported by the data, whereas GemBag appears to retain weaker residual associations that are less stable across subsamples. 
Overall, these findings indicate that explicitly accounting for external factors can lead to more reliable inference of biologically meaningful network connections.


\section*{Acknowledgement}
The authors gratefully acknowledge Andrea Lazzerini for his contributions.

\section*{Supplementary material}
\label{SM}
The Supplementary Material includes the profile local Markov property, proofs of the theorems and propositions, details of the EM algorithm, the data-generating mechanism used in the simulation studies, and additional simulation results



\bibliographystyle{abbrvnat}
\newpage
\bibliography{paper-ref}

\clearpage
\pagebreak
\begin{center}
\textbf{\large Supplemental Materials: Profile Graphical Models}
\end{center}

\setcounter{equation}{0}
\setcounter{figure}{0}
\setcounter{table}{0}
\setcounter{page}{1}
\makeatletter
\renewcommand{\theequation}{S\arabic{equation}}
\renewcommand{\thefigure}{S\arabic{figure}}
\renewcommand{\bibnumfmt}[1]{[S#1]}
\renewcommand{\citenumfont}[1]{S#1}
\vspace{-1cm}
\appendix


\section{Profile Local Markov property}

The probability distributions $P[Y_{V|\X}]$ of the profile outcome vectors $Y_{V|\X}$ satisfy the \textit{profile undirected Local Markov Property} ($\U$-LMP) wrt the graph $\GU=(V,\EU)$ if, for any vertex $a \in V$

\begin{equation}
\label{localmpU}
Y_{a}(x) \ci Y_{V\setminus\{a ~ \cup ~ nb_x(a)\}}(x)|Y_{nb_x(a)}(x), \qquad x \in \X.
\end{equation}

\section{Proofs}
\begin{proof} \textbf{\textit{of Theorem \ref{t1}.}} 
	Let $\GU=(V,\EU)$ be a profile undirected graph and $D \subseteq V$ be any $x$-disconnected set with $x$-connected components $K_1,\dots,K_r$ such that for every pair $K_i,K_j$ with $i,j=1,\dots,r$, $i \neq j$, for the $\U$-CSMP wrt $\GU$ we have
	\begin{equation}
	\label{cs-g}
	Y_{K_i}(x) \ci Y_{K_j}(x)|Y_{V\setminus \{K_i,K_j\}}(x), \qquad x \in \X.
	\end{equation}
	For any pair $K_i,K_j \subset D$ with $i,j=1,\dots,r$, $i \neq j$, the set $S_{ij} = V\setminus\{ K_i,K_j\}$ is an $x$-separator. Then,  the $\U$-CSMP implies the $\U$-GMP wrt $\GU$. Conversely, consider any $x$-connected set $C$ wrt $\GU$ and let $nb_x(C)=\bigcup_{a \in C}nb_x(a)$ be the neighbour set including $C$ and let $S_C=nb_x(C)\setminus C$   be an $x$-separator for the sets  $C$ and $V\setminus nb_x(C)$, for any $x \in \X$. The $\U$-GMP implies that
	\begin{equation}
	\label{g-cs}
	Y_{C}(x) \ci Y_{V\setminus nb_x(C)}(x)|Y_{S_C}(x), \qquad x \in \X.
	\end{equation}
	Note that $C \cup \{V \setminus nb_x(C)\}$ is an $x$-disconnected set, for $x \in \X$. We distinguish two  cases, whether $V \setminus nb_x(C)$ is $x$-connected or $x$-disconnected. In the first case, the $x$-connected components of $C \cup \{V \setminus nb_x(C)\}$ are $C$ and $V \setminus nb_x(C)$, then the $\U$-CSMP is satisfied. If $V \setminus nb_x(C)$ is $x$-disconnected with $K_1 \cup \dots \cup K_r$ connected components,  the $\U$-GMP also implies that 
	\begin{equation}
	\label{g-cs2}
	Y_{C}(x) \ci Y_{K_1}(x) \ci \dotso \ci Y_{K_r}(x)| Y_{S_C} (x), \qquad x \in \X.
	\end{equation}
	Then  the $\U$-GMP implies the $\U$-CSMP wrt $\GU$.
\end{proof}

\begin{proof} \textbf{\textit{of Proposition \ref{p2}.}}
	Consider a profile undirected graph $\GU=(V,\EU)$ associated to the profile outcome vectors $Y_{V|\X}$ and the induced class $U_{V|\X}$ of multiple undirected graphs. If the probability distributions $P[Y_{V|\X}]$ satisfy the $\U$-CSMP wrt $\GU$, the $\U$-GMP is also satisfied from Theorem \ref{t1}. So, given three disjoint subsets $A,B,C$  of $V$,
	\begin{equation}
	Y_A(x) \bigCI  Y_{B}(x) | Y_{C}(x),
	\end{equation}
	where $A$ and $B$ are $x$-separated by $C$, with $x 	\in \X$. 
	The result follows by Definition \ref{def.multUnd}, since $A$ and $B$ are $x$-separated by $C$ in $\GB$ if and only if they are $x$-separated by $C$ in $U(x) \in U_{V|\X}$, with $x \in \X$.
\end{proof}

\begin{proof} \textbf{\textit{of Proposition \ref{p1}.}} 
	Given an undirected graph $U=(V,E_U)$ associated to a random vector $Y_V$, the global and the pairwise Markov properties are equivalent if the joint probability distribution $P(Y_V)$ is strictly positive; see \citet{lauritzen1996graphical}. The proposition follows by applying this result to the strictly positive probability distribution $P[Y_V(x)]$ of any profile outcome vector $Y_V(x) \in Y_{V|\X}$.
\end{proof}

\begin{proof} \textbf{\textit{of Theorem \ref{thr2}.}}
     { First of all we recall that, given a profile undirected graph $\GU$ associated to the profile outcome vectors $Y_{V|\X}$,  for  Theorem \eqref{t1} the probability distributions $P[Y_{V|\X}]$ satisfies the $\U$-CSMP if and only if the $\U$-GMP is  satisfied. In order to prove the compatibility, we distinguish three cases:
      \begin{itemize}
       \item[a)] if a set $D$ is $x$-disconnected in $\GU$ with $x = \emptyset$, this means that it is a connected set and then the same set $D$ is also connected in any $C_U \in \mathcal{C}_U$ for condition $(i)$ and, therefore, the $\U$-CSMP trivially implies the condition \eqref{eq:LWF global undirected} as no independence statement is implied for $Y_D$ given $X$ and the remaining set of variables;  in this case condition $(ii)$ is  not invoked as no independence statements hold for 
          $Y_D $
          regardless of any $Y_j \in Y_D$ is dependent or independent of $X$ given the remaining set of variables;  
          \item[b)] if a set $D$ is $x$-disconnected in $\GU$ for all $x \in \X$, then the same set $D$ is also disconnected in any $C_U \in \mathcal{C}_U$ for condition $(i)$ and, therefore, the $\U$-CSMP implies the condition \eqref{eq:LWF global undirected} for the class of induced chain graphs; in this case condition $(ii)$ is also not invoked as 
          $$
          Y_{K_1}(x) \ci \dots \ci Y_{K_r}(x)|Y_{V\setminus D}(x), \quad \forall x \in \X \Rightarrow Y_{K_1} \ci \dots \ci Y_{K_r}|\{Y_{V\setminus D}, X\},
          $$
          regardless of any $Y_j \in Y_D$ is dependent or independent of $X$ given the remaining set of variables;
          \item[c)] if a set $D$ is $x$-disconnected in $\GU$ for some $x \in \Z \subset \X$ with $x \neq \emptyset$, then the same set $D$ is  connected in any $C_U \in \mathcal{C}_U$ for condition $(i)$; the $\U$-CSMP implies  condition \eqref{eq:LWF global undirected} as 
          $$
          Y_{K_1}(x) \ci \dots \ci Y_{K_r}(x)|Y_{V\setminus D}(x), \quad  x \in \Z \subset \X \Rightarrow Y_{K_1} \nci \dots \nci Y_{K_r}|\{Y_{V\setminus D}, X\};
          $$
          in this case  $Y_D$ cannot be independent of $X$ given $Y_{V \setminus D}$ as the profile conditional distribution $Y_D(x)|Y_{V \setminus D}(x)$ behaves differently  for any $x \in X$, i.e. $Y_{K_1}(x) \ci \dots \ci Y_{K_r}(x)|Y_{V\setminus D}(x)$ for $ x \in \Z $ and $Y_{K_1}(x) \nci \dots \nci Y_{K_r}(x)|Y_{V\setminus D}(x)$ for $  x \in \X \setminus \Z $;  condition $(ii)$ is then required so that $Y_D \nci X|Y_{V \setminus D}$.
          \end{itemize}
          }
\end{proof}

\section{EM algorithm}
\label{App:EM}

The \textbf{E-step} takes the expectation of the complete data log posterior
\begin{equation}
\label{eqApp:Q1}
\begin{split}
Q(\widehat{\Delta}) = &\Exp_{\theta, R\mid Y, \widehat{\Delta}}  \left\{ \sum_{x=0}^{q-1} \left[ \log \left( \Prob(Y_{v_x}\mid \widehat{\Omega}_x, \widehat{\beta}_x,\widehat{\alpha}) \right) \right]
+\sum_{x=0}^{q-1} \sum_{i<j} \left[ \log\left( \Prob(\widehat{\omega}_{ij,x}\mid r^*_{ij,x} )\right)  \right] 
\right.\\
&+ \left. \sum_{x=0}^{q-1} \sum_{i=1}^p \left[ \log\left( \Prob(\widehat{\beta}_{ix} \mid \theta^*_i) \right) +
\log\left( \Prob(\widehat{\omega}_{ii,x} \mid \tau) \right) \right]\right\} 
\end{split}
\end{equation}

Let $Q=Q(\widehat{\Delta}) = \Exp_{\theta,R \mid Y,\widehat{\Delta}}[\log(\Prob(\widehat{\Delta} ,\theta,R\mid Y_{V_x})]= \langle \log(\Prob(\widehat{\Delta},\theta,R \mid Y_{V_x}) \rangle$ 
\begin{equation}
\begin{split}
\label{eq:Q1}
Q\propto & \left\langle  \frac{1}{2} \sum_{x=0}^{q-1} \left[n_x \log (\det(\widehat{\Omega}_x)) - \text{tr} \left\{ \sum_{k=1}^{n_x} \left( Y_{x,k}-(\widehat{\alpha} + \widehat{\beta}_x) \right)\left( Y_{x,k}-(\widehat{\alpha} + \widehat{\beta}_x) \right)^\top \widehat{\Omega}_x \right\} \right] \right. \\
& \left. +  \sum_{x=0}^{q-1} \sum_{i=1}^p \left[ \log\left( \Prob(\widehat{\beta}_{ix} \mid \theta^*_i) \right) -
 \tau \widehat{\omega}_{ii,x}  \right] + \sum_{x=0}^{q-1} \sum_{i<j} \left[ \log\left( \Prob(\widehat{\omega}_{ij,x}\mid r^*_{ij,x} )\right) \right] \right\rangle,
 \end{split}
\end{equation}

where tr$(A)$ is the trace of matrix $A$ and det$(A)$ its determinant.
Setting a Laplace spike-and-slab prior on $\omega_{ij,x}\mid r_{ij,x}$ as in~\citet{YangXinming2021GGEo}, and a Normal spike-and-slab on $\beta_{ix} \mid \theta_i$ as in~\citet{GeorgeEMVS}, \eqref{eq:Q1} becomes

\begin{equation}
\label{eqApp:Q2}
\begin{split}
Q \propto &   \frac{1}{2} \sum_{x=1}^q \left[n_x \log (\det (\widehat{\Omega}_x)) - \text{tr} \left\{ \sum_{k=1}^{n_x} \left( Y_{x,k}-(\widehat{\alpha} + \widehat{\beta}_x) \right)\left( Y_{x,k}-(\widehat{\alpha} + \widehat{\beta}_x) \right)^\top \widehat{\Omega}_x \right\} \right] \\
& - \sum_{x=1}^q \sum_{i=1}^p \left[ \frac{1}{2} \widehat{\beta}_{ix}^2 \left(\Exp_{\theta\mid \widehat{\Delta}, Y_{V_x}} \left[ \frac{1}{\theta_i \lambda_1}+ \frac{1}{(1-\theta_i) \lambda_0} \right] \right) + \tau \widehat{\omega}_{ii,x}  \right]  \\
&-\sum_{x=1}^q \sum_{i<j} \left[ |\widehat{\omega}_{ij,x}|  \left(\Exp_{R\mid \widehat{\Delta},Y_{V_x}} \left[ \frac{r_{ij,x}}{\nu_1}+ \frac{1-r_{ij,x}}{\nu_0} \right] \right) \right]
\end{split}
\end{equation}

Equation~\eqref{eqApp:Q2} only depends on $(\theta,R)$ through
\begin{equation}
\label{eq:Etheta}
\Exp_{\theta\mid \widehat{\Delta}, Y_{V_x}} \left[ \frac{1}{\theta_i \lambda_1}+ \frac{1}{(1-\theta_i) \lambda_0} \right] = 
\frac{\Exp_{\theta\mid \widehat{\Delta}, Y_{V_x}} [\theta_i]}{\lambda_1}+ \frac{\Exp_{\theta\mid \widehat{\Delta}, Y_{V_x}} [1-\theta_i]}{\lambda_0}=\frac{\theta_i^*}{\lambda_1}+\frac{1-\theta_i^*}{\lambda_0}
\end{equation}
and
 
\begin{equation}
\label{eq:Er}
\Exp_{R\mid \widehat{\Delta},Y_{V_x}} \left[ \frac{r_{ij,x}}{\nu_1}+ \frac{1-r_{ij,x}}{\nu_0} \right] = \frac{ \Exp_{R\mid \widehat{\Delta},Y_{V_x}}[r_{ij,x}]}{\nu_1}+ \frac{ \Exp_{R\mid \widehat{\Delta},Y_{V_x}}[1-r_{ij,x}]}{\nu_0} = \frac{r_{ij,x}^*}{\nu_1}+\frac{1-r_{ij,x}^*}{\nu_0},
\end{equation}

The model hierarchy separates $\theta$ from the data $Y_{V_x}$ through the coefficients $\beta$ and the precision $\Omega$ so that $\Prob[\theta\mid \widehat{\Delta}, Y_{V_x}] = \Prob[\theta\mid \widehat{\Delta}]= \Prob[\theta\mid \beta, {\Omega}]$.
This leads to 
\begin{equation}
\label{eq:EthetaApp}
\begin{split}
\theta_i^*  &=\Exp_{\theta\mid \widehat{\Delta}, Y_{V_x}}[\theta_i]=\Exp_{\theta\mid {\Delta}}[\theta_i]=\Prob[\theta_i=1\mid \beta_i, \omega_{i \cdot}]\\
&= \frac{\Prob[\beta_i, \omega_{i \cdot} \mid \theta_i =1] \Prob[ \theta_i =1]}{\Prob[\beta_i, \omega_{i \cdot} \mid \theta_i =1] \Prob[ \theta_i =1]+\Prob[\beta_i, \omega_{i \cdot} \mid \theta_i =0] \Prob[ \theta_i =0]}\\
&= \frac{\Prob[\beta_i \mid \theta_i =1]\Prob[\omega_{i \cdot} \mid \theta_i =1] p_2}{\Prob[\beta_i \mid \theta_i =1]\Prob[\omega_{i \cdot} \mid \theta_i =1] p_2 + \Prob[\beta_i \mid \theta_i =0]\Prob[\omega_{i \cdot} \mid \theta_i =0] (1-p_2)}\\
&= \frac{p_2 \Prob[\omega_{i \cdot} \mid \theta_i =1] \prod_{x=0}^{q-1}  \Prob_1(\beta_{ix})}{p_2 \Prob[\omega_{i \cdot} \mid \theta_i =1] \prod_{x=0}^{q-1}\Prob_1({\beta}_{ix})+(1-p_2)\Prob[\omega_{i \cdot} \mid \theta_i =0] \prod_{x=0}^{q-1}\Prob_0({\beta}_{ix})}
\end{split}
\end{equation}

The $\Prob[\omega_{i \cdot} \mid \theta_i]$ in \eqref{eq:EthetaApp} is

\begin{equation}
\label{eq:GammaTheta}
\begin{split}
\Prob(\omega_{i \cdot} \mid \theta_i ) =&  \sum_{j} \sum_{\theta_j}  \sum_{r_{ij,0},\dots,r_{ij,q-1}} \sum_{\gamma_{ij}} \left[ \left(\prod_{x=0}^{q-1}\Prob(\omega_{ij,x}\mid r_{ij,x}) \right)  \Prob(r_{ij,0},\dots,r_{ij,x}\mid \theta_i, \theta_j, \gamma_{ij})* \right. \\
& \Prob (\theta_j) \Prob (\gamma_{ij}) \Bigg]  \\
=&\sum_{j} \sum_{\theta_j} \sum_{r_{ij,0},\dots,r_{ij,q-1}} \sum_{\gamma_{ij}} \left[ \left(\prod_{x=0}^{q-1} r_{ij,x} \Prob_1(\omega_{ij,x}) +(1- r_{ij,x}) \Prob_0(\omega_{ij,x}) \right)* \right. \\
&\left( \gamma_{ij} \theta_i \theta_j \left(\prod_{x=0}^{q-1} p_3^{r_{ij,x}} (1-p_3)^{1-r_{ij,x}} \right) + (1-\gamma_{ij}) \left(\prod_{x=0}^{q-1} \delta_0(r_{ij,x})\right)  \right. \\
&+ \gamma_{ij} (1-\theta_i \theta_j) p_4^{r_{ij,0}} (1-p_4)^{1-r_{ij,0}} \delta_{r_{ij,1},\dots, r_{ij,q-1}}(r_{ij,0}) \Bigg)*\\
&\left(p_2^{\theta_j}(1-p_2)^{1-\theta_j}\right)*\left(p_1^{\gamma_{ij}}(1-p_1)^{1-\gamma_{ij}}\right)
 \Bigg]\\
 =&\sum_{j} \sum_{\theta_j} \sum_{r_{ij,0},\dots,r_{ij,q-1}} \left[ \left(\prod_{x=0}^{q-1} r_{ij,x} \Prob_1(\omega_{ij,x}) +(1- r_{ij,x}) \Prob_0(\omega_{ij,x}) \right)* \right.  \\
&\left(p_2^{\theta_j}(1-p_2)^{1-\theta_j}\right)*
\left\{ p_1 \theta_i \theta_j\left(\prod_{x=0}^{q-1} p_3^{r_{ij,x}} (1-p_3)^{1-r_{ij,x}} \right) +  \right. \\
& p_1(1-\theta_i \theta_j) p_4^{r_{ij,0}} (1-p_4)^{1-r_{ij,0}} \delta_{r_{ij,1},\dots,r_{ij,q-1}}(r_{ij,0}) + \\
& \left. (1-p_1)  \left(\prod_{x=0}^{q-1} \delta_0(r_{ij,x})\right)  \right\}
 \end{split}
\end{equation}

Then,
\begin{equation}
\label{eq:GammaTheta1}
\begin{split}
\Prob(\omega_{i \cdot} \mid \theta_i =1 ) =&\sum_{j} \sum_{\theta_j} \sum_{r_{ij,0},\dots,r_{ij,q-1}} \left[ \left(\prod_{x=0}^{q-1} r_{ij,x} \Prob_1(\omega_{ij,x}) +(1- r_{ij,x}) \Prob_0(\omega_{ij,x}) \right)* \right.  \\
&\left(p_2^{\theta_j}(1-p_2)^{1-\theta_j}\right)*
\left\{ p_1 \theta_j\left(\prod_{x=0}^{q-1} p_3^{r_{ij,x}} (1-p_3)^{1-r_{ij,x}} \right) +  \right. \\
& p_1(1- \theta_j) p_4^{r_{ij,0}} (1-p_4)^{1-r_{ij,0}} \delta_{r_{ij,1},\dots,r_{ij,q-1}}(r_{ij,0}) + \\
& \left. (1-p_1)  \left(\prod_{x=0}^{q-1} \delta_0(r_{ij,x})\right)  \right\} \\
=&\sum_{j} \sum_{r_{ij,0},\dots,r_{ij,q-1}} \left[ \left(\prod_{x=0}^{q-1} r_{ij,x} \Prob_1(\omega_{ij,x}) +(1- r_{ij,x}) \Prob_0(\omega_{ij,x}) \right)* \right. \\
&\Bigg\{ p_1 p_2 \left(\prod_{x=0}^{q-1} p_3^{r_{ij,x}} (1-p_3)^{1-r_{ij,x}} \right) + (1-p_1)p_2\left(\prod_{x=0}^{q-1} \delta_0(r_{ij,x})\right)  \\
&+ p_1(1- p_2) p_4^{r_{ij,0}} (1-p_4)^{1-r_{ij,0}} \delta_{r_{ij,1},\dots,r_{ij,q-1}}(r_{ij,0}) \Bigg\}\\
=&\sum_{j} \left[ \prod_{x=0}^{q-1} \Prob_0(\omega_{ij,x}) \Big\{ p_1 p_2 (1-p_3)^q + p_1(1- p_2) (1-p_4) + (1-p_1) p_2\Big\} \right.
\\
&+ \prod_{x=0}^{q-1} \Prob_1(\omega_{ij,x})\left\{p_1 p_2  p_3^q + p_1(1- p_2) p_4 \right\}\\
&+ p_1p_2\sum_{k=1}^{q-1} p_3^k(1-p_3)^{q-k} \sum_{n \in A_q^{(k)}} \prod_{l=0}^{q-1} \Prob_{n_l}(\omega_{ij,l}) \Bigg],
 \end{split}
\end{equation}
with $ A_q^{(k)}:=\{ n =(n_0,\dots,n_{q-1}: n_l \in \{ 0,1\}$ for all $0\leq l \leq q-1$ and $\sum_{l=0}^{q-1} = k ) \}$, the set of $\{0,1\}$-valued binary sequences of length q, with $k$ elements with $\Prob_1(\omega_{ij,k}), k=1,\dots,q-1$. 
We note that the cardinality of $ A_q^{(k)}$ is $\#A_q^{(k)} = \binom{q}{k}$ and $\sum_{k=0}^q \#A_q^{(k)} = \sum_{k=0}^q \binom{q}{k} = 2^q$.

\begin{equation}
\label{eq:GammaTheta0}
\begin{split}
\Prob(\omega_{i \cdot} \mid \theta_i =0 ) =& \sum_{j} \sum_{\theta_j} \sum_{r_{ij,0},\dots,r_{ij,q-1}} \left[ \left(\prod_{x=0}^{q-1} r_{ij,x} \Prob_1(\omega_{ij,x}) +(1- r_{ij,x}) \Prob_0(\omega_{ij,x}) \right)* \right.  \\
&\left(p_2^{\theta_j}(1-p_2)^{1-\theta_j}\right)*
\left\{  (1-p_1)  \left(\prod_{x=0}^{q-1} \delta_0(r_{ij,x})\right)  +  \right. \\
& p_1 p_4^{r_{ij,0}} (1-p_4)^{1-r_{ij,0}} \delta_{r_{ij,1},\dots,r_{ij,q-1}}(r_{ij,0}) \Bigg\} \Bigg] \\
 =& \sum_{j} \sum_{r_{ij,0},\dots,r_{ij,q-1}} \left[ \left(\prod_{x=0}^{q-1} r_{ij,x} \Prob_1(\omega_{ij,x}) +(1- r_{ij,x}) \Prob_0(\omega_{ij,x}) \right)* \right.  \\
& \left\{  (1-p_1)  \left(\prod_{x=0}^{q-1} \delta_0(r_{ij,x})\right)  +  \right. \\
& p_1 p_4^{r_{ij,0}} (1-p_4)^{1-r_{ij,0}} \delta_{r_{ij,1},\dots,r_{ij,q-1}}(r_{ij,0}) \Bigg\} \Bigg] \\
=&\sum_{j} \Bigg[  \prod_{x=0}^{q-1} \Prob_0(\omega_{ij,x}) \left( p_1(1-p_4)+(1-p_1) \right) +\\
&\prod_{x=0}^{q-1} \Prob_1(\omega_{ij,x}) p_1p_4 \Bigg]
 \end{split}
\end{equation}

\newpage
For instance, assuming $q=4$, Equation~\eqref{eq:GammaTheta1} and \eqref{eq:GammaTheta0} are of the form

\begin{equation}
\label{eq:GammaThetaQ4}
\begin{split}
\Prob(\omega_{i \cdot} \mid \theta_i =1 ) =& \sum_{j} \Bigg[ \Prob_0(\omega_{ij,0})\Prob_0(\omega_{ij,1})\Prob_0(\omega_{ij,2})\Prob_0(\omega_{ij,3}) \Big\{ p_1 p_2 (1-p_3)^4 \\
& + p_1(1- p_2) (1-p_4) + (1-p_1) p_2\Big\} \\
&+ \Prob_1(\omega_{ij,0})\Prob_1(\omega_{ij,1})\Prob_1(\omega_{ij,2})\Prob_1(\omega_{ij,3})\Big\{p_1 p_2  p_3^4 + p_1(1- p_2) p_4 \Big\}\\
&+ p_1p_2 \Big\{ p_3(1-p_3)^3 \big[ \Prob_1(\omega_{ij,0})\Prob_0(\omega_{ij,1})\Prob_0(\omega_{ij,2})\Prob_0(\omega_{ij,3}) \\
& + \Prob_0(\omega_{ij,0})\Prob_1(\omega_{ij,1})\Prob_0(\omega_{ij,2})\Prob_0(\omega_{ij,3}) +\Prob_0(\omega_{ij,0})\Prob_0(\omega_{ij,1})\Prob_1(\omega_{ij,2})\Prob_0(\omega_{ij,3}) \\
& + \Prob_0(\omega_{ij,0})\Prob_0(\omega_{ij,1})\Prob_0(\omega_{ij,2})\Prob_1(\omega_{ij,3}) \big] \\
& + p_3^2 (1-p_3)^2 \big[ \Prob_1(\omega_{ij,0})\Prob_1(\omega_{ij,1})\Prob_0(\omega_{ij,2})\Prob_0(\omega_{ij,3}) \\
&+ \Prob_1(\omega_{ij,0})\Prob_0(\omega_{ij,1})\Prob_1(\omega_{ij,2})\Prob_0(\omega_{ij,3}) + \Prob_1(\omega_{ij,0})\Prob_0(\omega_{ij,1})\Prob_0(\omega_{ij,2})\Prob_1(\omega_{ij,3}) \\
& + \Prob_0(\omega_{ij,0})\Prob_1(\omega_{ij,1})\Prob_1(\omega_{ij,2})\Prob_0(\omega_{ij,3}) + \Prob_0(\omega_{ij,0})\Prob_1(\omega_{ij,1})\Prob_0(\omega_{ij,2})\Prob_1(\omega_{ij,3}) \\
&+ \Prob_0(\omega_{ij,0})\Prob_0(\omega_{ij,1})\Prob_1(\omega_{ij,2})\Prob_1(\omega_{ij,3})\big] \\
&+p_3^3(1-p_3) \big[ \Prob_0(\omega_{ij,0})\Prob_1(\omega_{ij,1})\Prob_1(\omega_{ij,2})\Prob_1(\omega_{ij,3}) \\
& + \Prob_1(\omega_{ij,0})\Prob_0(\omega_{ij,1})\Prob_1(\omega_{ij,2})\Prob_1(\omega_{ij,3}) + \Prob_1(\omega_{ij,0})\Prob_1(\omega_{ij,1})\Prob_0(\omega_{ij,2})\Prob_1(\omega_{ij,3}) \\
& + \Prob_1(\omega_{ij,0})\Prob_1(\omega_{ij,1})\Prob_1(\omega_{ij,2})\Prob_0(\omega_{ij,3}) \big] \Big\}
 \Bigg],\\
 \Prob(\omega_{i \cdot} \mid \theta_i =0 ) =&  \sum_{j} \Bigg[ \Prob_0(\omega_{ij,0})\Prob_0(\omega_{ij,1})\Prob_0(\omega_{ij,2})\Prob_0(\omega_{ij,3}) \left( p_1(1-p_4)+(1-p_1) \right) +\\
&\Prob_1(\omega_{ij,0})\Prob_1(\omega_{ij,1})\Prob_1(\omega_{ij,2})\Prob_1(\omega_{ij,3}) p_1p_4 \Bigg].
 \end{split}
\end{equation}

\newpage
The conditional expectation $\Exp_{R\mid \widehat{\Delta},Y_{V_x}}[r_{ij,x}]=r_{ij,x}^*$ is

\begin{equation}
\label{eq:ErAppP0}
\begin{split}
r_{ij,x}^* =&\Exp_{R\mid \widehat{\Delta},Y_{V_x}}[r_{ij,x}]= \Exp_{R\mid \widehat{\Delta}}[r_{ij,x}] =\Prob(r_{ij,x}=1\mid \Delta) \\
=& \Prob(\gamma_{ij}=1\mid \Delta) * \\
&\Big[ \Prob(\theta_i =1 \mid \Delta)\Prob(\theta_j =1 \mid \Delta) \Prob(r_{ij,x}=1\mid \gamma_{ij}=1, \theta_i=1,\theta_j=1, \Delta) +\\
&\Prob(r_{ij,/x}=1\mid\Delta)\Big\{  \Prob(\theta_i =1 \mid \Delta)\Prob(\theta_j =0 \mid \Delta) \Prob(r_{ij,x}=1\mid \gamma_{ij}=1, \theta_i=1,\theta_j=0, \Delta) \\
&+  \Prob(\theta_i =0 \mid \Delta)\Prob(\theta_j =1\mid \Delta)\Prob(r_{ij,x}=1\mid \gamma_{ij}=1, \theta_i=0,\theta_j=1, \Delta)\\
&+\Prob(\theta_i =0 \mid \Delta)\Prob(\theta_j =0 \mid \Delta) \Prob(r_{ij,x}=1\mid \gamma_{ij}=1, \theta_i=0,\theta_j=0, \Delta) \Big\} \Big] 
\end{split}
\end{equation}

$\mathbf{1. \Prob(\gamma_{ij}=1\mid \Delta)}$ in Equation~\eqref{eq:ErAppP0} is
\begin{equation}
\label{eq:Gamma1OmegaApp}
\begin{split}
\Prob(\gamma_{ij}=1\mid\Delta) =& \frac{\Prob(\Delta \mid\gamma_{ij}=1) \Prob(\gamma_{ij}=1)}{\Prob(\Delta \mid\gamma_{ij}=1) \Prob(\gamma_{ij}=1)+\Prob(\Delta \mid\gamma_{ij}=0) \Prob(\gamma_{ij}=0)}\\
=& \frac{\Prob(\Delta\mid\gamma_{ij}=1) p_1}{\Prob(\Delta \mid\gamma_{ij}=1) p_1+\Prob(\Delta \mid\gamma_{ij}=0) (1-p_1)}.
\end{split},
\end{equation}

\begin{equation}
\begin{split}
\Prob(\Delta \mid\gamma_{ij}) =& \sum_{r_{ij,0},\dots,r_{ij,q-1}} \sum_{\theta_i}\sum_{\theta_j} \Bigg[ \Bigg( \prod_{x=0}^{q-1} \Prob(\omega_{ij,x}\mid r_{ij,x})\Bigg) \Prob(r_{ij,0},\dots,r_{ij,x}\mid \theta_i, \theta_j, \gamma_{ij})* \\
& \Bigg( \prod_{x=0}^{q-1} \Prob(\beta_{i,x} \mid \theta_i)
\Prob(\beta_{j,x} \mid \theta_j)\Bigg) \Prob (\theta_i)\Prob (\theta_j) \Bigg]  
\end{split},
\end{equation}

\begin{equation}
\label{eq:OmegaGammaApp}
\begin{split}
\Prob(\Delta \mid\gamma_{ij}) =&\sum_{r_{ij,0},\dots,r_{ij,q-1}} \sum_{\theta_i}\sum_{\theta_j} \Bigg[ \Bigg( \prod_{x=0}^{q-1} r_{ij,x} \Prob_1(\omega_{ij,x}) +(1- r_{ij,x}) \Prob_0(\omega_{ij,x}) \Bigg)*  \\
&\left( \gamma_{ij} \theta_i \theta_j \left(\prod_{x=0}^{q-1} p_3^{r_{ij,x}} (1-p_3)^{1-r_{ij,x}} \right) + (1-\gamma_{ij}) \left(\prod_{x=0}^{q-1} \delta_0(r_{ij,x})\right)  \right. \\
&+ \gamma_{ij} (1-\theta_i \theta_j) p_4^{r_{ij,0}} (1-p_4)^{1-r_{ij,0}} \delta_{r_{ij,1},\dots, r_{ij,q-1}}(r_{ij,0}) \Bigg)*\\
& \Bigg( \prod_{x=0}^{q-1} \Big( \theta_{i} \Prob_1(\beta_{ix}) + (1-\theta_{i}) \Prob_0(\beta_{ix}) \Big) \Big( \theta_{j} \Prob_1(\beta_{jx}) + (1-\theta_{j}) \Prob_0(\beta_{jx}) \Big)\Bigg) *\\
&p_2^{\theta_i}(1-p_2)^{1-\theta_i}p_2^{\theta_j}(1-p_2)^{1-\theta_j}
 \Bigg] \\
  =& \sum_{r_{ij,0},\dots,r_{ij,q-1}} \Bigg[ \Bigg( \prod_{x=0}^{q-1} r_{ij,x} \Prob_1(\omega_{ij,x}) +(1- r_{ij,x}) \Prob_0(\omega_{ij,x}) \Bigg)*\\
&\Bigg\{p_2^2 * \prod_{x=0}^{q-1} \Big(  \Prob_1(\beta_{ix}) \Prob_1(\beta_{jx}) \Big) * \Bigg(\gamma_{ij}   \left(\prod_{x=0}^{q-1} p_3^{r_{ij,x}}  (1-p_3)^{1-r_{ij,x}} \right)\\
& + (1-\gamma_{ij}) \left(\prod_{x=0}^{q-1} \delta_0(r_{ij,x})\right) \Bigg) + \Bigg( (1-p_2)^2 * \prod_{x=0}^{q-1} \Big(  \Prob_0(\beta_{ix}) \Prob_0(\beta_{jx}) \Big) \\
&+p_2(1-p_2) \Bigg( \prod_{x=0}^{q-1} \Big(  \Prob_0(\beta_{ix}) \Prob_1(\beta_{jx}) \Big) +\prod_{x=0}^{q-1} \Big(  \Prob_1(\beta_{ix}) \Prob_0(\beta_{jx}) \Big) \Bigg)\Bigg) *\\
&\Bigg((1-\gamma_{ij}) \left(\prod_{x=0}^{q-1} \delta_0(r_{ij,x})\right) + \gamma_{ij} p_4^{r_{ij,0}} (1-p_4)^{1-r_{ij,0}} \delta_{r_{ij,1},\dots, r_{ij,q-1}}(r_{ij,0}) \Bigg) \Bigg\} \Bigg] 
\end{split} 
\end{equation}
Set $\gamma_{ij}=0$ and let $g_0(\omega_{ij})=\Prob(\Delta \mid\gamma_{ij}=0)$
\begin{equation}
\label{eq:DeltaGamma0}
\begin{split}
g_0(\omega_{ij})=& \sum_{r_{ij,0},\dots,r_{ij,q-1}} \Bigg[ \Bigg( \prod_{x=0}^{q-1} r_{ij,x} \Prob_1(\omega_{ij,x}) +(1- r_{ij,x}) \Prob_0(\omega_{ij,x}) \Bigg)*\\
&\Bigg\{p_2^2 * \prod_{x=0}^{q-1} \Big(  \Prob_1(\beta_{ix}) \Prob_1(\beta_{jx}) \Big) * \prod_{x=0}^{q-1} \delta_0(r_{ij,x}) \\
& + \Bigg(\prod_{x=0}^{q-1} \delta_0(r_{ij,x}) \Bigg) * \Bigg( (1-p_2)^2 * \prod_{x=0}^{q-1} \Big(  \Prob_0(\beta_{ix}) \Prob_0(\beta_{jx}) \Big) \\
&+p_2(1-p_2) \Bigg( \prod_{x=0}^{q-1} \Big(  \Prob_0(\beta_{ix}) \Prob_1(\beta_{jx}) \Big) +\prod_{x=0}^{q-1} \Big(  \Prob_1(\beta_{ix}) \Prob_0(\beta_{jx}) \Big) \Bigg)\Bigg) \Bigg\} \Bigg] \\
=& \prod_{x=0}^{q-1} \Prob_0(\omega_{ij,x})\Bigg[p_2^2 * \prod_{x=0}^{q-1} \Big(  \Prob_1(\beta_{ix}) \Prob_1(\beta_{jx}) \Big) + \\
&  (1-p_2)^2 * \prod_{x=0}^{q-1} \Big(  \Prob_0(\beta_{ix}) \Prob_0(\beta_{jx}) \Big)  +p_2(1-p_2)*\\
& \Bigg( \prod_{x=0}^{q-1} \Big(  \Prob_0(\beta_{ix}) \Prob_1(\beta_{jx}) \Big) +\prod_{x=0}^{q-1} \Big(  \Prob_1(\beta_{ix}) \Prob_0(\beta_{jx}) \Big) \Bigg)\Bigg)\Bigg].
\end{split}
\end{equation}
Set $\gamma_{ij}=1$ and let $g_1(\omega_{ij})=\Prob(\Delta \mid\gamma_{ij}=1)$
\begin{equation}
\begin{split}
g_1(\omega_{ij})=& 
\sum_{r_{ij,0},\dots,r_{ij,q-1}} \Bigg[ \Bigg( \prod_{x=0}^{q-1} r_{ij,x} \Prob_1(\omega_{ij,x}) +(1- r_{ij,x}) \Prob_0(\omega_{ij,x}) \Bigg)*\\
&\Bigg\{p_2^2 * \prod_{x=0}^{q-1} \Big(  \Prob_1(\beta_{ix}) \Prob_1(\beta_{jx}) \Big) *  \left(\prod_{x=0}^{q-1} p_3^{r_{ij,x}}  (1-p_3)^{1-r_{ij,x}} \right)   \\
& + \Bigg( (1-p_2)^2 * \prod_{x=0}^{q-1} \Big(  \Prob_0(\beta_{ix}) \Prob_0(\beta_{jx}) \Big) \\
&+p_2(1-p_2) \Bigg( \prod_{x=0}^{q-1} \Big(  \Prob_0(\beta_{ix}) \Prob_1(\beta_{jx}) \Big) +\prod_{x=0}^{q-1} \Big(  \Prob_1(\beta_{ix}) \Prob_0(\beta_{jx}) \Big) \Bigg)\Bigg) *\\
&\Bigg(p_4^{r_{ij,0}} (1-p_4)^{1-r_{ij,0}} \delta_{r_{ij,1},\dots, r_{ij,q-1}}(r_{ij,0}) \Bigg) \Bigg\} \Bigg] 
\nonumber
\end{split} 
\end{equation}
\begin{equation}
\begin{split}
\label{eq:DeltaGamma1}
g_1(\omega_{ij})=& \prod_{x=0}^{q-1} \Prob_1(\omega_{ij,x}) *\Bigg\{ p_3^q *p_2^2 * \prod_{x=0}^{q-1} \Big(  \Prob_1(\beta_{ix}) \Prob_1(\beta_{jx}) \Big)\\
& + p_4 * \Bigg( (1-p_2)^2 * \prod_{x=0}^{q-1} \Big(  \Prob_0(\beta_{ix}) \Prob_0(\beta_{jx}) \Big) \\
&+p_2(1-p_2) \Bigg( \prod_{x=0}^{q-1} \Big(  \Prob_0(\beta_{ix}) \Prob_1(\beta_{jx}) \Big) +\prod_{x=0}^{q-1} \Big(  \Prob_1(\beta_{ix}) \Prob_0(\beta_{jx}) \Big) \Bigg)\Bigg) \Bigg\} \\
 &+ \prod_{x=0}^{q-1} \Prob_0(\omega_{ij,x}) *\Bigg\{ (1-p_3)^q *p_2^2 * \prod_{x=0}^{q-1} \Big(  \Prob_1(\beta_{ix}) \Prob_1(\beta_{jx}) \Big)\\
& + (1-p_4) * \Bigg( (1-p_2)^2 * \prod_{x=0}^{q-1} \Big(  \Prob_0(\beta_{ix}) \Prob_0(\beta_{jx}) \Big) \\
&+p_2(1-p_2) \Bigg( \prod_{x=0}^{q-1} \Big(  \Prob_0(\beta_{ix}) \Prob_1(\beta_{jx}) \Big) +\prod_{x=0}^{q-1} \Big(  \Prob_1(\beta_{ix}) \Prob_0(\beta_{jx}) \Big) \Bigg)\Bigg) \Bigg\} \\
&+ p_2^2 \prod_{x=0}^{q-1} \Big(  \Prob_0(\beta_{ix}) \Prob_0(\beta_{jx}) \Big) \sum_{k=1}^{q-1} p_3^k(1-p_3)^{q-k} \sum_{n \in A_q^{(k)}} \prod_{l=0}^{q-1} \Prob_{n_l}(\omega_{ij,l}).
\end{split} 
\end{equation}


\newpage
$\mathbf{2. \Prob(r_{ij,x}=1\mid \gamma_{ij}=1, \theta_i, \theta_j, \Delta)=h_{\theta_i,\theta_j}(\omega_{ij,x})}$ is equal to
\begin{equation}
\label{eq:ProbsRapp}
\begin{split}
=&\Prob(\Delta \mid r_{ij,x}=1,\gamma_{ij}=1,\theta_i,\theta_j)\Prob(r_{ij,x}=1 \mid \gamma_{ij}=1,\theta_i,\theta_j)\\
&\Big[ \Prob(\Delta \mid r_{ij,x}=1,\gamma_{ij}=1,\theta_i,\theta_j)\Prob(r_{ij,x}=1 \mid \gamma_{ij}=1,\theta_i,\theta_j) \\
&+ \Prob(\Delta \mid r_{ij,x}=0,\gamma_{ij}=1,\theta_i,\theta_j)\Prob(r_{ij,x}=0 \mid \gamma_{ij}=1,\theta_i,\theta_j) \Big]^{-1} \\
=&\Prob(\omega_{ij,x} \mid r_{ij,x}=1,\gamma_{ij}=1,\theta_i,\theta_j)\Prob(r_{ij,x}=1 \mid \gamma_{ij}=1,\theta_i,\theta_j)\\
&\Big[ \Prob(\omega_{ij,x} \mid r_{ij,x}=1,\gamma_{ij}=1,\theta_i,\theta_j)\Prob(r_{ij,x}=1 \mid \gamma_{ij}=1,\theta_i,\theta_j) \\
&+ \Prob(\omega_{ij,x} \mid r_{ij,x}=0,\gamma_{ij}=1,\theta_i,\theta_j)\Prob(r_{ij,x}=0 \mid \gamma_{ij}=1,\theta_i,\theta_j) \Big]^{-1} \\
=&\Prob(\omega_{ij,x} \mid r_{ij,x}=1)\Prob(r_{ij,x}=1 \mid \gamma_{ij}=1,\theta_i,\theta_j)\\
&\Big[ \Prob(\omega_{ij,x} \mid r_{ij,x}=1)\Prob(r_{ij,x}=1 \mid \gamma_{ij}=1,\theta_i,\theta_j) \\
&+ \Prob(\omega_{ij,x} \mid r_{ij,x}=0)\Prob(r_{ij,x}=0 \mid \gamma_{ij}=1,\theta_i,\theta_j) \Big]^{-1} \\
\end{split}
\end{equation}
Setting $\Prob(r_{ij,0},\dots,r_{ij,1} \mid \gamma_{ij}, \theta_i, \theta_j)$ as defined in Equation~\eqref{eq:rxq2} we compute
\begin{equation}
\label{eq:rijxMarg}
\begin{split}
\Prob(r_{ij,x} \mid \gamma_{ij}=1, \theta_i, \theta_j) =& \sum_{r_{ij,/x}} \Bigg[\theta_i \theta_j \left(\prod_{x=0}^{q-1} p_3^{r_{ij,x}} (1-p_3)^{1-r_{ij,x}} \right)\\
& + (1-\theta_i \theta_j) p_4^{r_{ij,x}} (1-p_4)^{1-r_{ij,x}} \delta_{r_{ij,/ x}}(r_{ij,x}) \Bigg] \\
=& \theta_i \theta_j p_3^{r_{ij,x}} (1-p_3)^{1-r_{ij,x}}\\ 
& + (1-\theta_i \theta_j)\sum_{r_{ij,/x}}  \Big[ p_4^{r_{ij,x}} (1-p_4)^{1-r_{ij,x}} \delta_{r_{ij,/ x}}(r_{ij,x}) \Big]
\\
=& \theta_i \theta_j p_3^{r_{ij,x}} (1-p_3)^{1-r_{ij,x}}+ (1-\theta_i \theta_j) \Big[p_4^{r_{ij,x}} + (1-p_4)^{(1-r_{ij,x})} \Big]\\
=& \theta_i \theta_j \text{Bernoulli} (r_{ij,x} \mid p_3) +
(1-\theta_i \theta_j)  \text{Bernoulli} (r_{ij,x} \mid p_4)
\end{split}
\end{equation}

\begin{equation}
\label{eq:rijxMarg0}
\begin{split}
\Prob(r_{ij,x} \mid \gamma_{ij}=0, \theta_i, \theta_j) =& \sum_{r_{ij,/x}} \prod_{x = 0}^{q-1}\delta_0(r_{ij,x})
\end{split}
\end{equation}

Substituting Equation~\eqref{eq:rijxMarg}, Equation~\eqref{eq:ProbsRapp} 
\begin{equation}
\label{eq:ProbsRapp2}
h_{\theta_i,\theta_j}(\omega_{ij,x}) =\frac{\Prob_1(\omega_{ij,x}) \Big[ \theta_i \theta_j p_3 + (1-\theta_i \theta_j) p_4 \Big]}{\Prob_1(\omega_{ij,x}) \Big[ \theta_i \theta_j p_3 + (1-\theta_i \theta_j) p_4 \Big] + \Prob_0(\omega_{ij,x}) \Big[ \theta_i \theta_j (1-p_3) + (1-\theta_i \theta_j) (1-p_4) \Big]} 
\end{equation}

Thus $r_{ij,x}^*$ is equal to

\begin{equation}
\label{eq:ErAppP2}
\begin{split}
r_{ij,x}^* =& \Prob(\gamma_{ij}=1\mid \Delta) * \\
&\Big[ \Prob(\theta_i =1 \mid \Delta)\Prob(\theta_j =1 \mid \Delta) \Prob(r_{ij,x}=1\mid \gamma_{ij}=1, \theta_i=1,\theta_j=1, \Delta) +\\
&\Prob(r_{ij,/x}=1\mid\Delta)\Big\{  \Prob(\theta_i =1 \mid \Delta)\Prob(\theta_j =0 \mid \Delta) \Prob(r_{ij,x}=1\mid \gamma_{ij}=1, \theta_i=1,\theta_j=0, \Delta) \\
&+  \Prob(\theta_i =0 \mid \Delta)\Prob(\theta_j =1\mid \Delta)\Prob(r_{ij,x}=1\mid \gamma_{ij}=1, \theta_i=0,\theta_j=1, \Delta)\\
&+\Prob(\theta_i =0 \mid \Delta)\Prob(\theta_j =0 \mid \Delta) \Prob(r_{ij,x}=1\mid \gamma_{ij}=1, \theta_i=0,\theta_j=0, \Delta) \Big\} \Big] \\
=& \gamma^*_{ij} \Bigg[ \theta^*_i  \theta^*_j h_1(\omega_{ij,x}) + r^*_{ij,/x} h_{0}(\omega_{ij,x}) \Big\{ \theta^*_i (1- \theta^*_j) + (1-\theta^*_i) \theta^*_j + (1- \theta^*_i) (1- \theta^*_j) \Big\} \Bigg]
\end{split}
\end{equation}
In which
\begin{equation}
\begin{split}
\gamma_{ij}^*= \frac{g_1(\omega_{ij}) p_1}{g_1(\omega_{ij}) p_1+g_0(\omega_{ij}) (1-p_1)}
\end{split},
\end{equation}

From Equation~\eqref{eq:ProbsRapp2} we see that $h_{1,0}(\omega_{ij,x})=h_{0,1}(\omega_{ij,x}) =h_{0,0}(\omega_{ij,x})  =h_{0}(\omega_{ij,x})$. We then defined $h_{1,1}(\omega_{ij,x})=h_{1}(\omega_{ij,x})$, getting: 
\begin{equation}
\begin{split}
h_{1}(\omega_{ij,x}) &=\frac{\Prob_1(\omega_{ij,x}) p_3}{\Prob_1(\omega_{ij,x}) p_3  + \Prob_0(\omega_{ij,x})(1-p_3)} \\
h_{0}(\omega_{ij,x}) &=\frac{\Prob_1(\omega_{ij,x})  p_4}{\Prob_1(\omega_{ij,x})  p_4 + \Prob_0(\omega_{ij,x})  (1-p_4)} 
\end{split}
\end{equation}

In the \textbf{M-step} we maximise $Q(\Delta)$ w.r.t. $(\Omega_x, \beta_x, \alpha)$. 
Without lost of generality, we consider that we have mean centered observations before proceeding, i.e.\ $\alpha = \{0,\dots, 0 \}$.

Let $Y_x \in \R^{n \times p}$ a matrix with column $k$ equal to the $Y_{x,k}$ for individual $k$, $\beta \in \R^{p \times 1} =[\beta_{ix}]_{i \in V} $ and $\1 \in \R^{1 \times n_x}$ a column vector of ones. 
Setting $\Exp_{\theta\mid \widehat{\Delta}, Y_{V_x}} \left[ \frac{1}{\theta_i \lambda_1}+ \frac{1}{(1-\theta_i) \lambda_0} \right] =\frac{\theta_i^*}{\lambda_1}+\frac{1-\theta_i^*}{\lambda_0}$ and $\Exp_{R\mid \widehat{\Delta},Y_{V_x}} \left[ \frac{r_{ij,x}}{\nu_1}+ \frac{1-r_{ij,x}}{\nu_0} \right] = \frac{r_{ij,x}^*}{\nu_1}+\frac{1-r_{ij,x}^*}{\nu_0}$, Equation~\eqref{eqApp:Q2} can be seen as:

\begin{equation}
\begin{split}
Q \propto &   \frac{1}{2} \sum_{x=0}^{q-1} \left[n_x \log (\det (\widehat{\Omega}_x)) - \text{tr} \left\{  \left( Y_x^{\top} - \1 \widehat{\beta}_x \right)  \left( Y_x^{\top} - \1 \widehat{\beta}_x \right)^\top \widehat{\Omega}_x \right\} \right] \\
& - \sum_{x=0}^{q-1} \sum_{i=1}^p \left[ \frac{1}{2} \widehat{\beta}_{ix}^2 \left(\frac{\theta_i^*}{\lambda_1}+\frac{1-\theta_i^*}{\lambda_0} \right) + \tau \widehat{\omega}_{ii,x}  \right]  \\
&-\sum_{x=0}^{q-1} \sum_{i<j} \left[ |\widehat{\omega}_{ij,x}|  \left(\frac{r_{ij,x}^*}{\nu_1}+\frac{1-r_{ij,x}^*}{\nu_0} \right) \right] \nonumber
\end{split}
\end{equation}

\begin{equation}
\begin{split}
Q \propto
& \frac{1}{2} \sum_{x=0}^{q-1} \left[n_x \log (\det (\widehat{\Omega}_x)) - \text{tr} \left\{  \left( Y_x^{\top} Y_x - 2  \widehat{\beta}_x \1 Y_x^{\top} -  \widehat{\beta}_x \1 \1^{\top} \widehat{\beta}^{\top}_x \right) \widehat{\Omega}_x \right\} \right] \\
& - \sum_{x=0}^{q-1} \sum_{i=1}^p \left[ \frac{1}{2} \widehat{\beta}_{ix}^2 \left(\frac{\theta_i^*}{\lambda_1}+\frac{1-\theta_i^*}{\lambda_0} \right) + \tau \widehat{\omega}_{ii,x}  \right]  \\
&-\sum_{x=0}^{q-1} \sum_{i<j} \left[ |\widehat{\omega}_{ij,x}|  \left(\frac{r_{ij,x}^*}{\nu_1}+\frac{1-r_{ij,x}^*}{\nu_0} \right) \right]
\end{split}
\end{equation}

To $Q$ w.r.t. $\beta_x$, we set the partial derivative $\beta_x$, to 0:

\begin{equation}
\label{eqApp:Dbeta}
\frac{\partial Q}{\partial \beta_x} = \widehat{\Omega}^\top_x Y_x^\top \1^\top - \widehat{\Omega}^\top_x \widehat{\beta}_x \1 \1^\top - \diag \left \{ \frac{\theta_1^*}{\lambda_1}+\frac{1-\theta_1^*}{\lambda_0} \dots  \frac{\theta_p^*}{\lambda_1}+\frac{1-\theta_p^*}{\lambda_0} \right\} \widehat{\beta}_x =0 ,
\end{equation}
solving Equation~\eqref{eqApp:Dbeta}

\begin{equation}
\label{eqApp:Mbeta}
\widehat{\beta}_x = \left(n_x \widehat{\Omega}_x +  D_{\Theta^*} \right)^{-1} \widehat{\Omega}_x Y_x^\top \1^\top ,
\end{equation}
with $D_{\Theta^*}  = \diag \left \{ \frac{\theta_1^*}{\lambda_1}+\frac{1-\theta_1^*}{\lambda_0} \dots  \frac{\theta_p^*}{\lambda_1}+\frac{1-\theta_p^*}{\lambda_0} \right\} $.
Equation~\eqref{eqApp:Mbeta} has the form of a ridge regression estimator with penalty $ D_{\Theta^*}$ .

Maximising Q w.r.t $\widehat{\Omega}_x$ for each $x=0,\dots,q-1$, implies optimizing the following objective function:
\begin{equation}
\begin{split}
Q(\widehat{\Omega}_x)= 
& \frac{n_x}{2}\log (\det (\widehat{\Omega}_x)) - \frac{n_x}{2} \text{tr} \left\{ \widehat{S}_x  \widehat{\Omega}_x \right\} - \sum_{i=1}^p  \tau \widehat{\omega}_{ii,x} -\sum_{i<j} \left[ |\widehat{\omega}_{ij,x}|  \left(\frac{r_{ij,x}^*}{\nu_1}+\frac{1-r_{ij,x}^*}{\nu_0} \right) \right],
\end{split}
\end{equation}
with $\widehat{S}_x = \frac{1}{n_x} \sum_{k=1}^{n_x}(Y_{x,k}-\widehat{\beta}_x)(Y_{x,k}-\widehat{\beta}_x)^\top$. 

We optimize $Q(\widehat{\Omega}_x)$ subject to the constraints that $\widehat{\Omega}_x \succ 0$ and $|| \widehat{\Omega}_x ||_2 \leq B $, with a reasonably large $B$ to obtain an objective function $Q(\widehat{\Omega}_x)$ strictly convex, and guaranteed that the local solution $\widehat{\Omega}_x$ is the unique solution as in \citet{YangXinming2021GGEo}.  
In order to optimize $Q(\widehat{\Omega}_x)$, we follow the algorithm suggested in \citet{GanLingrui2019BRfG}.

\section{Simulations: Data generating mechanism}
We generate observations from a set $Y_{V|\X}$ of random vectors associated to a profile undirected graph $\GU$ with $p=20,50$ and $100$ nodes and $q=4$ levels of $X$, such that $x \in \X = \{0,1,2,3\}$. 
Following \citet{peterson2015bayesian}, we first construct $\Omega_0$, the precision matrix of the baseline level $x=0$. We set $\Omega_0$ to be a $p \times p$ symmetric matrix with main diagonal entries $\omega_{aa,0}=a$, with $a=1, \dots, p$, and off-diagonal entries $\omega_{(a+1)a,0}=\omega_{a(a+1),0}=0.5$ with $a=1, \dots, 19$  and  $\omega_{(a+2)a,0}=\omega_{a(a+2),0}=0.4$ with $a=1, \dots, 18$ . For all $a \in V$, we set  both $\alpha_{a}$ and $\zeta_{ax}$ to zero. For $x=\{1,2,3\}$, we also set $\zeta_{ax}=0$ for $a=5, \dots 20$ and $\zeta_{ax}=1$ for $a=1, \dots 4$, i.e., the external factor $X$ affects only the first four response variables. The remaining precision matrices $\Omega_x$ for $x=\{1,2,3\}$ are obtained as follow, first we set $\Omega_x = \Omega_0$, then with probability 0.5 we set to zero its non-zero entries.
We then change the sparcity level of the precision matrices, varying the $s$ parameter from 0.0010 to 0.0050 to have increasing number of non-zero elements. 
Data are generated by drawing a random sample of size $n_x = 50$ from the distribution $\mathcal{N}(\beta_{x},\Sigma_x)$ where $\Sigma_x=\Omega_x^{-1}$ and $\beta_{x}=\Sigma_x\zeta_{x}$, for all $x \in \X$. 
\section{More results}
\begin{table}[ht]
\centering
\caption{Accuracy, sensitivity, specificity, balanced accuracy and AUC over  \textbf{100} datasets, K=4, N=50 Scenario 1: different graph}
\scalebox{.85}{
\begin{tabular}{|c|c|c|c|c|c|c|c|c|c|c|c|}
\multicolumn{1}{l}{}                                       &\multicolumn{1}{l|}{}                                       & \multicolumn{2}{c|}{\cellcolor[HTML]{1F4E78}{\color[HTML]{FFFFFF} \textbf{Accuracy}}}                                                                               & \multicolumn{2}{c|}{\cellcolor[HTML]{1F4E78}{\color[HTML]{FFFFFF} \textbf{Sensitivity}}}                                                                      & \multicolumn{2}{c||}{\cellcolor[HTML]{1F4E78}{\color[HTML]{FFFFFF} \textbf{Specificity}}}                                                                     & \multicolumn{2}{c|}{\cellcolor[HTML]{1F4E78}{\color[HTML]{FFFFFF} \textbf{AUC}}}                                                                      \\ 
\multicolumn{1}{l}{}                                       & \multicolumn{1}{l|}{}                                       & \multicolumn{1}{c|}{\cellcolor[HTML]{1F4E78}{\color[HTML]{FFFFFF} \textbf{Mean}}} & \multicolumn{1}{c|}{\cellcolor[HTML]{1F4E78}{\color[HTML]{FFFFFF} \textbf{SE}}} & \multicolumn{1}{c|}{\cellcolor[HTML]{1F4E78}{\color[HTML]{FFFFFF} \textbf{Mean}}} & \multicolumn{1}{c|}{\cellcolor[HTML]{1F4E78}{\color[HTML]{FFFFFF} \textbf{SE}}} & \multicolumn{1}{c|}{\cellcolor[HTML]{1F4E78}{\color[HTML]{FFFFFF} \textbf{Mean}}} & \multicolumn{1}{c||}{\cellcolor[HTML]{1F4E78}{\color[HTML]{FFFFFF} \textbf{SE}}} & \multicolumn{1}{c|}{\cellcolor[HTML]{1F4E78}{\color[HTML]{FFFFFF} \textbf{Mean}}} & \multicolumn{1}{c|}{\cellcolor[HTML]{1F4E78}{\color[HTML]{FFFFFF} \textbf{SE}}}  \\ 
\multicolumn{1}{l}{}                                       &\multicolumn{1}{l|}{}                                       & \multicolumn{8}{c|}{\textbf{$p=20$}}                                                                                   \\ 
\hline
\rowcolor[HTML]{DAE8FC} 
\cellcolor{white}
\multirow{5}{*}{\textbf{S = 0.010}}
&\textbf{BPUGM} & 0.914 & 0.000 & 0.837 & 0.026 & 0.914 & 0.000 & 0.875 & 0.006 \\ 
  &\textbf{GemBag} 
  & 0.996 & 0.000 & 0.572 & 0.064 & 1.000 & 0.000 & 0.786 & 0.016 \\ 
  &\textbf{FGL}  & 0.983 & 0.001 & 0.603 & 0.096 & 0.986 & 0.001 & 0.794 & 0.023 \\  
  &\textbf{GGL} & 0.982 & 0.001 & 0.665 & 0.083 & 0.984 & 0.001 & 0.825 & 0.020 \\ 
  \hline
\rowcolor[HTML]{DAE8FC} 
\cellcolor{white}
\multirow{5}{*}{\textbf{S = 0.025}}
&\textbf{BPUGM} & 0.899 & 0.000 & 0.431 & 0.007 & 0.915 & 0.000 & 0.673 & 0.002 \\  
  &\textbf{GemBag}
  & 0.972 & 0.000 & 0.182 & 0.007 & 1.000 & 0.000 & 0.591 & 0.002 \\ 
  &\textbf{FGL} & 0.943 & 0.001 & 0.275 & 0.022 & 0.967 & 0.002 & 0.621 & 0.004 \\ 
  &\textbf{GGL} & 0.948 & 0.001 & 0.298 & 0.021 & 0.971 & 0.002 & 0.635 & 0.004 \\ 
  \hline
\rowcolor[HTML]{DAE8FC} 
\cellcolor{white}
\multirow{5}{*}{\textbf{S = 0.050}}
&\textbf{BPUGM} & 0.804 & 0.000 & 0.195 & 0.002 & 0.915 & 0.000 & 0.555 & 0.000 \\  
  &\textbf{GemBag} 
  & 0.851 & 0.000 & 0.033 & 0.000 & 0.999 & 0.000 & 0.516 & 0.000 \\ 
  &\textbf{FGL} & 0.831 & 0.001 & 0.110 & 0.008 & 0.962 & 0.003 & 0.536 & 0.001 \\ 
  &\textbf{GGL} & 0.828 & 0.001 & 0.129 & 0.007 & 0.955 & 0.003 & 0.542 & 0.001 \\ 
  \hline
  \multicolumn{1}{l}{}                                       &\multicolumn{1}{l|}{}                                       & \multicolumn{8}{c|}{\textbf{$p=50$}}                                                                                   \\ 
\hline
\rowcolor[HTML]{DAE8FC} 
\cellcolor{white}
\multirow{5}{*}{\textbf{S = 0.010}}
&\textbf{BPUGM} & 0.948 & 0.000 & 0.790 & 0.051 & 0.949 & 0.000 & 0.869 & 0.013 \\ 
  &\textbf{GemBag}
  & 0.999 & 0.000 & 0.623 & 0.150 & 1.000 & 0.000 & 0.811 & 0.037 \\ 
  &\textbf{FGL} & 0.998 & 0.000 & 0.353 & 0.190 & 0.998 & 0.000 & 0.676 & 0.047 \\  
  &\textbf{GGL} & 0.997 & 0.000 & 0.490 & 0.194 & 0.997 & 0.000 & 0.744 & 0.048 \\ 
  \hline
\rowcolor[HTML]{DAE8FC} 
\cellcolor{white}
\multirow{5}{*}{\textbf{S = 0.0025}}
&\textbf{BPUGM} & 0.948 & 0.000 & 0.596 & 0.018 & 0.949 & 0.000 & 0.773 & 0.004 \\ 
  &\textbf{GemBag}
  & 0.998 & 0.000 & 0.362 & 0.023 & 1.000 & 0.000 & 0.681 & 0.006 \\
  &\textbf{FGL}  & 0.995 & 0.000 & 0.340 & 0.075 & 0.997 & 0.000 & 0.668 & 0.018 \\ 
  &\textbf{GGL} & 0.995 & 0.000 & 0.380 & 0.061 & 0.997 & 0.000 & 0.688 & 0.015 \\
  \hline
\rowcolor[HTML]{DAE8FC} 
\cellcolor{white}
\multirow{5}{*}{\textbf{S = 0.0050}}
&\textbf{BPUGM} & 0.946 & 0.000 & 0.429 & 0.005 & 0.950 & 0.000 & 0.690 & 0.001 \\ 
  &\textbf{GemBag}
  & 0.993 & 0.000 & 0.240 & 0.007 & 1.000 & 0.000 & 0.620 & 0.002 \\ 
  &\textbf{FGL} & 0.988 & 0.000 & 0.140 & 0.015 & 0.995 & 0.000 & 0.568 & 0.003 \\
  &\textbf{GGL} & 0.991 & 0.000 & 0.271 & 0.033 & 0.997 & 0.000 & 0.634 & 0.008 \\
  \hline
  \multicolumn{1}{l}{}                                       &\multicolumn{1}{l|}{}                                       & \multicolumn{8}{c|}{\textbf{$p=100$}}                                                                                   \\ 
\hline
\rowcolor[HTML]{DAE8FC} 
\cellcolor{white}
\multirow{5}{*}{\textbf{S = 0.010}}
&\textbf{BPUGM} & 0.965 & 0.000 & 0.520 & 0.008 & 0.966 & 0.000 & 0.743 & 0.002 \\ 
  &\textbf{GemBag} & 0.999 & 0.000 & 0.378 & 0.013 & 1.000 & 0.000 & 0.689 & 0.003 \\ 
  &\textbf{FGL} & 0.999 & 0.000 & 0.223 & 0.047 & 1.000 & 0.000 & 0.611 & 0.012 \\
  &\textbf{GGL} & 0.999 & 0.000 & 0.277 & 0.043 & 1.000 & 0.000 & 0.638 & 0.011 \\
  \hline
\rowcolor[HTML]{DAE8FC} 
\cellcolor{white}
\multirow{5}{*}{\textbf{S = 0.0025}}
&\textbf{BPUGM} & 0.965 & 0.000 & 0.546 & 0.004 & 0.966 & 0.000 & 0.756 & 0.001 \\
  &\textbf{GemBag} & 0.998 & 0.000 & 0.368 & 0.006 & 1.000 & 0.000 & 0.684 & 0.002 \\
  &\textbf{FGL} & 0.997 & 0.000 & 0.285 & 0.031 & 0.999 & 0.000 & 0.642 & 0.008 \\  
  &\textbf{GGL} & 0.997 & 0.000 & 0.374 & 0.024 & 0.999 & 0.000 & 0.687 & 0.006 \\
  \hline
\rowcolor[HTML]{DAE8FC} 
\cellcolor{white}
\multirow{5}{*}{\textbf{S = 0.0050}}
&\textbf{BPUGM} & 0.959 & 0.000 & 0.235 & 0.001 & 0.966 & 0.000 & 0.600 & 0.000 \\
  &\textbf{GemBag} & 0.991 & 0.000 & 0.123 & 0.000 & 1.000 & 0.000 & 0.561 & 0.000 \\ 
  &\textbf{FGL} & 0.989 & 0.000 & 0.100 & 0.004 & 0.998 & 0.000 & 0.549 & 0.001 \\
  &\textbf{GGL} & 0.990 & 0.000 & 0.127 & 0.003 & 0.998 & 0.000 & 0.563 & 0.001 \\
  \hline
\end{tabular}}
\end{table}
   
\begin{table}[ht]
\centering
\caption{Accuracy, sensitivity, specificity, balanced accuracy and AUC over  \textbf{100} datasets, K=4, N=50, Scenario 2: $\{G(0)=G(1)\} \neq \{G(2)=G(3)\}$}
\scalebox{.85}{
\begin{tabular}{|c|c|c|c|c|c|c|c|c|c|}
\multicolumn{1}{l}{}                                       &\multicolumn{1}{l|}{}                                       & \multicolumn{2}{c|}{\cellcolor[HTML]{1F4E78}{\color[HTML]{FFFFFF} \textbf{Accuracy}}}                                                                               & \multicolumn{2}{c|}{\cellcolor[HTML]{1F4E78}{\color[HTML]{FFFFFF} \textbf{Sensitivity}}}                                                                      & \multicolumn{2}{c||}{\cellcolor[HTML]{1F4E78}{\color[HTML]{FFFFFF} \textbf{Specificity}}}                                                                     & \multicolumn{2}{c|}{\cellcolor[HTML]{1F4E78}{\color[HTML]{FFFFFF} \textbf{AUC}}}                                                                      \\ 
\multicolumn{1}{l}{}                                       & \multicolumn{1}{l|}{}                                       & \multicolumn{1}{c|}{\cellcolor[HTML]{1F4E78}{\color[HTML]{FFFFFF} \textbf{Mean}}} & \multicolumn{1}{c|}{\cellcolor[HTML]{1F4E78}{\color[HTML]{FFFFFF} \textbf{SE}}} & \multicolumn{1}{c|}{\cellcolor[HTML]{1F4E78}{\color[HTML]{FFFFFF} \textbf{Mean}}} & \multicolumn{1}{c|}{\cellcolor[HTML]{1F4E78}{\color[HTML]{FFFFFF} \textbf{SE}}} & \multicolumn{1}{c|}{\cellcolor[HTML]{1F4E78}{\color[HTML]{FFFFFF} \textbf{Mean}}} & \multicolumn{1}{c||}{\cellcolor[HTML]{1F4E78}{\color[HTML]{FFFFFF} \textbf{SE}}} & \multicolumn{1}{c|}{\cellcolor[HTML]{1F4E78}{\color[HTML]{FFFFFF} \textbf{Mean}}} & \multicolumn{1}{c|}{\cellcolor[HTML]{1F4E78}{\color[HTML]{FFFFFF} \textbf{SE}}}  \\ 
\multicolumn{1}{l}{}                                       &\multicolumn{1}{l|}{}                                       & \multicolumn{8}{c|}{\textbf{$p=20$}}                                                                                   \\ 
\hline
\rowcolor[HTML]{DAE8FC} 
\cellcolor{white}
\multirow{5}{*}{\textbf{S = 0.010}}
&\textbf{BPUGM} & 0.913 & 0.000 & 0.837 & 0.025 & 0.914 & 0.000 & 0.875 & 0.006 \\ 
  &\textbf{GemBag}
  & 0.996 & 0.000 & 0.567 & 0.052 & 1.000 & 0.000 & 0.783 & 0.013 \\ 
  &\textbf{FGL} & 0.974 & 0.002 & 0.708 & 0.069 & 0.976 & 0.002 & 0.842 & 0.016 \\ 
  &\textbf{GGL} & 0.968 & 0.002 & 0.708 & 0.084 & 0.970 & 0.002 & 0.839 & 0.019 \\ 
  \hline
\rowcolor[HTML]{DAE8FC} 
\cellcolor{white}
\multirow{5}{*}{\textbf{S = 0.025}}
&\textbf{BPUGM} & 0.894 & 0.000 & 0.399 & 0.006 & 0.914 & 0.000 & 0.657 & 0.001 \\  
  &\textbf{GemBag}
  & 0.967 & 0.000 & 0.174 & 0.006 & 1.000 & 0.000 & 0.587 & 0.002 \\ 
  &\textbf{FGL} & 0.944 & 0.001 & 0.254 & 0.023 & 0.972 & 0.002 & 0.613 & 0.004 \\ 
  &\textbf{GGL} & 0.946 & 0.001 & 0.266 & 0.017 & 0.974 & 0.002 & 0.620 & 0.003 \\ 
  \hline
\rowcolor[HTML]{DAE8FC} 
\cellcolor{white}
\multirow{5}{*}{\textbf{S = 0.050}}
&\textbf{BPUGM} & 0.804 & 0.000 & 0.213 & 0.001 & 0.915 & 0.000 & 0.564 & 0.000 \\ 
  &\textbf{GemBag}
  & 0.849 & 0.000 & 0.048 & 0.000 & 0.999 & 0.000 & 0.524 & 0.000 \\ 
  &\textbf{FGL} & 0.845 & 0.000 & 0.083 & 0.005 & 0.987 & 0.001 & 0.535 & 0.001 \\ 
  &\textbf{GGL} & 0.837 & 0.001 & 0.097 & 0.006 & 0.976 & 0.001 & 0.537 & 0.001 \\
  \hline
  \multicolumn{1}{l}{}                                       &\multicolumn{1}{l|}{}                                       & \multicolumn{8}{c|}{\textbf{$p=50$}}                                                                                   \\ 
\hline
\rowcolor[HTML]{DAE8FC} 
\cellcolor{white}
\multirow{5}{*}{\textbf{S = 0.0010}}
&\textbf{BPUGM} & 0.948 & 0.000 & 0.815 & 0.069 & 0.948 & 0.000 & 0.881 & 0.017 \\ 
  &\textbf{GemBag}
  & 0.999 & 0.000 & 0.505 & 0.220 & 1.000 & 0.000 & 0.752 & 0.055 \\
  &\textbf{FGL} & 0.998 & 0.000 & 0.225 & 0.158 & 0.998 & 0.000 & 0.612 & 0.039 \\ 
  &\textbf{GGL} & 0.995 & 0.000 & 0.415 & 0.197 & 0.995 & 0.000 & 0.705 & 0.049 \\ 
  \hline
\rowcolor[HTML]{DAE8FC} 
\cellcolor{white}
\multirow{5}{*}{\textbf{S = 0.0025}}
&\textbf{BPUGM} & 0.948 & 0.000 & 0.601 & 0.014 & 0.949 & 0.000 & 0.775 & 0.004 \\ 
  &\textbf{GemBag}
  & 0.998 & 0.000 & 0.383 & 0.017 & 1.000 & 0.000 & 0.691 & 0.004 \\
  &\textbf{FGL} & 0.996 & 0.000 & 0.531 & 0.099 & 0.998 & 0.000 & 0.765 & 0.025 \\ 
  &\textbf{GGL} & 0.995 & 0.000 & 0.446 & 0.079 & 0.997 & 0.000 & 0.722 & 0.020 \\
  \hline
\rowcolor[HTML]{DAE8FC} 
\cellcolor{white}
\multirow{5}{*}{\textbf{S = 0.0050}}
&\textbf{BPUGM} & 0.946 & 0.000 & 0.429 & 0.005 & 0.950 & 0.000 & 0.690 & 0.001 \\ 
  &\textbf{GemBag}
  & 0.993 & 0.000 & 0.240 & 0.007 & 1.000 & 0.000 & 0.620 & 0.002 \\ 
  &\textbf{FGL}  & 0.989 & 0.000 & 0.136 & 0.014 & 0.996 & 0.000 & 0.566 & 0.003 \\ 
  &\textbf{GGL} & 0.992 & 0.000 & 0.270 & 0.034 & 0.998 & 0.000 & 0.634 & 0.008 \\ 
  \hline
  \multicolumn{1}{l}{}                                       &\multicolumn{1}{l|}{}                                       & \multicolumn{8}{c|}{\textbf{$p=100$}}                                                                                   \\ 
\hline
\rowcolor[HTML]{DAE8FC} 
\cellcolor{white}
\multirow{5}{*}{\textbf{S = 0.010}}
&\textbf{BPUGM} & 0.965 & 0.000 & 0.549 & 0.008 & 0.965 & 0.000 & 0.757 & 0.002 \\
  &\textbf{GemBag} & 0.999 & 0.000 & 0.415 & 0.014 & 1.000 & 0.000 & 0.707 & 0.003 \\ 
  &\textbf{FGL} & 0.999 & 0.000 & 0.528 & 0.038 & 0.999 & 0.000 & 0.764 & 0.009 \\ 
  &\textbf{GGL} & 0.999 & 0.000 & 0.507 & 0.040 & 0.999 & 0.000 & 0.753 & 0.010 \\ 
  \hline
\rowcolor[HTML]{DAE8FC} 
\cellcolor{white}
\multirow{5}{*}{\textbf{S = 0.0025}}
&\textbf{BPUGM} & 0.965 & 0.000 & 0.594 & 0.004 & 0.965 & 0.000 & 0.780 & 0.001 \\
  &\textbf{GemBag} & 0.998 & 0.000 & 0.412 & 0.008 & 1.000 & 0.000 & 0.706 & 0.002 \\
  &\textbf{FGL} & 0.998 & 0.000 & 0.423 & 0.044 & 0.999 & 0.000 & 0.711 & 0.011 \\  
  &\textbf{GGL} & 0.998 & 0.000 & 0.413 & 0.045 & 0.999 & 0.000 & 0.706 & 0.011 \\ 
  \hline
\rowcolor[HTML]{DAE8FC} 
\cellcolor{white}
\multirow{5}{*}{\textbf{S = 0.0050}}
&\textbf{BPUGM} & 0.959 & 0.000 & 0.269 & 0.001 & 0.966 & 0.000 & 0.617 & 0.000 \\
  &\textbf{GemBag} & 0.991 & 0.000 & 0.147 & 0.001 & 1.000 & 0.000 & 0.573 & 0.000 \\ 
  &\textbf{FGL} & 0.990 & 0.000 & 0.240 & 0.012 & 0.997 & 0.000 & 0.619 & 0.003 \\
  &\textbf{GGL} & 0.989 & 0.000 & 0.186 & 0.005 & 0.997 & 0.000 & 0.592 & 0.001 \\
  \hline
\end{tabular}}
\end{table}

\begin{table}[ht]
\centering
\caption{Accuracy, sensitivity, specificity, balanced accuracy and AUC over  \textbf{100} datasets, K=4, N=50, Scenario 3: $\{G(0)=G(1)=G(2)\} \neq \{G(3)\}$}
\scalebox{.85}{
\begin{tabular}{|c|c|c|c|c|c|c|c|c|c|}
\multicolumn{1}{l}{}                                       &\multicolumn{1}{l|}{}                                       & \multicolumn{2}{c|}{\cellcolor[HTML]{1F4E78}{\color[HTML]{FFFFFF} \textbf{Accuracy}}}                                                                               & \multicolumn{2}{c|}{\cellcolor[HTML]{1F4E78}{\color[HTML]{FFFFFF} \textbf{Sensitivity}}}                                                                      & \multicolumn{2}{c||}{\cellcolor[HTML]{1F4E78}{\color[HTML]{FFFFFF} \textbf{Specificity}}}                                                                     & \multicolumn{2}{c|}{\cellcolor[HTML]{1F4E78}{\color[HTML]{FFFFFF} \textbf{AUC}}}                                                                      \\ 
\multicolumn{1}{l}{}                                       & \multicolumn{1}{l|}{}                                       & \multicolumn{1}{c|}{\cellcolor[HTML]{1F4E78}{\color[HTML]{FFFFFF} \textbf{Mean}}} & \multicolumn{1}{c|}{\cellcolor[HTML]{1F4E78}{\color[HTML]{FFFFFF} \textbf{SE}}} & \multicolumn{1}{c|}{\cellcolor[HTML]{1F4E78}{\color[HTML]{FFFFFF} \textbf{Mean}}} & \multicolumn{1}{c|}{\cellcolor[HTML]{1F4E78}{\color[HTML]{FFFFFF} \textbf{SE}}} & \multicolumn{1}{c|}{\cellcolor[HTML]{1F4E78}{\color[HTML]{FFFFFF} \textbf{Mean}}} & \multicolumn{1}{c||}{\cellcolor[HTML]{1F4E78}{\color[HTML]{FFFFFF} \textbf{SE}}} & \multicolumn{1}{c|}{\cellcolor[HTML]{1F4E78}{\color[HTML]{FFFFFF} \textbf{Mean}}} & \multicolumn{1}{c|}{\cellcolor[HTML]{1F4E78}{\color[HTML]{FFFFFF} \textbf{SE}}}  \\ 
\multicolumn{1}{l}{}                                       &\multicolumn{1}{l|}{}                                       & \multicolumn{8}{c|}{\textbf{$p=20$}}  \\
\hline
\rowcolor[HTML]{DAE8FC} 
\cellcolor{white}
\multirow{5}{*}{\textbf{S = 0.010}}
&\textbf{BPUGM} & 0.913 & 0.000 & 0.843 & 0.039 & 0.913 & 0.000 & 0.878 & 0.010 \\ 
  &\textbf{GemBag}
  & 0.998 & 0.000 & 0.677 & 0.091 & 1.000 & 0.000 & 0.839 & 0.023 \\ 
  &\textbf{FGL} & 0.990 & 0.000 & 0.698 & 0.167 & 0.991 & 0.000 & 0.844 & 0.042 \\ 
  &\textbf{GGL} & 0.985 & 0.001 & 0.703 & 0.164 & 0.986 & 0.001 & 0.844 & 0.040 \\ 
  \hline
\rowcolor[HTML]{DAE8FC} 
\cellcolor{white}
\multirow{5}{*}{\textbf{S = 0.025}}
&\textbf{BPUGM} & 0.892 & 0.000 & 0.297 & 0.005 & 0.915 & 0.000 & 0.606 & 0.001 \\
  &\textbf{GemBag}
  & 0.967 & 0.000 & 0.139 & 0.004 & 0.999 & 0.000 & 0.569 & 0.001 \\ 
  &\textbf{FGL}  & 0.947 & 0.001 & 0.200 & 0.011 & 0.976 & 0.001 & 0.588 & 0.002 \\ 
  &\textbf{GGL} & 0.948 & 0.001 & 0.198 & 0.011 & 0.978 & 0.001 & 0.588 & 0.002 \\ 
  \hline
\rowcolor[HTML]{DAE8FC} 
\cellcolor{white}
\multirow{5}{*}{\textbf{S = 0.050}}
&\textbf{BPUGM} & 0.797 & 0.000 & 0.156 & 0.001 & 0.914 & 0.000 & 0.535 & 0.000 \\  
  &\textbf{GemBag}
  & 0.849 & 0.000 & 0.023 & 0.000 & 1.000 & 0.000 & 0.511 & 0.000 \\ 
  &\textbf{FGL} & 0.832 & 0.001 & 0.076 & 0.005 & 0.970 & 0.002 & 0.523 & 0.000 \\ 
  &\textbf{GGL} & 0.826 & 0.001 & 0.085 & 0.006 & 0.960 & 0.002 & 0.523 & 0.000 \\ 
  \hline
  \multicolumn{1}{l}{}                                       &\multicolumn{1}{l|}{}                                       & \multicolumn{8}{c|}{\textbf{$p=50$}}                                                                                   \\ 
\hline
\rowcolor[HTML]{DAE8FC} 
\cellcolor{white}
\multirow{5}{*}{\textbf{S = 0.0010}}
&\textbf{BPUGM} & 0.948 & 0.000 & 0.740 & 0.194 & 0.948 & 0.000 & 0.844 & 0.048 \\ 
  &\textbf{GemBag}
  & 0.999 & 0.000 & 0.260 & 0.194 & 1.000 & 0.000 & 0.630 & 0.049 \\
  &\textbf{FGL} & 0.999 & 0.000 & 0.040 & 0.039 & 1.000 & 0.000 & 0.520 & 0.010 \\ 
  &\textbf{GGL} & 0.997 & 0.000 & 0.140 & 0.122 & 0.997 & 0.000 & 0.569 & 0.030 \\
  \hline
\rowcolor[HTML]{DAE8FC} 
\cellcolor{white}
\multirow{5}{*}{\textbf{S = 0.0025}}
&\textbf{BPUGM} & 0.948 & 0.000 & 0.626 & 0.017 & 0.949 & 0.000 & 0.788 & 0.004 \\
  &\textbf{GemBag}
  & 0.998 & 0.000 & 0.394 & 0.023 & 1.000 & 0.000 & 0.697 & 0.006 \\
  &\textbf{FGL} & 0.997 & 0.000 & 0.440 & 0.088 & 0.998 & 0.000 & 0.719 & 0.022 \\
  &\textbf{GGL} & 0.995 & 0.000 & 0.438 & 0.074 & 0.997 & 0.000 & 0.717 & 0.018 \\
  \hline
\rowcolor[HTML]{DAE8FC} 
\cellcolor{white}
\multirow{5}{*}{\textbf{S = 0.0050}}
&\textbf{BPUGM} & 0.946 & 0.000 & 0.429 & 0.005 & 0.950 & 0.000 & 0.690 & 0.001 \\ 
  &\textbf{GemBag}
  & 0.993 & 0.000 & 0.240 & 0.007 & 1.000 & 0.000 & 0.620 & 0.002 \\
  &\textbf{FGL} & 0.989 & 0.000 & 0.136 & 0.014 & 0.996 & 0.000 & 0.566 & 0.003 \\
  &\textbf{GGL} & 0.992 & 0.000 & 0.270 & 0.034 & 0.998 & 0.000 & 0.634 & 0.008 \\ 
  \hline
  \multicolumn{1}{l}{}                                       &\multicolumn{1}{l|}{}                                       & \multicolumn{8}{c|}{\textbf{$p=100$}}                                                                                   \\ 
\hline
\rowcolor[HTML]{DAE8FC} 
\cellcolor{white}
\multirow{5}{*}{\textbf{S = 0.010}}
&\textbf{BPUGM} & 0.965 & 0.000 & 0.515 & 0.008 & 0.965 & 0.000 & 0.740 & 0.002 \\
  &\textbf{GemBag} & 0.999 & 0.000 & 0.397 & 0.013 & 1.000 & 0.000 & 0.698 & 0.003 \\  
  &\textbf{FGL} & 0.999 & 0.000 & 0.485 & 0.039 & 0.999 & 0.000 & 0.742 & 0.010 \\
  &\textbf{GGL} & 0.998 & 0.000 & 0.467 & 0.037 & 0.999 & 0.000 & 0.733 & 0.009 \\ 
  \hline
\rowcolor[HTML]{DAE8FC} 
\cellcolor{white}
\multirow{5}{*}{\textbf{S = 0.0025}}
&\textbf{BPUGM} & 0.965 & 0.000 & 0.545 & 0.005 & 0.966 & 0.000 & 0.755 & 0.001 \\
  &\textbf{GemBag} & 0.998 & 0.000 & 0.393 & 0.008 & 1.000 & 0.000 & 0.696 & 0.002 \\
  &\textbf{FGL} & 0.998 & 0.000 & 0.419 & 0.064 & 0.999 & 0.000 & 0.709 & 0.016 \\ 
  &\textbf{GGL} & 0.998 & 0.000 & 0.405 & 0.049 & 0.999 & 0.000 & 0.702 & 0.012 \\ 
  \hline
\rowcolor[HTML]{DAE8FC} 
\cellcolor{white}
\multirow{5}{*}{\textbf{S = 0.0050}}
&\textbf{BPUGM} & 0.963 & 0.000 & 0.492 & 0.001 & 0.966 & 0.000 & 0.729 & 0.000 \\
  &\textbf{GemBag} & 0.996 & 0.000 & 0.354 & 0.003 & 1.000 & 0.000 & 0.677 & 0.001 \\ 
  &\textbf{FGL} & 0.992 & 0.000 & 0.493 & 0.046 & 0.995 & 0.000 & 0.744 & 0.011 \\
  &\textbf{GGL} & 0.994 & 0.000 & 0.525 & 0.012 & 0.997 & 0.000 & 0.761 & 0.003 \\
  \hline
\end{tabular}}
\end{table}

\begin{table}[ht]
\centering
\caption{Accuracy, sensitivity, specificity, balanced accuracy and AUC over  \textbf{100} datasets, K=4, N=50, Scenario 4: Same graph}
\scalebox{.85}{
\begin{tabular}{|c|c|c|c|c|c|c|c|c|c|}
\multicolumn{1}{l}{}                                       &\multicolumn{1}{l|}{}                                       & \multicolumn{2}{c|}{\cellcolor[HTML]{1F4E78}{\color[HTML]{FFFFFF} \textbf{Accuracy}}}                                                                               & \multicolumn{2}{c|}{\cellcolor[HTML]{1F4E78}{\color[HTML]{FFFFFF} \textbf{Sensitivity}}}                                                                      & \multicolumn{2}{c||}{\cellcolor[HTML]{1F4E78}{\color[HTML]{FFFFFF} \textbf{Specificity}}}                                                                     & \multicolumn{2}{c|}{\cellcolor[HTML]{1F4E78}{\color[HTML]{FFFFFF} \textbf{AUC}}}                                                                      \\ 
\multicolumn{1}{l}{}                                       & \multicolumn{1}{l|}{}                                       & \multicolumn{1}{c|}{\cellcolor[HTML]{1F4E78}{\color[HTML]{FFFFFF} \textbf{Mean}}} & \multicolumn{1}{c|}{\cellcolor[HTML]{1F4E78}{\color[HTML]{FFFFFF} \textbf{SE}}} & \multicolumn{1}{c|}{\cellcolor[HTML]{1F4E78}{\color[HTML]{FFFFFF} \textbf{Mean}}} & \multicolumn{1}{c|}{\cellcolor[HTML]{1F4E78}{\color[HTML]{FFFFFF} \textbf{SE}}} & \multicolumn{1}{c|}{\cellcolor[HTML]{1F4E78}{\color[HTML]{FFFFFF} \textbf{Mean}}} & \multicolumn{1}{c||}{\cellcolor[HTML]{1F4E78}{\color[HTML]{FFFFFF} \textbf{SE}}} & \multicolumn{1}{c|}{\cellcolor[HTML]{1F4E78}{\color[HTML]{FFFFFF} \textbf{Mean}}} & \multicolumn{1}{c|}{\cellcolor[HTML]{1F4E78}{\color[HTML]{FFFFFF} \textbf{SE}}}  \\ 
\multicolumn{1}{l}{}                                       &\multicolumn{1}{l|}{}                                       & \multicolumn{8}{c|}{\textbf{$p=20$}}  \\
\hline
\rowcolor[HTML]{DAE8FC} 
\cellcolor{white}
\multirow{5}{*}{\textbf{S = 0.010}}
&\textbf{BPUGM} & 0.915 & 0.000 & 0.840 & 0.020 & 0.915 & 0.000 & 0.878 & 0.005 \\ 
  &\textbf{GemBag}
  & 0.996 & 0.000 & 0.642 & 0.052 & 1.000 & 0.000 & 0.821 & 0.013 \\ 
  &\textbf{FGL}   & 0.966 & 0.001 & 0.575 & 0.092 & 0.970 & 0.001 & 0.773 & 0.020 \\ 
  &\textbf{GGL} & 0.977 & 0.001 & 0.772 & 0.078 & 0.980 & 0.001 & 0.876 & 0.019 \\ 
  \hline
\rowcolor[HTML]{DAE8FC} 
\cellcolor{white}
\multirow{5}{*}{\textbf{S = 0.025}}
&\textbf{BPUGM} & 0.898 & 0.000 & 0.502 & 0.007 & 0.915 & 0.000 & 0.709 & 0.002 \\ 
  &\textbf{GemBag}
  & 0.969 & 0.000 & 0.264 & 0.008 & 1.000 & 0.000 & 0.632 & 0.002 \\ 
  &\textbf{FGL} & 0.907 & 0.003 & 0.420 & 0.026 & 0.928 & 0.003 & 0.674 & 0.004 \\ 
  &\textbf{GGL} & 0.923 & 0.003 & 0.470 & 0.023 & 0.943 & 0.004 & 0.706 & 0.005 \\ 
  \hline
\rowcolor[HTML]{DAE8FC} 
\cellcolor{white}
\multirow{5}{*}{\textbf{S = 0.050}}
&\textbf{BPUGM} & 0.815 & 0.000 & 0.286 & 0.002 & 0.919 & 0.000 & 0.602 & 0.000 \\ 
  &\textbf{GemBag}
  & 0.852 & 0.000 & 0.095 & 0.001 & 1.000 & 0.000 & 0.547 & 0.000 \\ 
  &\textbf{FGL}  & 0.792 & 0.001 & 0.319 & 0.009 & 0.884 & 0.002 & 0.602 & 0.001 \\ 
  &\textbf{GGL} & 0.799 & 0.001 & 0.312 & 0.009 & 0.894 & 0.003 & 0.603 & 0.001 \\ 
  \hline
  \multicolumn{1}{l}{}                                       &\multicolumn{1}{l|}{}                                       & \multicolumn{8}{c|}{\textbf{$p=50$}}                                                                                   \\ 
\hline
\rowcolor[HTML]{DAE8FC} 
\cellcolor{white}
\multirow{5}{*}{\textbf{S = 0.0010}}
&\textbf{BPUGM} & 0.950 & 0.000 & 0.812 & 0.038 & 0.950 & 0.000 & 0.881 & 0.009 \\  
  &\textbf{GemBag}
  & 0.999 & 0.000 & 0.693 & 0.089 & 1.000 & 0.000 & 0.846 & 0.022 \\
  &\textbf{FGL} & 0.998 & 0.000 & 0.193 & 0.057 & 0.999 & 0.000 & 0.596 & 0.014 \\ 
  &\textbf{GGL} & 0.999 & 0.000 & 0.590 & 0.212 & 0.999 & 0.000 & 0.795 & 0.053 \\
  \hline
\rowcolor[HTML]{DAE8FC} 
\cellcolor{white}
\multirow{5}{*}{\textbf{S = 0.0025}}
&\textbf{BPUGM} & 0.949 & 0.000 & 0.603 & 0.015 & 0.950 & 0.000 & 0.777 & 0.004 \\ 
  &\textbf{GemBag}
  & 0.998 & 0.000 & 0.384 & 0.018 & 1.000 & 0.000 & 0.692 & 0.004 \\ 
  &\textbf{FGL} & 0.994 & 0.000 & 0.193 & 0.032 & 0.997 & 0.000 & 0.595 & 0.008 \\ 
  &\textbf{GGL} & 0.997 & 0.000 & 0.390 & 0.081 & 0.999 & 0.000 & 0.694 & 0.020 \\ 
  \hline
\rowcolor[HTML]{DAE8FC} 
\cellcolor{white}
\multirow{5}{*}{\textbf{S = 0.0050}}
&\textbf{BPUGM} & 0.946 & 0.000 & 0.429 & 0.005 & 0.950 & 0.000 & 0.690 & 0.001 \\
  &\textbf{GemBag}
  & 0.993 & 0.000 & 0.240 & 0.007 & 1.000 & 0.000 & 0.620 & 0.002 \\
  &\textbf{FGL} & 0.988 & 0.000 & 0.143 & 0.015 & 0.995 & 0.000 & 0.569 & 0.003 \\
  &\textbf{GGL} & 0.992 & 0.000 & 0.275 & 0.033 & 0.997 & 0.000 & 0.636 & 0.008 \\ 
  \hline
  \multicolumn{1}{l}{}                                       &\multicolumn{1}{l|}{}                                       & \multicolumn{8}{c|}{\textbf{$p=100$}}                                                                                   \\ 
\hline
\rowcolor[HTML]{DAE8FC} 
\cellcolor{white}
\multirow{5}{*}{\textbf{S = 0.010}}
&\textbf{BPUGM} & 0.966 & 0.000 & 0.609 & 0.010 & 0.966 & 0.000 & 0.787 & 0.002 \\ 
  &\textbf{GemBag} & 0.999 & 0.000 & 0.466 & 0.012 & 1.000 & 0.000 & 0.733 & 0.003 \\  
  &\textbf{FGL} & 0.998 & 0.000 & 0.127 & 0.015 & 0.999 & 0.000 & 0.563 & 0.004 \\
  &\textbf{GGL} & 0.999 & 0.000 & 0.444 & 0.069 & 1.000 & 0.000 & 0.722 & 0.017 \\
  \hline
\rowcolor[HTML]{DAE8FC} 
\cellcolor{white}
\multirow{5}{*}{\textbf{S = 0.0025}}
&\textbf{BPUGM} & 0.965 & 0.000 & 0.591 & 0.003 & 0.966 & 0.000 & 0.779 & 0.001 \\
  &\textbf{GemBag} & 0.998 & 0.000 & 0.432 & 0.005 & 1.000 & 0.000 & 0.716 & 0.001 \\
  &\textbf{FGL} & 0.995 & 0.000 & 0.221 & 0.019 & 0.997 & 0.000 & 0.609 & 0.005 \\ 
  &\textbf{GGL} & 0.997 & 0.000 & 0.530 & 0.046 & 0.999 & 0.000 & 0.764 & 0.012 \\
  \hline
\rowcolor[HTML]{DAE8FC} 
\cellcolor{white}
\multirow{5}{*}{\textbf{S = 0.0050}}
&\textbf{BPUGM} & 0.960 & 0.000 & 0.313 & 0.001 & 0.966 & 0.000 & 0.640 & 0.000 \\ 
  &\textbf{GemBag} & 0.991 & 0.000 & 0.190 & 0.001 & 1.000 & 0.000 & 0.595 & 0.000 \\
  &\textbf{FGL} & 0.986 & 0.000 & 0.146 & 0.003 & 0.995 & 0.000 & 0.571 & 0.001 \\
  &\textbf{GGL} & 0.990 & 0.000 & 0.264 & 0.009 & 0.998 & 0.000 & 0.631 & 0.002 \\ 
  \hline
\end{tabular}}
\end{table}

\end{document}